\newtheorem{theorem}{Theorem}[section]
\newtheorem{proposition}[theorem]{Proposition}
\newtheorem{corollary}[theorem]{Corollary}
\newtheorem{lemma}[theorem]{Lemma}
\newtheorem{remark}[theorem]{Remark}
\newtheorem{definition}[theorem]{Definition}
\newtheorem{Problem}[theorem]{Problem}
\newenvironment{Remark}{\begin{remark}\begin{em}}{\end{em}\end{remark}}
\newenvironment{problem}{\begin{Problem}\begin{em}}{\end{em}\end{Problem}}
\DeclareMathOperator{\A}{\mathcal{A}}
\DeclareMathOperator{\Sym}{\text{Sym}}
\DeclareMathOperator{\Sp}{\text{Sym}^{\geq 0}}
\DeclareMathOperator{\Spp}{\text{Sym}^{>0}}
\DeclareMathOperator{\gyr}{gyr}
\def\R{{\mathbb R}}
\def\M{\mathcal{M}}
\def\a{\mathbf{a}}
\def\b{\mathbf{b}}
\def\c{\mathbf{c}}
\def\x{\mathbf{x}}
\def\y{\mathbf{y}}
\def\v{\mathbf{v}}
\def\u{\mathbf{u}}
\def\w{\mathbf{w}}
\def\oa{\overrightarrow}
\def\p{\frak{p}}
\def\la{\langle}
\def\ra{\rangle}
\address{Department of Mathematics, Louisiana State University,
Baton Rouge, LA70803, USA} \email{lawson@math.lsu.edu}
\begin{document}
\title[Einstein Velocity and Gyrogroups]{Special Relativity, Einstein Velocity Addition, and Gyrogroups:  An Introduction}
\author[Lawson]{Jimmie Lawson}

\maketitle
\begin{abstract}
In these notes we give an  introductory unified treatment to the topics of special relativity, Lorentz transformations and 
the Lorentz group, Einstein velocitiy addition, and gyrogroups and gyrovector spaces.   An effort has been made to
present the material in a manner that is accessible to non-specialists and graduate students, and may even serve
as the basis for a graduate course or seminar.
\end{abstract}

\bigskip
The material for this article had its origin in a graduate seminar taught by the author as a first introduction
to the mathematics of special relativity with a particular focus on Einstein velocity addition and its encoding
in the language and structure of gyrogroups and gyrovector spaces. The favorable reception of this material
encouraged the author to make this material and its shortened and simplified presentation more widely 
available.  Most of of the following material can be found in much greater detail and depth in A.\ A.\ Ungar's
monograph \cite{Ung}. A somewhat variant, but overlapping, approach can be found in  Chapter 1 of 
Y.\ Friedman's monograph \cite{Fri}.  It is the hope of the author, however, that what follows might be
more suitable for a first look at the material or for an introductory seminar.

\section{Introduction to Special Relativity}  
A common understanding among the ancients of physical dynamics was that 
objects near the earth left to themselves would move themselves as close to the center of the earth
as possible.  Heavenly objects, on the other hand, were perfectly formed objects that would move in
perfect circles around the earth.  Close observation, however, revealed that this was not true for the
planets, so Ptolemy used orbits described by epicycles (paths obtained from a 
circular motion around a center, which is also moving in a circular motion) and eccentric circles (with the
earth not at the center) to model the movement of heavenly bodies, in particular the planets.

\subsection{The Principle of Inertia} Such ideas persisted until the time of Galileo.  
His experiments with objects rolling down ramps
and other physical and mental experiments led him to the conclusion that objects free from external
influence would either remain at rest or move in a straight line at a constant speed.  This is 
sometimes known as Galileo's Principle of Inertia and was popularized as Newton's first law.  
We recall from the vector geometry of $\R^3$ that an object moving in a straight line at constant
speed, or equivalently moving at some constant velocity $\v\in\R^3$, 
has a parametric description of the form $\x(t)=\x_0+t\v $ (the solution of $\dot \x(t)=\v$, $\x(0)=\x_0$),  where 
we may view $t$ as the time parameter.  Such motion is called \emph{rectilinear motion}
(or \emph{linear motion}).

\subsection{Galilean relativity} Further, Galileo and Newton assumed a principle of relativity for dynamics: 
all observers in a system at rest or moving
with a constant velocity will encounter the same laws of dynamics (the physics of objects
or masses and their movements).  For example, a tennis player would experience no difference playing
tennis on land or playing in the depths of a large ocean liner sailing on a smooth sea at a constant
velocity.

More formally, we can define an \emph{inertial frame}, \emph{reference frame}, or simply \emph{frame}
as a coordinatization of space-time $\R\times \R^3$ in which rectilinear motion has the 
description $\x(t)=\x_0+t\v$ for $t\in\R$ and $\x_0,\v\in\R^3$.  We alternatively say 
that Newton's first law of motion, i.e., Galileo's Principle of Inertia, holds.  Suppose that
the coordinates of a frame $S'$ can be computed from those of $S$ by
a \emph{basic Galilean transformation}:
\begin{eqnarray}\label{E1.1}
\x'=\x-t\v \mbox{ and } t'=t,
\end{eqnarray}
where $\x\in\R^3$, $t\in\R$ are the coordinates in the first frame $S$ and $\x'\in\R^3$, $t'\in \R$ are coordinates in the 
second $S'$.  Then we say that the inertial frame $S'$ is moving away from the frame $S$ at the (constant) velocity
$\v\in\R^3$. Note that the two coordinate systems agree at time $t=0$.

\begin{problem}
Find the Galilean transformation if a second  inertial frame $S'$ is moving away from the first $S$ at velocity $\v$
and the second one has coordinate $\x_0$ for the origin of the first at time $t=0$.
\end{problem}

\begin{problem}
Show that linear motion is preserved by basic Galilean transformations, and hence one of the coordinate systems is an
inertial frame iff the other one is.
\end{problem}

\begin{problem}  
(i) Suppose $S'$ is computed from $S$ by the Galilean transformation $\x'=\x-t\v $, $t'=t$, and that
an object is moving with constant velocity $\mathbf{w}$ with respect to the frame $S'$.  Find its velocity with respect to $S$. \\
(ii) Suppose $S'$  is moving away from $S$ at constant velocity $\v$ and $S''$ is moving away from $S'$ at constant
velocity $\mathbf{w}$.  How fast is $S''$ moving away from $S$?
\end{problem}

\begin{problem}
Show that the inverse of a basic Galilean transformation is a Galilean transformation and that the composition of two is another.
\end{problem}

\begin{problem}
Show that a basic Galilean transformation is a linear map from $\R^3\oplus\R\approx \R^4$ to itself.
\end{problem}

\begin{problem}
(i) Show for a basic Galilean transformation that $\x_2'-\x_1'=\x_2-\x_1$ (more precisely, this means that if the transformation carries
$(\x_1,t)$ to $(\x_1',t')$ and $(\x_2,t)$ to $(\x_2',t')$, then the asserted equality holds).
(ii) Think about the previous result long enough for it to be intuitively obvious 
(but you may quit in 24 hours).\\
(iii) Formulate a version of (i) for two frames $S'$ and $S''$ moving away from a fixed frame $S$
at velocities $\v'$ and $\v''$ resp. 
\end{problem}

The next problem illustrates how the quantities in the Newtonian law of motion $F=ma$ are unaltered by
Galilean transformations.

\begin{problem}
Consider two point masses in space of mass $m_1$ and $m_2$ 
and suppose that the force between them depends only on their separation:
\begin{eqnarray}
F(\x_1,\x_2)= f(\x_1-\x_2)
\end{eqnarray}
when mass $m_1$ is at $\x_1\in\R^3$ and mass $m_2$ is at $\x_2\in\R^3$.\\
(i) Show that for a basic Galilean transformation from $S$ to $S'$, 
$F(x_1',x_2')=F(x_1,x_2)$.  \\
(ii) From Newton's second law $F=ma$, conclude that $a_2=a_2'$, where $a_2$ resp.
$a_2'$ is the acceleration of $m_2$ in the inertial frame $S$ resp.\  $S'$. (In Newtonian
dynamics the inertial mass $m$ is a constant under Galilean transformations.)

\end{problem}

\subsection{The Einstein postulates}
In the nineteenth century physicists postulated the existence of an ``ether'' in space that would 
enable the transmission of electromagnetic and light waves in space (based on their experience with
sound waves, water waves, vibrating strings, etc.).  However, all efforts to
experimentally verify this ether, such as the famous Michaelson-Morley experiment, failed completely
to detect any such ether.  This led physicists to consider further explanations and culminated in the
original and penetrating insights of Albert Einstein,  who introduced his theory of special relativity in 1905.
Einstein based his theory on two basic postulates:\\
\textbf{Postulate 1}:  \textit{All inertial frames are equivalent with respect to all the laws of physics.}\\
\textbf{Postulate 2}: \textit{The speed of light in empty space has the same value $c$ in any inertial frame.}

The first postulate expanded on the relativity principle of Galileo and Newton by assuming its validity
for all laws of physics, in particular those of electricity and magnetism (Maxwell's equations were not
preserved by general Galilean transformations).  It is a tribute to the insight
of Einstein that a whole new dynamics can be built upon these two brief statements.

One of the startling consequences of Einstein's postulates was the inference that time was relative,
i.e., dependent on one's frame of reference.  Newton has assumed a universal time: ``Absolute, true,
and mathematical time, of itself, and from its own nature, flows equably without relation to anything 
external.''  Einstein recognized that judgements about time were inextricably tied up with judgments
about simultaneity, but defining the latter was problematic if information can only be transmitted at 
a finite speed.  

Einstein used the large but finite speed $c$ of electromagnetic signals (in a vacuum) for relating time 
measurements.  For
example, he introduced the following method of synchronizing clocks at different locations in a given
inertial frame.  If a signal starts at a point $A$ at time $t=t_0$, is reflected back by a mirror at $B$, and
returns to $A$ at time $t_1$, then the time at which the signal reached $B$ is defined as 
$t_0+(1/2)\Delta(t)$, where $\Delta(t):=t_1-t_0$. In this manner clocks at all points can be 
synchronized in the given frame. 

\subsection{Spacetime}
In special (and general) relativity time is relative in the sense that it depends on the inertial frame we choose
and varies from frame to frame.  We thus no longer refer to ``space-time'', but to ``spacetime'' since the two
are inextricably bound together.  Mathematically we may think of spacetime as an ``uncoordinatized'' 
four-dimensional space $M$, a manifold.  Members of spacetime $M$ are called \emph{events}.
We assume that we have a notion of rectilinear motion in $M$, and that there exist \emph{inertial frames}, 
bijections $S$ from $\R\times \R^3$ to $M$ that 
endow $M$ with a coordinate system such that rectilinear motion in $M$ corresponds to rectilinear 
motion in the frame, i.e., has the form $t\mapsto
(t,\x_0+t\v)$ in the coordinates of the frame $S$.  In the coordinates of any reference frame, the first coordinate 
of the frame is the \emph{time coordinate} and the last
three are called the \emph{space coordinates}.  We assume that an unhindered light signal, 
or more generally an electromagnetic signal,
moves rectilinearly with speed $c$ in any inertial frame and that this is the maximal attainable
speed.   A \emph{world line} is a mapping from $\R$ into $M$ which is given by 
$t\mapsto (t,\x(t))$ in the coordinates of a frame, where $\x(\cdot):\R\to\R^3$ is a continuous, piecewise
smooth path with speed bounded by $c$.

\begin{problem} \label{P:Spt} Suppose that $\alpha(t)=(t,\x_0+ t\v)$ describes rectilinear motion in $M$ for some
frame $S$.  Argue that $\alpha((s+t)/2)$ is the midpoint of $\alpha(t)$ and $\alpha(s)$ in
$\R\oplus\R^3$.
\end{problem}

\begin{problem}  Argue that a coordinate frame composed with the inverse of another coordinate frame
preserves rectilinear motion.
\end{problem}

\begin{problem}  Show that if two coordinate systems have the same spacetime origin, then the 
map in the preceding problem is a linear one.  (Hint: Use Problem \ref{P:Spt} to show the map 
is additive, i.e., preserves vector addition.)
\end{problem}

\subsection{Minkowski diagrams}
Given two inertial frames moving at constant velocity with respect to each other,  we may conveniently
recoordinatize the frames by taking as the $x$-axis the direction of the velocity and assuming the 
coordinate systems agree at time $t=0$.  The Galilean transformation between the systems then 
simplifies to 
\begin{equation}
x'=x-vt,~~~y'=y,~~~z'=z,~~~t'=t.
\end{equation}
To illustrate graphically various features of special relativity for the two inertial frames, 
we suppress the $y$ and $z$ coordinates and draw two-dimensional space-time diagrams
with the $x$-axis horizontal and the $t$-axis, scaled by a factor of $c$, vertical.  We assume 
that the $x,t$-coordinates represent the first coordinate system, which is at rest with respect 
to the second one, which is moving at a velocity $v$ in the $x$-direction.

We draw world lines on the diagram that display the complete history of a one-dimensional motion.  
We assume that time is synchronized at all points with respect to the given frame, and then the
world line of a moving object consists of all positions it occupies together with the time at that position.
For example, objects at rest within the 
given frame of reference have world lines that are vertical lines with the constant 
$x$-coordinate being their fixed position.  If we take any
fixed point $B$ at $x_B$ on the $x$-axis and send out a light signal in the positive $x$-direction at time $t=0$,
then the world line of the signal is a line of slope $1$ emanating from $x_B$ at $t=0$.

\begin{problem}
Suppose that an object is moving with constant velocity $\v$ along the $x$-axis.  Show that its
world line has slope $c/v\geq 1$.  
\end{problem}

\subsection{Simultaneity}
Consider three observation stations $A<B<C$ equally spaced along the $x$-axis in a Minkowski
diagram for an inertial frame $S$ at points $x_A,x_B,x_C$.  
Assume first that $A,B,C$ are at rest in this frame.  Then their
world lines are vertical.  Suppose that a light signal is sent out from $B$ at time $t=0$, traveling at
speed $c$ both forward and backward along the $x$-axis.  
\begin{problem}\label{P1.9}
(i) Argue that if the clocks at $A$ and $C$ are synchronized in $S$, then the two signals must 
strike $A$ and $C$ at the same time.\\
(ii) Show that (i) is equivalent to the point of intersection of the world line for $A$ and the world line
for the signal moving to the left having the the same $t$-coordinate as the point of intersection of the world
line for  $C$ and the world line for the signal moving to the right.  Illustrate with appropriate
world lines in a Minkowski diagram.
\end{problem}
In light of (ii) of the preceding problem, we see that a line of simultaneous events is a line
for which $t$ is constant, i.e., a horizontal line or line parallel to the $x$-axis.  

Now suppose that $A$, $B$, and $C$ are all moving with speed $v$ along the $x$-axis,
i.e., are at rest in the inertial frame $S'$ that is moving with respect to $S$ at a speed $v$
along the $x$-direction.  Again a signal is sent from $B$ at time $t=0=t'$, and by the preceding
reasoning its reaching of $A$ and $C$ must be simultaneous events in the frame $S'$.
However in $S$ the signal travels further to the receding $C$ than to the approaching $A$,
and hence the two events are no longer simultaneous in $S$.  Thus simultaneity of events 
depends on the frame of reference, if we demand that the speed of light have the same
value $c$ in all reference frames.
\begin{problem}
(i) Draw the world lines for $A$, $B$, $C$, and the signals in the reference frame $S$.
A line of simultaneity for $S'$ is given by the (oblique) line connecting the point of intersection of 
the world line for $A$ and the world line for the signal moving to the left and  the point of 
intersection of the world line for $C$ and the world line for the signal moving to the right.\\
(ii) The axes for the moving frame $S'$ with respect to the stationary frame $S$ can be found
by taking the $t'$-axis to be the line through the origin parallel to the world lines of $A$, $B$,
and $C$, and the $x'$-axis the line through the origin parallel to any line of simultaneity.
Draw these lines in the Minkowski graph.
\end{problem}

\section{Lorentz transformations}
Suppose that one observer uses an inertial coordinate frame $S$ with coordinates $(ct,x,y,z)$,
and another observer uses another inertial coordinate frame $S'$ given by coordinates
$(ct',x',y',z')$. We assume that the frames are in motion with respect to each other 
so that the $S$-frame observer 
observes the other to be moving at speed $v$ in the $+x$-direction and the $S'$-frame observer 
observes the other to be moving at the same speed in the $-x'$-direction, with the $y,z$-coordinates
being the same for both observers.  We further assume that the space-time origin agrees for
the two frames. Let us suppose 
we know the location of an event according to one observer's coordinate frame and wish to 
determine the location according to the other coordinate frame. We transform the coordinates 
of the event from the one to the other by doing a Lorentz coordinate transformation. A reasonable
guess for the Lorentz coordinate transformation equations is to do a slight generalization of
the Galilean case, namely multiply by a constant $\gamma$, at least for the space coordinates:
\begin{eqnarray}\label{E2.1}
x' &=& \gamma(x- vt),~~~y'=y,~~~z'=z;\nonumber \\
x &=& \gamma(x'+ vt'),~~~y=y',~~~z=z'.
\end{eqnarray}
Note by the first equation that the world line of the origin, $x'=0$, in $S'$ transforms in $S$ 
to $x=vt$ and conversely $x=0$ transforms to $x'=-vt$, justifying the choice of $v$ for the 
$t$-coefficient.

If a light signal originates from the origin in the $x$-direction, then it has world line $x=ct$ in $S$ and
$x'=ct'$ in $S'$ by Einstein's Postulate 2.  Substituting these values into  (\ref{E2.1})
yields for any point on the world line of the signal
\begin{eqnarray*}
ct' &=& \gamma(ct-vt)=\gamma(c-v)t\\
ct &=& \gamma(ct'+vt')=\gamma(c+v)t';
\end{eqnarray*}
Transposing the second equation, then dividing the two equations and cross-multiplying
yields
\begin{equation*}
c^2=\gamma^2(c^2-v^2)
\end{equation*}
Solving for $\gamma$ yields
\begin{equation}\label{E2.2}
\gamma_v:=\gamma=\frac{1}{(1-v^2/c^2)^{1/2}} \mbox{ for } \vert v\vert<c.
\end{equation}
The constant $\gamma_v$ is sometimes called the \emph{Lorentz factor}.

\begin{problem}
Solve the system (\ref{E2.1}) to obtain
\begin{eqnarray}\label{E2.3}
ct &=& \gamma(ct' + vx'/c) \nonumber \\
ct' &=& \gamma(ct-vx/c)
\end{eqnarray}
\end{problem}

Combining the previous results, we can express the Lorentz transformations by
\begin{eqnarray}
t'&=& \gamma(t-vx/c^2) \nonumber\\
x' &=& \gamma(x- vt),~~~~y'=y,~~~~z'=z.
\end{eqnarray}
and
\begin{eqnarray}
t&=& \gamma(t'+vx'/c^2) \nonumber\\
x &=& \gamma(x'+ vt'),~~~~y=y',~~~~z=z'.
\end{eqnarray}
The previous Lorentz transformations are called a \emph{Lorentz boost} 
and often written in matrix form, for example:
\begin{equation}\label{E:boost1}
\left[\begin{matrix}ct'\\ x'\\ y'\\ z'\end{matrix}\right]=
\left[\begin{matrix}\gamma &-\gamma v/c& 0 & 0\\
-\gamma v/c &\gamma & 0 & 0\\
0 & 0 & 1 & 0\\ 0 & 0 & 0& 1
\end{matrix}\right]
\left[\begin{matrix}ct\\ x\\ y\\ z\end{matrix}\right]
\end{equation}

\begin{problem}
 Show that the limit as $v/c\to 0$ of the Lorentz transformation (2.7)
reduces to a Galilean transformation.  
\end{problem}

\begin{problem} A frame $S'$ has velocity $0.6c$ in the $x$-direction of frame $S$ and the origins of
the two frames coincide at $t=0=t'$. \\
(i)  If an event occurs at $x'=60$m, $t'=8\times 10^{-8}$sec in $S'$, 
what are the spacetime coordinates of this event in $S$?\\
(ii) If an event occurs in $S$ at $t=2\times 10^{-7}$sec and $x=50$m, what time does it occur in $S'$?\\
(iii) If a second event (after (ii)) occurs at $(3\times 10^{-7}$sec,$10$m) in $S$, what is the time interval between
the events in $S'$?
\end{problem}

\begin{problem}
Two events have coordinates in the frame $S$ as follows:\\
Event 1: $x_1=x_0$, $t_1=x_0/c$ ($y_1=0=z_1$)\\
Event 2: $x_2=2x_0$, $t_2=x_0/2c$ ($y_2=0=z_2$).\\
(i) There exists a frame in which these two events are simultaneous.  Find the velocity $v$ of this
frame with respect to $S$.\\
(ii) What is the value of $t'$ for when both events occur in this $S'$?  
\end{problem}

\begin{problem} An observer does not have a complete view of what is happening
everywhere in his reference frame at a given instant; he is aware only of what is happening
at his location at that instant.  Suppose a meter stick pointing in the $x$-direction 
moves along the $x$-axis with speed $0.8c$ in frame $S$, with its midpoint passing through the
origin $0$ at time $t=0$.  Assume that an observer in frame $S$ is located at $x=0$ and 
$y=1$m. Assume $c=300,000$ km/sec.\\
(i) In $S$, where are the ends of the meter stick at time $t=0$?  ($\pm 0.3$m)\\
(ii) When does the observer see the midpoint pass the origin? ($0.33\times 10^{-8}$ sec)\\
(iii) Where do the endpoints appear to be at this time? (0.27m, -0.34m)
\end{problem}

\subsection{Time dilation and length contraction}
For a coordinate frame $S'$ moving with velocity $v$ along the $x$-axis of a rest frame $S$, we can rewrite the Lorentz transformation 
and its inverse in terms of coordinate differences to obtain
\begin{eqnarray*}\label{E2.1*}
\Delta t'&=& \gamma(\Delta t-v\Delta x/c^2)\\
\Delta x' &=& \gamma(\Delta x- v\Delta t).
\end{eqnarray*}
and
\begin{eqnarray*}\label{E2.2*}
\Delta t&=& \gamma(\Delta t'+v\Delta x'/c^2)\\
\Delta x &=& \gamma(\Delta x'+v\Delta t').
\end{eqnarray*}
Suppose we have a clock at rest in the system $S$.  Two consecutive ticks of this clock are then characterized by $\Delta x = 0$. If we want to know the relation 
between the times between these ticks as measured in both systems, we can use the first equation and find: $\Delta t'=\gamma \Delta t$ (for events in which
$\Delta x=0)$. Since $\gamma=\gamma_v=1/\sqrt{1-v^2/c^2}>1$, this shows that the time $\Delta t'$ between the two ticks as seen in 
the `moving' frame $S'$ is larger than the time $\Delta t$ 
between these ticks as measured in the rest frame of the clock. This phenomenon is called \emph{time dilation}.

Similarly, suppose we have a measuring rod at rest in the unprimed system $S$. In this system, the length of this rod is written as $\Delta x$. If we want to find the 
length of this rod as measured in the `moving' system $S'$, we must make sure to measure the distances $x'$ to the end points of the rod simultaneously in the 
primed frame $S'$. In other words, the measurement is characterized by $\Delta t' = 0$, which we can combine with the fourth equation to find the relation between 
the lengths $\Delta x$ and $\Delta x'$: 
$$\Delta x'=(1/\gamma)\Delta x ~~\mbox{for events satisfying }\Delta t'=0.$$ 
This shows that the length $\Delta x'$  of the rod as measured in the 'moving' frame $S'$ is shorter than the length $\Delta x$ in its own rest frame. This phenomenon is 
called \emph{length contraction} or \emph{Lorentz contraction}. These effects are not merely appearances; they are explicitly related to our way of measuring time intervals 
between events which occur at the same place in a given coordinate system (called ``co-local" events). These time intervals will be different in another coordinate system 
moving with respect to the first, unless the events are also simultaneous. Similarly, these effects also relate to our measured distances between separated but simultaneous 
events in a given coordinate system of choice. If these events are not co-local, but are separated by distance (space), they will not occur at the same spacial distance from 
each other when seen from another moving coordinate system.

\begin{problem} The nearest star Centauri is 4.2 light years distance from earth.  How long would it take a space ship traveling at $(2/3)c$ to reach the star, according
to its internal clock.
\end{problem}

\begin{problem}
A rocketship of proper length $l_0$ travels at  constant velocity $v$ in the positive $x$-direction relative to
frame $S$.  The nose of the ship $A'$ passes over the point $A$ on the $x$-axis at $t=0=t'$, and at that instant 
a light signal is sent in the negative $x$-direction to the tail of the ship $B'$.  \\
(i) When by rocketship time $t'$ does the signal reach the tail $B'$ of the ship? ($l_0/c$)\\
(ii) At what time $t_1$, measured in $S$, does the signal reach $B'$?\\
(iii) At what time $t_2$ in $S$ does the tail of the ship $B'$ pass $A$?
\end{problem}

\begin{problem}
In a reference frame $S$ a flash of light if emitted at position $x_1$ on the $x$-axis and 
is absorbed at $x_2=x_1+l$.  In a frame $S'$ moving with velocity $v=\beta c$ along the
$x$-axis:\\
(i) What is the spatial separation $l'$ between the point of emission and point of 
absorption of the light?\\
(ii) How much time elapses (in $S'$) between the emission and absorption of the light?
\end{problem}

\subsection{Lorentz boosts}
The preceding considerations generalize in a straightforward way to the setting 
of two frames $S$ and $S'$, both with the same space-time origin.  We assume 
that $S'$ is moving within frame $S$ at a velocity $\v$ in the coordinate system
of the frame $S$, and hence that $S$ is moving within $S'$ at velocity $-\v$ in 
the coordinate system of the frame $S'$.  Let $\x\in\R^3$ and let $x_\parallel$ 
be the orthogonal projection of $x$ onto the line $\R\v$ and $\x_\perp$ be
the orthogonal projection onto the hyperplane subspace $\v^\perp$ 
perpendicular to $\v$.  Then by a straightforward generalization of the
preceding calculations we have for $\gamma_\v:=\frac{1}{(1-\Vert\v\Vert^2/c^2)^{1/2}}$ for $\Vert \v\Vert<c$
\begin{eqnarray}
ct'&=& \gamma_\v(ct-\frac{\v\cdot\x}{c})\nonumber\\ 
\x'&=&\x_\perp+\gamma_\v(\x_\parallel-t\v)
\end{eqnarray}
By standard vector geometry $\x_\parallel=(\v\cdot \x)\v/\vert \v\vert^2$ and 
$\x_\perp=\x-(\v\cdot \x)\v/\vert \v\vert^2$; hence we can alternatively write
\begin{eqnarray}
ct'&=& \gamma_\v(ct-\frac{\v\cdot\x}{c})\nonumber\\ 
\x'&=&\x-\frac{\v\cdot \x}{\vert \v\vert^2}\v+\gamma_\v\bigl(\frac{\v\cdot \x}{\vert \v \vert^2}\v-t\v\bigr)\nonumber\\
&=&  \x-\gamma_\v t\v+\frac{\gamma_\v-1}{\vert\v\vert^2}(\v \cdot \x)\v.
\end{eqnarray}
Reversing the roles of $S$ and $S'$, we obtain
\begin{eqnarray}
ct&=& \gamma_\v(ct'+\frac{\v\cdot\x'}{c})\nonumber\\ 
\x &=&  \x'+\gamma_\v t'\v+\frac{\gamma_\v-1}{\vert\v\vert^2}(\v \cdot \x')\v.
\end{eqnarray}
The latter coordinate transformation is called the \emph{Lorentz boost along} $\v$, and can be
written in matrix notation (and unprimed coordinates) as
\begin{eqnarray}\label{E:boost}
B(\v)\left[\begin{matrix}ct\\ \x\end{matrix}\right]=\left[\begin{matrix}\gamma_\v(ct+\frac{\v\cdot \x}{c}) \\
\x+\gamma_\v t\v+\frac{\gamma_\v-1}{\vert \v\vert^2} (\v\cdot \x)\v\end{matrix}\right]
\end{eqnarray}

\begin{problem} Suppose that the frame $S'$ is moving within frame $S$ at velocity
 $\v=\langle 1,1,1\rangle$.  Write out the equations for $ct$, $x$, $y$, and $z$ in terms
 of $ct'$, $x'$, $y'$, $z'$.
\end{problem}

\begin{problem}
Find a matrix representation for the Lorentz boost $B(\v)$.
\end{problem}

\subsection{Minkowski spacetime}
In physics and mathematics, Minkowski space (or Min\-kowski spacetime) is the standard mathematical setting for Einstein's theory of special 
relativity. In this setting the three ordinary dimensions of space are combined with a single dimension of time to form a four-dimensional 
manifold for representing a spacetime. Minkowski space is named after the German mathematician Hermann Minkowski, who introduced
it in 1908.

\emph{Minkowski spacetime} $\M$  is defined to be the vector space $\R^4$ equipped with the symmetric bilinear form
\begin{equation}
\eta\bigg(\left[\begin{matrix} ct\\ \x\end{matrix}\right],\left[\begin{matrix} ct'\\ \x'\end{matrix}\right]\bigg)=-c^2tt'+\x\cdot \x', \mbox{ for } t,t'\in \R,
\ \x,\x'\in\R^3.
\end{equation}
If we define
$$I_{1,3}=\left[\begin{matrix}-1 &0& 0 & 0\\
0 &1 & 0 & 0\\
0 & 0 & 1 & 0\\ 0 & 0 & 0& 1
\end{matrix}\right],$$
then we may alternatively define $\eta(\mathbf{a},\mathbf{b})=\langle \mathbf{a},I_{1,3}\mathbf{b}\rangle$, where $\langle \mathbf{a}, \mathbf{b}\rangle= \mathbf{a}\cdot  \mathbf{b}$, the usual
euclidean inner product.  We call $\eta$ the \emph{Lorentzian form} of the spacetime $\M$.

A linear transformation $T:\R^4\to\R^4$ is said to \emph{preserve the form} $\eta$ if $\eta(T\x,T\y)=\eta(\x,\y)$ for all $\x,\y\in\R^4$.   it follows from
\begin{equation}\label{E:Lequiv}
\eta(T\x,T\y)=\langle Tx,I_{1,3}T\y\rangle=\langle \x,T^TI_{1,3}Ty\rangle
\end{equation}
that $T$ preserves $\eta$ iff $I_{1,3}=T^TI_{1,3}T$, where $T^T$ is the adjoint of $T$.  We summarize:
\begin{proposition}\label{P:eta}
 A linear transformation $T:\R^4\to\R^4$ preserves $\eta$ iff $I_{1,3}=T^TI_{1,3}T$ iff $I_{1,3}T^{-1}=T^TI_{1,3}$.
\end{proposition}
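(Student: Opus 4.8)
The plan is to read both equivalences directly off the identity (\ref{E:Lequiv}), which already rewrites $\eta(T\x,T\y)$ as $\la \x, T^TI_{1,3}T\y\ra$, and then to reduce everything else to elementary matrix algebra. For the first equivalence the backward direction is immediate: assuming $I_{1,3}=T^TI_{1,3}T$, one has for all $\x,\y\in\R^4$
$$\eta(T\x,T\y)=\la \x, T^TI_{1,3}T\y\ra=\la \x, I_{1,3}\y\ra=\eta(\x,\y),$$
so $T$ preserves $\eta$. For the forward direction I would start from the hypothesis that $T$ preserves $\eta$, so that (\ref{E:Lequiv}) yields $\la \x,(T^TI_{1,3}T-I_{1,3})\y\ra=0$ for every $\x,\y$.

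The single step that needs genuine care is the passage from this bilinear-form identity, valid for all vectors, to an identity of matrices. Here I would fix $\y$ and let $\x$ range over all of $\R^4$: since the Euclidean inner product $\la\cdot,\cdot\ra$ is non-degenerate (indeed positive definite), a vector orthogonal to everything must vanish, so $(T^TI_{1,3}T-I_{1,3})\y=0$. As $\y$ was arbitrary, the linear map $T^TI_{1,3}T-I_{1,3}$ is zero, i.e. $T^TI_{1,3}T=I_{1,3}$. This completes the first equivalence.

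For the second equivalence I must first make sure $T^{-1}$ is meaningful, which I would establish by taking determinants in $I_{1,3}=T^TI_{1,3}T$: since $\det(T^T)=\det(T)$ and $\det(I_{1,3})=-1\neq 0$, we get $\det(T)^2=1$, hence $\det(T)=\pm 1$ and $T$ is invertible. Granting invertibility, both directions are pure bookkeeping: right-multiplying $I_{1,3}=T^TI_{1,3}T$ by $T^{-1}$ gives $I_{1,3}T^{-1}=T^TI_{1,3}$, and conversely right-multiplying $I_{1,3}T^{-1}=T^TI_{1,3}$ by $T$ recovers $I_{1,3}=T^TI_{1,3}T$.

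I do not anticipate any real obstacle here; the only substantive point is the non-degeneracy argument that converts the ``for all $\x,\y$'' statement into a matrix identity, and the only thing to be careful about is noting invertibility of $T$ before introducing $T^{-1}$ in the second equivalence. Everything downstream is routine algebra and requires no further input beyond the identity (\ref{E:Lequiv}) and the definition of $\eta$.
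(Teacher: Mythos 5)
Your proposal is correct and follows essentially the same route as the paper, which reads the first equivalence directly off equation (\ref{E:Lequiv}) and treats the second as immediate matrix algebra. The only difference is that you make explicit two points the paper leaves implicit — the non-degeneracy of the Euclidean inner product needed to pass from the bilinear identity to the matrix identity, and the invertibility of $T$ (via determinants) needed before writing $T^{-1}$ — both of which are sound and worthwhile additions.
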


\begin{problem}\label{P:2.12}
 Show that the Lorentz boost  
 \begin{equation}\label{E:boost2}
B(\v)=
\left[\begin{matrix}\gamma_\v &\frac{\gamma _\v \v^T}{c}\\
\frac{\gamma_\v \v}{c} & I+\frac{\gamma_\v-1}{\vert \v\vert^2}\v\v^T
\end{matrix}\right]
\end{equation}
preserves $\eta$.  (Hint: Use the preceding proposition and
the fact that $(B(\v))^{-1}=B(-\v)$.)
\end{problem}

Thus while Lorentz boosts do not preserve time or distance, they do preserve the form $\eta$, i.e., $\eta$ is an
invariant for Lorentz boosts.  This fact can frequently simplify special relativity calculations.

\begin{problem}
Two events occur at the same place and $4$ seconds apart in inertial frame $S$.  What is their spatial separation in a frame $S'$ in which the events are $6$ seconds
apart?
\end{problem}

\begin{problem} Two events occur at the same time in an inertial frame $S$ and are separated by a distance of $1$ km along the $x$-axis.  What is the time difference
between these two events as measureed in a frame $S'$ moving with constant velocity along the $x$-axis for which the spatial separation of the two events 
is measured as $2$ km.
\end{problem}

\subsection{Minkowski diagrams revisited}
We have seen earlier that in the Minkowski diagram with coordinate axes determined by the frame $S$, a frame $S'$ moving along the $x$-axis at velocity $v$
has one axis $x=vt=(v/c)(ct)$ (corresponding to $x'=0$) and the other axis the line through the origin parallel to any line of simultaneity (corresponding to 
$t'=0$).

\begin{problem}
Find an equation for the second axis.  Show that the two axes for $S'$ make the same angle with the diagonal, one on each side of it.
\end{problem}

In $S'$ the unit along the $x'$-axis has coordinates $x'=1$, $t'=0$.  Thus $-(ct')^2+(x')^2=1$.  Since the Lorentz transformation to $S$ must preserve the
Lorentzian form, it follows that the unit on the $x'$-axis must be the intersection of that axis with
the hyperbola $-(ct)^2+x^2=1$ or $x^2-(ct)^2=1$.  A similar calculation yields the unit length on the $t'$-axis.  Thus to read off the space and time 
$t',x'$-coordinates of a given point event $P$, draw lines through $P$ parallel to the $x'$- and $t'$-axes, and read off the intercepts.

\begin{problem}
Two reference frames $S$ and $S'$ move with speed $c/2$ with respect to each other.  
\item{(a)} Draw a Minkowski diagram relating the two systems (let
the $x$ and $ct$ axes be perpendicular).  Draw the calibration hyperbolas that allow you to define distance on the $ct'$ and $x'$ axes.
\item{(b)} Plot the following points on the diagram:  (1) $x=1$, $ct=1$, (2) $x'=1$, $ct'=1$, (3) $x'=2$, $ct'=0$, (4) $x=0$, $ct=2$.
\item{(c)} From your diagram determine the coordinates in the other coordinate system for each point plotted in (b).
\end{problem}

\section{The Lorentz Group}
In the preceding section we have considered transformations between certain reference frames moving at a constant
velocity with respect to one another.  These transformations consist of what we called Lorentz boosts, and we saw that they
preserved the Lorentzian form $\eta$.   The invertible linear transformations on $\R^4$ that preserve $\eta$ form
a group under composition, usually referred to as the generalized orthogonal group $O(1,3)$.  This is also frequently called
the Lorentz group, but we prefer to define the \emph{Lorentz group} to be the subgroup of $O(1,3)$ of \emph{time-preserving}
or \emph{orthochronous} $\eta$-preserving invertible linear transformations.  One characterization of time preservation
is that the vector $(1,0,0,0)\in\R^4$ is carried into a vector with a positive $t$-component.  Such transformations we shall call 
\emph{Lorentz transformations} or \emph{linear isometries of Minkowski space}.  We denote the Lorentzian group
by $O^+(1,3)$.  We extend our notion of inertial or reference frame to include frames arising from the usual coordinates of
$\R^4$ by applying a Lorentz transformation.  

\begin{problem}\label{P:3.1}  Show that a Lorentz boost is a Lorentz transformation in the preceding sense.  Show that 
a transformation of $\R^4$ that leaves the time coordinate fixed and acts as an orthogonal transformation
on the space coordinates is a Lorentz transformation.
\end{problem}

In physics and mathematics, the Lorentz group is the group of all Lorentz transformations of Minkowski spacetime, 
the special relativistic setting for all (nongravitational) physical phenomena. The mathematical form of
standard physical laws such as the  kinematical laws of special relativity, Maxwell's field equations in the 
theory of electromagnetism, and Dirac's equation in the theory of the electron,
are each invariant under Lorentz transformations. Therefore the Lorentz group can be said to express a 
fundamental symmetry of many of the known fundamental laws of nature.

\subsection{Spacetime intervals and causality}
The \emph{spacetime interval} $s^2$ between two events is defined to be $s^2=-(c\Delta t)^2+(\Delta \x)^2$, where $\Delta \x$ is the distance between the space
coordinates.  Note that the spacetime interval is invariant under any Lorentz transformation.    If $s^2<0$, then the interval is said to
be \emph{time-like} and the \emph{proper time} of the interval is defined to be $\sqrt{(\Delta t)^2-(\frac{\Delta \x}{c})^2}$.  If $s^2=0$, the interval is said to be \emph{light-like},
and if $s^2>0$, the interval is said to be \emph{space-like} and the \emph{proper distance} is defined to be $\sqrt{(\Delta \x)^2-(c\Delta t)^2}$.  Similarly for
an individual element $\u\in\M$, the element is called time-like, resp.\ light-like, resp.\ space-like depending on whether $\eta(\u,\u)$ is less than $0$, resp.\ equal to
$0$, resp.\ greater than $0$.  

We note that the light-like elements form a double cone, each cone having a circular cross section.  The time-like elements consist of two connected components, one making up
the interior of the top light cone and the other the bottom.  The time-like and light-like elements together with $t$-coordinate greater than or equal to $0$ make up a closed convex
cone $K$ with dense interior Int$(K)$ made up by the time-like vectors.  Such cones are called Lorentzian cones.  

The cone $K$ induces an order on spacetime $\M$ defined by $\u\leq \v$ if $\v-\u\in K$.  This order is a partial order called the \emph{causal order}.  We write
$\u<\v$ if $\u\ne \v$ and $\u\leq \v$ and $\u\prec\v$ if $\v-\u\in\mbox{Int}(K)$.  If $\u\leq \v$, then we say that $\u$ has a \emph{potential causal connection}
with $\v$.

\begin{problem}  Show that the causal order is a partial order. \end{problem}

\begin{problem}\label{P:3.3} Show that the sets of time-like, space-like, and light-like vectors respectively are preserved by members of $O(1,3)$. \end{problem}

\begin{problem}\label{P:3.4}  Argue that the time-like vectors consist of two connected components.  Argue that an $\eta$-preserving linear transformation is orthochronous iff
it preserves the interior of $K$. 
\end{problem}

\begin{problem}  Show that $K$ and Int$(K)$ are preserved by Lorentz transformations.  
Show that the causal order and the order $\prec$ are preserved by Lorentz transformations.  
\end{problem}

\begin{problem} Argue that if $\u<\v$, then in any reference frame the $t$-coordinate of $\u$ is less than the $t$-coordinate of $\v$.
\end{problem}

\begin{problem} An object is moving with a constant (admissible) velocity $\v$ in a reference frame $S$.  Argue that events on its world line are getting larger
in the causal order as $t$ grows.
\end{problem}

\begin{problem}
Answer both questions for the two events in each part:  Is there a  potential causal connection between the events?  Is there a frame in which the two events
are simultaneous?\\
(a) $(2\times 10^{-9}sec,0.3m,0.5m,0)$ and $(3\times 10^{-9}sec,0.4m,0.7m,0)$.\\
(b) $(5\times 10^{-9}sec,0.7m,0.5m,0)$ and $(4\times 10^{-9}sec,0.4m,0.6m,0)$.
\end{problem}

\subsection{Symmetric matrices}
\emph{Throughout this section and the next all matrices are assumed to be square
matrices with real entries.}
A matrix $A$ is \emph{symmetric} if $A=A^T$, where $A^T$ 
denotes the transpose.  
Let ${\mathrm{Sym}}(n,{\R})$, or simply $\Sym$ when $n$ is understood,
be the vector space of all $n\times n$ symmetric 
matrices. For $A\in\Sym,$ we recall that  
$A$ is positive semidefinite, denoted $0\leq A$, if $x^TAx=\langle x,Ax\rangle\ge 0$ for
all $x\in\R^n$, where $\la\cdot,\cdot\ra$ denotes the usual inner product
on $\R^n$.  Similarly $A$ is positive definite, denoted $0<A$, if it is 
positive semidefinite and invertible, or equivalently if $x^TAx=\la x,Ax\ra>0$ for
all non-zero $x$. We denote the set of positive definite (semidefinite)
matrices by $\Spp$ ($\Sp$).

The following ``internal" characterization of a positive definite matrix
involves \emph{orthogonal matrices}, matrices $U$ such that $U^{-1}=U^T$, and 
is a standard linear algebra result.

\begin{proposition}\label{P:beg}
A symmetric matrix $A$ is positive definite (semidefinite) if and only if 
$A=U^TDU$ for some orthogonal matrix $U$ and some diagonal matrix 
$D$ with positive (non-negative) diagonal
entries if and only if $A$ has all eigenvalues positive (non-negative).
\end{proposition}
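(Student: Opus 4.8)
The plan is to treat the Spectral Theorem for real symmetric matrices as the engine and to obtain the three-way equivalence by a single change of variables. The central fact I would establish first is that every symmetric $A$ admits an orthogonal diagonalization $A=U^TDU$ with $U$ orthogonal and $D$ diagonal, the diagonal entries of $D$ being precisely the eigenvalues of $A$. Granting this, the equivalence of the three conditions is nearly immediate, so the real content lives in proving orthogonal diagonalizability.

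For the Spectral Theorem itself I would proceed by induction on $n$. Two preliminary observations do the work. First, the eigenvalues of a real symmetric $A$ are real: if $Ax=\lambda x$ for a (possibly complex) nonzero $x$, then comparing $x^*Ax$ with its conjugate transpose and using $A=A^T=\overline A$ forces $\lambda=\bar\lambda$. Hence a real unit eigenvector $u_1$ and real eigenvalue $\lambda_1$ exist. Second, the orthogonal complement $u_1^\perp$ is $A$-invariant: for $x\perp u_1$ one has $\la Ax,u_1\ra=\la x,Au_1\ra=\lambda_1\la x,u_1\ra=0$ by symmetry. Restricting $A$ to the $(n-1)$-dimensional invariant subspace $u_1^\perp$ and applying the inductive hypothesis yields an orthonormal eigenbasis there; adjoining $u_1$ gives an orthonormal eigenbasis of $\R^n$. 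Collecting these basis vectors as the rows of $U$ produces an orthogonal $U$ with $UAU^T=D$ diagonal, i.e.\ $A=U^TDU$, and since $A$ is then similar to $D$ its eigenvalues are exactly the diagonal entries.

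With the diagonalization in hand, the change of variables $y=Ux$ closes everything. For any $x\in\R^n$,
\[
x^TAx=x^TU^TDUx=y^TDy=\sum_{i=1}^n \lambda_i\, y_i^2,
\]
and as $U$ is invertible, $x$ ranges over all of $\R^n$ exactly as $y$ does, with $x\ne 0\iff y\ne 0$. If every $\lambda_i\ge 0$ this sum is $\ge 0$, giving $A\in\Sp$; if every $\lambda_i>0$ it is $>0$ for $x\ne 0$, giving $A\in\Spp$. Conversely, testing positivity on a unit eigenvector $u$ gives $u^TAu=\lambda$, so positive (semi)definiteness forces the corresponding eigenvalue to be positive (non-negative). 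This settles $(1)\iff(3)$, while $(2)\iff(3)$ is just the remark that the diagonal of $D$ consists of the eigenvalues.

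I expect the only genuine obstacle to be the inductive step of the Spectral Theorem, specifically securing a real eigenvalue/eigenvector to start the induction and verifying the $A$-invariance of $u_1^\perp$; once those are in place the definiteness statements are routine. A minor bookkeeping point concerns the positive-definite case, where \emph{positive definite} was defined as \emph{positive semidefinite and invertible}: invertibility is automatic once all eigenvalues are positive, since the determinant is their product, so the eigenvalue criterion and the quadratic-form criterion agree.
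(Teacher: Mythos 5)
Your proof is correct. Note, though, that the paper itself offers no proof of this proposition: it is stated as ``a standard linear algebra result,'' and the only guidance given is a later problem whose hint (``pick an orthonormal basis of eigenvectors and consider a change of coordinates between it and the standard basis'') asks the reader to derive the representation from the spectral decomposition $A=\sum_{i=1}^r\lambda_iE_i$, which the paper constructs by taking the $\lambda_i$ to be the distinct eigenvalues and the $E_i$ the orthogonal projections onto their eigenspaces. Your second half --- the substitution $y=Ux$ turning $x^TAx$ into $\sum_i\lambda_iy_i^2$, plus testing on unit eigenvectors for the converse --- is exactly the argument that hint envisions. Where you go beyond the paper is in proving the Spectral Theorem itself by induction (reality of eigenvalues via $x^*Ax$, invariance of $u_1^\perp$, restriction to the complement), whereas the paper simply assumes orthogonal diagonalizability as known; this makes your account self-contained at the cost of reproving a standard theorem, and both halves of your argument are sound. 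Your closing remark reconciling the paper's definition of positive definite (semidefinite and invertible) with the quadratic-form criterion, via the determinant being the product of the eigenvalues, is a worthwhile detail that the paper glosses over with an unproved ``or equivalently.''
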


By standard matrix spectral theory, we can write  $A\in\Sym(n,\R)$ uniquely as 
$A=\sum_{i=1}^r\lambda_iE_i$,  where $\lambda_1,\ldots,\lambda_r\in\R$ 
are distinct, each $E_i$ is a non-zero orthogonal projection, 
and the collection
$\{E_i\}_{1\le i\le  r}$ satisfies $\sum_{i=1}^r E_i=I$ and $E_iE_j=0$ 
for $i\ne j$.  Indeed the existence follows by choosing the $\lambda_i$ to
be the distinct eigenvalues and each $E_i$ to be the orthogonal projection
onto the eigenspace of $\lambda_i$.  The uniqueness, on the other hand,
follows by (i) observing that for any such decomposition 
the $\lambda_i$, $i=1,\ldots,r$, must be eigenvalues 
and the range of $E_i$ must consist of eigenvectors for $\lambda_i$, 
(ii) using the equality
$\sum_{i=1}^r E_i=I$ and the orthogonality $E_iE_j=0$ for $i\ne j$
to argue that $\R^n$ is the direct sum of the ranges of the $E_i$, 
and then (iii) deducing that the ranges of the $E_i$ must exhaust the
eigenspaces and thus that the $\lambda_i$ must exhaust the eigenvalues. We call
$\{\lambda_1,\ldots,\lambda_r\}$ the \emph{spectrum} of $A$ and
$\sum_{i=1}^r\lambda_iE_i$ the \emph{spectral decomposition}.

\begin{problem}
Use the spectral decomposition to construct a proof of the representation
of a positive definite matrix given in Proposition \ref{P:beg}.
(Hint: Pick an orthonormal basis of eigenvectors and consider a change of
coordinates between it and the standard basis.)
\end{problem}

For a bijection $f:M_1\to M_2$, where $M_1,M_2\subseteq \R$, 
we define a function on all $A\in\Sym$ with
spectrum contained in $M_1$ by $f(A)=\sum_{i=1}^r f(\lambda_i)E_i$, where
$A=\sum_{i=1}^r\lambda_iE_i$ is the spectral decomposition (functions 
constructed in this way are called \textit{matrix functions} and  
provide a simple example of the functional calculus). 
Note (from uniqueness of spectral
decomposition) that $f$ is well-defined and defines a bijection
from all symmetric matrices with spectrum contained in $M_1$ to all 
symmetric matrices with spectrum contained in $M_2$ 
(with inverse defined from $f^{-1}:M_2\to M_1$).  

Extending $\exp:\R\to (0,\infty)$ to $\Sym$, 
we obtain (in light of Proposition \ref{P:beg}) that

\begin{proposition}\label{P:inj}
The exponential function $\exp:\Sym\to\Spp$ given by
$$\exp A=\exp(\sum_{i=1}^r\lambda_iE_i)=\sum_{i=1}^r e^{\lambda_i}E_i$$
is a bijection. 
In particular, a symmetric matrix is positive definite
if and only if it is the exponential of a symmetric matrix.
\end{proposition}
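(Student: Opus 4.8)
The plan is to recognize this proposition as an immediate application of the general matrix-function bijection established in the paragraph just above, specialized to the scalar bijection $\exp\colon\R\to(0,\infty)$. First I would recall the elementary calculus fact that $\exp$ is a strictly increasing bijection from $\R$ onto $(0,\infty)$, with inverse $\log\colon(0,\infty)\to\R$. Taking $M_1=\R$ and $M_2=(0,\infty)$ in the preceding construction, the associated matrix function $A\mapsto\exp A=\sum_{i=1}^r e^{\lambda_i}E_i$ is then, by the remark immediately preceding the proposition (well-definedness from uniqueness of the spectral decomposition), a bijection from the set of symmetric matrices whose spectrum lies in $M_1$ onto the set of symmetric matrices whose spectrum lies in $M_2$.

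It then remains only to identify these two sets. For the domain, since every real symmetric matrix has real spectrum, the symmetric matrices with spectrum contained in $\R$ are exactly all of $\Sym$. For the codomain, I would invoke Proposition \ref{P:beg}: a symmetric matrix is positive definite precisely when all of its eigenvalues are positive, that is, precisely when its spectrum is contained in $(0,\infty)=M_2$. Hence the symmetric matrices with spectrum in $M_2$ are exactly the members of $\Spp$. Substituting these two identifications into the bijection of the previous paragraph yields that $\exp\colon\Sym\to\Spp$ is a bijection.

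The final ``in particular'' assertion then falls out at once: since the image of $\exp$ is exactly $\Spp$, a symmetric matrix is positive definite if and only if it equals $\exp B$ for some symmetric $B$. I do not anticipate a genuine obstacle here, since the real content is already packaged in the functional-calculus bijection; the only points requiring care are the two set identifications, and in particular making sure that ``spectrum contained in $(0,\infty)$'' is matched with ``positive definite'' via the eigenvalue characterization in Proposition \ref{P:beg}, rather than being inadvertently conflated with positive semidefiniteness.
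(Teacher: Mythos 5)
Your proof is correct and follows the paper's own route exactly: the paper likewise obtains Proposition \ref{P:inj} by specializing the matrix-function bijection to $\exp\colon\R\to(0,\infty)$ and invoking Proposition \ref{P:beg} to identify the target set with $\Spp$. No gaps; your care in distinguishing positive definiteness from semidefiniteness via the eigenvalue criterion is precisely the point the paper's phrase ``in light of Proposition \ref{P:beg}'' encapsulates.
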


We recall that the matrix exponential function is more
commonly defined by the power series $e^A=\sum_{n=0}^\infty A^n/n!$,
which converges for all $A$.  Since for $x$ in the eigenspace
of an eigenvalue $\lambda_i$ of $A$, we have
$e^A(x)=e^{\lambda_i}x=\exp A(x)$, the matrix operators 
$e^A$ and $\exp A$ agree on the eigenspaces of $A$, and
hence $e^A=\exp A$.

Using the same methods as those employed for Proposition \ref{P:inj},
we obtain by extending the bijection
$f(x)=x^2$ on $(0,\infty)$ (respectively, $[0,\infty)$\,) to the matrices with
spectrum contained in $(0,\infty)$ (respectively, $[0,\infty)$\,),
that is, the positive 
definite (respectively, semidefinite)
matrices, the following

\begin{proposition}\label{P:sqrt}
If $A>0$ (respectively, $A\ge 0$), then $A$ has a unique positive definite 
(respectively, semidefinite) square root, denoted $A^{1/2}$.
\end{proposition}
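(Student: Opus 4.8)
The plan is to invoke the matrix-function construction developed just before Proposition~\ref{P:inj}, specialized to the bijection $f(x)=x^2$. Since $f$ maps $(0,\infty)$ bijectively onto $(0,\infty)$ (and $[0,\infty)$ bijectively onto $[0,\infty)$\,), with inverse $f^{-1}(x)=\sqrt{x}$, the induced matrix function $B\mapsto f(B)=\sum_j f(\mu_j)F_j$ is, by that remark, a bijection from $\Spp$ onto $\Spp$ (respectively, from $\Sp$ onto $\Sp$), whose inverse is the matrix function induced by $f^{-1}$. The whole proposition will then fall out of this bijectivity once I pin down one identification.

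That identification is the crux, and the step I expect to cost the most care: I must check that the abstractly defined matrix function attached to $f(x)=x^2$ agrees with honest matrix multiplication, i.e.\ that $f(B)=B^2$. This is the exact analogue of the verification $e^A=\exp A$ recorded after Proposition~\ref{P:inj}. Writing the spectral decomposition $B=\sum_j \mu_j F_j$ and using that the spectral projections satisfy $F_jF_k=0$ for $j\ne k$ and $F_j^2=F_j$, I would compute $B^2=\bigl(\sum_j\mu_jF_j\bigr)\bigl(\sum_k\mu_kF_k\bigr)=\sum_j\mu_j^2F_j=f(B)$, as wanted.

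With this in hand, existence is immediate: set $A^{1/2}:=f^{-1}(A)=\sum_i\sqrt{\lambda_i}\,E_i$, where $A=\sum_i\lambda_iE_i$ is the spectral decomposition. Its spectrum $\{\sqrt{\lambda_i}\}$ lies in $(0,\infty)$ (respectively, $[0,\infty)$\,), so Proposition~\ref{P:beg} shows that $A^{1/2}$ is positive definite (respectively, semidefinite), and by the identification $(A^{1/2})^2=f(A^{1/2})=A$, so it is indeed a square root of the required type.

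For uniqueness, suppose $B$ is any positive definite (respectively, semidefinite) matrix with $B^2=A$. Then $B$ has spectrum in $(0,\infty)$ (respectively, $[0,\infty)$\,), so it lies in the domain of $f$, and the identification gives $f(B)=B^2=A=f(A^{1/2})$; injectivity of the matrix function $f$ then forces $B=A^{1/2}$. Alternatively, and more concretely, since squaring is injective on $[0,\infty)$ the numbers $\mu_j^2$ are distinct, so $B^2=\sum_j\mu_j^2F_j$ is the spectral decomposition of $A$, and uniqueness of the spectral decomposition matches the $\mu_j^2$ with the $\lambda_i$ and the $F_j$ with the $E_i$, again yielding $B=\sum_i\sqrt{\lambda_i}\,E_i=A^{1/2}$. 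This completes the argument.
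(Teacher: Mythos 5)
Your proposal is correct and is essentially the paper's own argument: the paper proves Proposition~\ref{P:sqrt} precisely by extending the bijection $f(x)=x^2$ on $(0,\infty)$ (respectively $[0,\infty)$\,) via the matrix-function construction, exactly as you do. Your explicit verification that the abstract matrix function satisfies $f(B)=B^2$, and your spelled-out uniqueness argument, simply fill in details the paper leaves implicit.
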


For any $A>0$, by Proposition \ref{P:inj}
there exists a unique symmetric matrix
$\log A$ such that $\exp(\log A)=A$.
We can thus define $A^r$ for
any $A>0$ by $A^r=\exp(r\log A)$.  
Since $\exp(A+B)=\exp(A)\exp(B)$ if $AB=BA$, the one-parameter group 
$\{A^r:r\in\R\}$ satisfies the standard laws of exponents.  This leads to the
\begin{proposition}\label{P:exp}
The map $\exp$ satisfies 
$$A^{t+s}=\exp((t+s)X)=\exp(tX+sX)=\exp(tX)\exp(sX)=A^tA^s$$
for any $X\in\Sym$ and $A\in\Spp$ satisfying $A=\exp X$
$($equivalently, $X=\log A)$.
\end{proposition}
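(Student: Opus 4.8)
The plan is to observe that the entire chain of equalities is a matter of unwinding the definition $A^r=\exp(r\log A)=\exp(rX)$, with the single substantive ingredient being the law $\exp(tX+sX)=\exp(tX)\exp(sX)$. First I would record the outer equalities, which are purely formal: applying the definition of $A^r$ with $r=t+s$ gives $A^{t+s}=\exp((t+s)X)$, and since $t$ and $s$ are scalars we have $(t+s)X=tX+sX$, so $\exp((t+s)X)=\exp(tX+sX)$; at the other end, $\exp(tX)=A^t$ and $\exp(sX)=A^s$ again directly from the definition. This reduces the whole statement to the middle equality.

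For that equality I would work directly with the spectral decomposition $X=\sum_{i=1}^r\lambda_iE_i$ rather than invoke a general commuting-exponential identity, since the relevant structure is already visible in the projections. By the definition of a matrix function, $\exp(uX)=\sum_{i=1}^r e^{u\lambda_i}E_i$ for every scalar $u$ (when $u\ne 0$ the numbers $u\lambda_i$ remain distinct, so this is literally the spectral decomposition of $uX$ with $\exp$ applied; when $u=0$ both sides collapse to $\sum_i E_i=I$). Applying this with $u=t$ and $u=s$ and multiplying out gives
$$\exp(tX)\exp(sX)=\Bigl(\sum_{i=1}^r e^{t\lambda_i}E_i\Bigr)\Bigl(\sum_{j=1}^r e^{s\lambda_j}E_j\Bigr)=\sum_{i,j}e^{t\lambda_i}e^{s\lambda_j}E_iE_j.$$

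The key step will then be to collapse this double sum using the defining relations of the spectral projections: each $E_i$ is an orthogonal projection, so $E_i^2=E_i$, and distinct projections satisfy $E_iE_j=0$ for $i\ne j$. Hence only the diagonal terms survive, and
$$\sum_{i,j}e^{t\lambda_i}e^{s\lambda_j}E_iE_j=\sum_{i=1}^r e^{t\lambda_i}e^{s\lambda_i}E_i=\sum_{i=1}^r e^{(t+s)\lambda_i}E_i=\exp((t+s)X),$$
which matches the middle equality against the outer ones established above. I do not anticipate a genuine obstacle: the only point requiring a moment's care is the bookkeeping for $u=0$ in the matrix-function formula, so that the identity also holds when $t$, $s$, or $t+s$ vanishes, while the orthogonality relation $E_iE_j=\delta_{ij}E_i$ does all the real work. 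An alternative would be to cite $\exp(A+B)=\exp(A)\exp(B)$ for commuting $A,B$ with $A=tX$ and $B=sX$, but the projection computation is self-contained and keeps everything inside the spectral calculus developed above.
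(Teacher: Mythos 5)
Your proof is correct, but it takes a genuinely different route from the paper's. The paper disposes of the middle equality in one sentence by appealing to the general law $\exp(A+B)=\exp(A)\exp(B)$ for commuting matrices $A,B$ (applied to $tX$ and $sX$, which obviously commute); that law is stated there without proof, and its usual justification goes through the power-series definition $e^A=\sum_{n}A^n/n!$ and a Cauchy-product argument. You instead stay entirely inside the spectral calculus developed in Section 3.2: writing $X=\sum_{i=1}^r\lambda_iE_i$ and using $\exp(uX)=\sum_{i=1}^r e^{u\lambda_i}E_i$ together with the relations $E_iE_j=\delta_{ij}E_i$, you collapse the product $\exp(tX)\exp(sX)$ directly to $\exp((t+s)X)$. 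What your computation exploits is that $tX$ and $sX$ do not merely commute but share the same spectral projections, which is exactly why the double sum diagonalizes; this is a special case of the commuting-exponential law that needs none of its machinery. The trade-off: your argument is self-contained within the paper's framework and even handles the degenerate cases $t$, $s$, or $t+s$ equal to $0$ explicitly, a point the paper glosses over, while the paper's appeal to the commuting law is shorter and yields a fact valid for arbitrary commuting pairs rather than only scalar multiples of a single symmetric matrix. As you note, the outer equalities are purely definitional, so both arguments are complete once the middle equality is settled.
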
  
For $A$ positive definite and $r=1/2$, the preceding definition for $A^{1/2}$
agrees with that of Proposition \ref{P:sqrt}
since $\exp ((1/2)\log A)$ is a positive definite square root of $A$ and this
square root is unique.

\begin{problem}
Use the fact that $\exp:\Sym\to\Spp$ is bijective and a homomorphism on
each one-dimensional subspace to show that each member of $\Spp$ has a 
unique $n$th root in $\Spp$.
\end{problem}


\subsection{Polar decompositions}
A \emph{polar decomposition} for an invertible matrix $A$ is a factorization
$A=PU$, where $P$ is a positive definite matrix and $U$ is an orthogonal
matrix.  
\begin{proposition}\label{P:polar} (Polar Decomposition)  Each invertible
matrix $A$ has a unique polar decomposition $A=PU$.  Furthermore,
$P=\sqrt{AA^T}$ is the unique positive definite square root of $AA^T$.
\end{proposition}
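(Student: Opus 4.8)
The plan is to build the factorization directly out of the symmetric matrix $AA^T$ and then to read off uniqueness from the uniqueness of the positive definite square root established in Proposition \ref{P:sqrt}.

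First I would observe that $AA^T$ is symmetric, since $(AA^T)^T = AA^T$, and in fact positive definite: for any nonzero $x\in\R^n$ we have $x^T AA^T x = \langle A^Tx, A^Tx\rangle = \Vert A^Tx\Vert^2 > 0$, where strict positivity uses that $A$, and hence $A^T$, is invertible, so that $A^Tx\ne 0$. Thus by Proposition \ref{P:sqrt} the matrix $AA^T$ has a unique positive definite square root, and I set $P := \sqrt{AA^T}$. Being positive definite, $P$ is invertible, so I may define $U := P^{-1}A$, which gives $A = PU$ by construction; it then remains only to check that $U$ is orthogonal.

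For that step I would compute $UU^T$. Since $P$ is symmetric, so is $P^{-1}$, whence $U^T = A^T P^{-1}$ and
\[ UU^T = P^{-1}A\,A^TP^{-1} = P^{-1}(AA^T)P^{-1} = P^{-1}P^2P^{-1} = I, \]
using $P^2 = AA^T$. This shows $U^{-1} = U^T$, i.e.\ $U$ is orthogonal, completing existence and simultaneously identifying $P$ as $\sqrt{AA^T}$.

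Finally, for uniqueness, suppose $A = PU$ with $P$ positive definite and $U$ orthogonal. Then $AA^T = P U U^T P^T = PP = P^2$, using $UU^T = I$ and $P^T = P$. Hence any positive definite factor $P$ must be a positive definite square root of $AA^T$, and by the uniqueness in Proposition \ref{P:sqrt} it is forced to equal $\sqrt{AA^T}$; the orthogonal factor is then determined as $U = P^{-1}A$. I do not expect a genuine obstacle here: the only point requiring care is invoking uniqueness of the positive definite square root at exactly the right moment, since that single fact drives both the canonical choice of $P$ in the existence argument and the rigidity in the uniqueness argument.
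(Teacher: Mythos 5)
Your proof is correct and follows essentially the same route as the paper's: establish that $AA^T$ is positive definite, take its unique positive definite square root $P$ (Proposition \ref{P:sqrt}), set $U=P^{-1}A$ and verify orthogonality via $UU^T=P^{-1}P^2P^{-1}=I$, then get uniqueness from the equation $AA^T=P^2$ forced by any polar factorization. No gaps; the only stylistic difference is that you spell out explicitly why $U^T=A^TP^{-1}$, a step the paper leaves implicit.
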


\begin{proof}.  Since $A$ is invertible, $\la \x,AA^T \x\ra=\la A^T\x,A^T\x\ra
>0$ for all $\x\ne 0$, and hence $AA^T=(AA^T)^T$ is positive definite.  
By Proposition \ref{P:sqrt} $AA^T$ has a unique positive definite square
root $P$.   Set $U:=P^{-1}A$.  Clearly $PU=A$.  Furthermore,
$$UU^T=P^{-1}AA^T(P^{-1})^T=P^{-1}P^2P^{-1}=I,$$
so $U$ is orthogonal.

Suppose $A=QV$ is another polar decomposition.
Then $AA^T=QVV^TQ=Q^2$.  Since positive definite square
roots are unique, $Q=P$, and hence $V=Q^{-1}A=P^{-1}A=U$.
\end{proof}

We want to show next that if $A$ is a Lorentz transformation, then so are the 
polar factors.  First, a lemma.
\begin{lemma}\label{L:Lroots}
Let $P$ be positive definite matrix preserving the Lorentzian form.  Then $P$
and all of its powers $P^t$, $t\in \R$, are Lorentz transformations.
\end{lemma}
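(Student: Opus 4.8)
The plan is to reduce everything to a statement about the logarithm $X := \log P$, whose existence and symmetry are guaranteed by Proposition \ref{P:inj} since $P$ is positive definite. Write $J := I_{1,3}$ for brevity, and recall that $J$ is symmetric with $J^2 = I$. Because $P$ is positive definite it is symmetric, so the hypothesis that $P$ preserves $\eta$, combined with Proposition \ref{P:eta}, reads $PJP = J$; rearranging (multiply on the right by $P^{-1}$ and on the left by $J$) gives the key conjugation identity $JPJ = P^{-1}$.

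The heart of the argument is to transfer this identity to the generator $X$. Since $J^2 = I$ we have $JM^nJ = (JMJ)^n$ for every matrix $M$, so conjugation by $J$ commutes with the exponential power series; thus $JPJ = J(\exp X)J = \exp(JXJ)$, while $P^{-1} = \exp(-X)$. As both $JXJ$ and $-X$ are symmetric (the former because $(JXJ)^T = JX^TJ = JXJ$) and $\exp$ is injective on $\Sym$ by Proposition \ref{P:inj}, we conclude $JXJ = -X$.

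With this in hand the powers fall out at once. For any $t \in \R$ we have $P^t = \exp(tX)$, and the same computation gives $JP^tJ = \exp(t\,JXJ) = \exp(-tX) = (P^t)^{-1}$; multiplying through then yields $P^tJP^t = J$. Since $P^t$ is symmetric (being a power of a positive definite matrix, it again lies in $\Spp \subseteq \Sym$), Proposition \ref{P:eta} shows $P^t$ preserves $\eta$. To see that $P^t$ is orthochronous, note that $P^t$ is itself positive definite, so the $t$-component of $P^t(1,0,0,0)^T$, which is precisely its $(1,1)$-entry $\la (1,0,0,0)^T, P^t(1,0,0,0)^T\ra$, is strictly positive; this is exactly the time-preservation criterion. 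Taking $t = 1$ recovers the claim for $P$ itself.

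I expect the one genuinely delicate point to be the passage from $JPJ = P^{-1}$ to $JXJ = -X$: this is where injectivity of $\exp$ on all of $\Sym$ (not merely on a single one-parameter subgroup) does the essential work, and where one must check that $JXJ$ is symmetric so that the uniqueness built into Proposition \ref{P:inj} can be invoked. Everything else is formal manipulation of the one-parameter group $\{P^t : t \in \R\}$ together with the positivity of each $P^t$.
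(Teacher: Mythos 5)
Your proof is correct, but it takes a genuinely different route from the paper's. The paper works at the group level: it diagonalizes $P = U^TDU$, derives $DBD = B$ for $B = UI_{1,3}U^T$, and makes an entry-wise argument ($d_{ii}d_{jj}=1$ whenever $b_{ij}\neq 0$) to conclude $D^tBD^t = B$ for all $t$, with the identification $P^t = U^TD^tU$ deferred to a problem; orthochronicity is then obtained by a continuity and Intermediate Value Theorem argument on $t \mapsto \pi_{ct}(P^t(\mathbf{t}))$. You instead pass to the logarithm: the identity $JPJ = P^{-1}$ is transferred to $JXJ = -X$ for $X = \log P$ via injectivity of $\exp$ on $\Sym$ (Proposition \ref{P:inj}), after which all powers are handled uniformly by exponentiating $tX$. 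Your approach buys two things: it replaces the entry-wise combinatorial step with a single conceptual identity, and it disposes of orthochronicity more simply, since the $(1,1)$-entry of the positive definite matrix $P^t$ is automatically positive — no IVT needed. It also dovetails nicely with the later Lie-theoretic material: your relation $JXJ = -X$ is equivalent to $I_{1,3}X + XI_{1,3} = 0$, which is precisely the condition defining $\mathfrak{o}^+(1,3)$ in Section 6.2, so you are in effect showing $\log P \in \p$, anticipating Lemma \ref{L;frakp}. What the paper's route buys in exchange is elementarity: it needs only the spectral decomposition and no functional-calculus manipulation of $\exp$ and $\log$, and its conjugation-invariance argument is self-contained at the matrix-entry level.
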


\begin{proof} By Proposition \ref{P:beg} there exists an orthogonal $U$ such that
$P=U^{T}DU$, where $D$ is a diagonal matrix with positive entries down the
diagonal.  By Proposition \ref{P:eta} $PI_{1,3}P=I_{1,3}$.  Thus
$$ D(UI_{1,3}U^T)D=UPU^TUI_{1,3}U^TUPU^T=UPI_{1,3}PU^T=UI_{1,3}U^{T}.$$
If we set $B=UI_{1,3}U^T$, then $DBD=B$.  The only way this can happen for
a diagonal matrix $D$ is that $d_{i,i}d_{j,j}=1$ whenever $b_{i,j}\ne 0$.
It then follows for $t\in\R$ that $d_{i,i}^td_{j,j}^t=1$ whenever $b_{i,j}\ne 0$, and thus
$D^{t}BD^{t}=B$.  Reversing our earlier argument, we conclude that
$U^TD^{t}UI_{1,3}U^TD^{t}U=I_{1,3}$.  It is straightforward that 
$U^TD^{t}U$ provides an alternative way for computing $P^t$, and
thus $P^t$ preserves the Lorentzian form.

Let $\mathbf t$ denote the unit vector in the $ct$-direction in Minkowski space.
Since all $P^t$ preserve the Lorentzian form, they must carry the vector $\mathbf t$  into the open positive cone of
time-like vectors or its negative.  Since the map from $\R$ to $\R$ given by $t\mapsto \pi_{ct}(P^t(\mathbf{t}))$ is continuous,
where $\pi_{ct}$ is projection into the $ct$-coordinate, takes the value $1$ at $t=0$, and can't assume the
value $0$, by the Intermediate Value Theorem, we conclude that it takes on only positive values.  Thus all
$P^t$ are Lorentz transformations.
\end{proof}

\begin{problem} Show in the previous proof that $U^TD^{t}U$ gives the power $P^t$, as defined in Section  3.2.
\end{problem}

\begin{lemma}\label{L:trans}
If $A\in O(1,3)$, so is $A^T$.
\end{lemma}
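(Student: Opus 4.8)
The plan is to work entirely at the level of the matrix criterion supplied by Proposition \ref{P:eta}, which says that an invertible linear map $T$ lies in $O(1,3)$ precisely when $I_{1,3} = T^T I_{1,3} T$. Since $A \in O(1,3)$, I may start from the relation $A^T I_{1,3} A = I_{1,3}$, and the goal is to verify the analogous identity for $A^T$, namely $A I_{1,3} A^T = (A^T)^T I_{1,3} (A^T) = I_{1,3}$. Note that invertibility of $A^T$ is free, since it follows at once from the invertibility of $A$, so the entire content is the single matrix equation above.

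The one structural fact that makes this go through is that $I_{1,3}$ is its own inverse, $I_{1,3}^2 = I$, since it is diagonal with entries $\pm 1$. First I would left-multiply the relation $A^T I_{1,3} A = I_{1,3}$ by $I_{1,3}$ to obtain $I_{1,3} A^T I_{1,3} A = I$, which exhibits an explicit inverse $A^{-1} = I_{1,3} A^T I_{1,3}$. Left-multiplying this equation by $A$ gives $I = A I_{1,3} A^T I_{1,3}$, and then right-multiplying by $I_{1,3}$ and again using $I_{1,3}^2 = I$ yields exactly $A I_{1,3} A^T = I_{1,3}$.

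Finally I would invoke Proposition \ref{P:eta} in the reverse direction: the identity $A I_{1,3} A^T = (A^T)^T I_{1,3} (A^T) = I_{1,3}$ is precisely the defining criterion for $A^T$ to preserve $\eta$, whence $A^T \in O(1,3)$. There is essentially no obstacle here beyond bookkeeping; the only point one must not overlook is the repeated use of $I_{1,3}^2 = I$, which is what allows one to pass from a relation with $A$ on the inside to one with $A$ on the outside. An alternative route, taking inverses of $A^T I_{1,3} A = I_{1,3}$ directly and then conjugating by $A$ and $A^T$, reaches the same conclusion but is slightly less transparent.
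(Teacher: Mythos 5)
Your proof is correct, and it takes a genuinely more direct route than the paper's. The paper argues in two stages: it takes inverses of $A^T I_{1,3} A = I_{1,3}$ to obtain $A^{-1}I_{1,3}(A^T)^{-1} = I_{1,3}$, reads this as the statement that $(A^{-1})^T$ preserves $\eta$, and then invokes the group property of $O(1,3)$ (closure under inversion, so that $A^{-1}\in O(1,3)$) in order to apply that intermediate result to $A^{-1}$ and conclude $A^T\in O(1,3)$. You bypass the detour through $A^{-1}$ entirely: using $I_{1,3}^2 = I$ you extract the explicit inverse formula $A^{-1} = I_{1,3}A^T I_{1,3}$ and rearrange in one chain of matrix identities to land on $(A^T)^T I_{1,3} A^T = I_{1,3}$, which is the criterion of Proposition \ref{P:eta} applied to $A^T$. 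What your version buys is self-containedness: you never need the (asserted but unproved) fact that $O(1,3)$ is closed under inverses, only Proposition \ref{P:eta} and the involution $I_{1,3}^2=I$. What the paper's version buys is an incidental extra fact, namely that $(A^{-1})^T$ preserves the form, and a cleaner display of the symmetry between $A$ and $A^{-1}$. Worth noting: your inverse formula $A^{-1}=I_{1,3}A^TI_{1,3}$ is precisely a restatement of the second equivalence in Proposition \ref{P:eta} (that $T$ preserves $\eta$ iff $I_{1,3}T^{-1}=T^TI_{1,3}$), so you could have cited that clause directly and shortened the bookkeeping further.
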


\begin{proof} If $A^TI_{1,3}A=I_{1,3}$, then taking inverses we obtain
$A^{-1}I_{1,3}(A^T)^{-1}=I_{1,3}$.  Since $A^{-1}I_{1,3}(A^T)^{-1}=
((A^{-1})^T)^TI_{1,3}(A^{-1})^{T}$, we conclude that $(A^{-1})^T$
preserves the Lorentzian form.  Now if $A\in O(1,3)$,
so also is $A^{-1}$, and applying the preceding to $A^{-1}$, we conclude
that $A^T\in O(1,3)$. 
\end{proof}

\begin{proposition}\label{P:Ldecomp}
The polar factors $P,U$ of the polar decomposition $A=PU$ of a Lorentz transformation
$A$ are also Lorentz transformations.
\end{proposition}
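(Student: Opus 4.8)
The plan is to build $P$ and $U$ out of matrices that are already known to be Lorentz transformations, exploiting Lemmas \ref{L:Lroots} and \ref{L:trans} together with the fact that $O(1,3)$ and $O^+(1,3)$ are groups. The whole argument is an assembly of the preceding results rather than a fresh computation.

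First I would show that $AA^T$ is positive definite and preserves the Lorentzian form. Positive definiteness is already recorded in the proof of Proposition \ref{P:polar} (namely $\langle \x, AA^T\x\rangle = \langle A^T\x, A^T\x\rangle > 0$ for $\x \neq 0$). For the $\eta$-preservation, note that since $A$ is a Lorentz transformation it lies in $O(1,3)$, so by Lemma \ref{L:trans} its transpose $A^T$ also lies in $O(1,3)$; because $O(1,3)$ is closed under composition, the product $AA^T$ preserves $\eta$.

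Next I would identify the polar factor $P$ as a real power of $AA^T$ and apply Lemma \ref{L:Lroots}. By Proposition \ref{P:polar}, $P = \sqrt{AA^T}$ is the unique positive definite square root of $AA^T$, which by the discussion following Proposition \ref{P:exp} coincides with the power $(AA^T)^{1/2}$ in the sense of Section 3.2. Invoking Lemma \ref{L:Lroots} with $AA^T$ in the role of the positive definite, $\eta$-preserving matrix, we conclude that all of its real powers are Lorentz transformations; in particular $P = (AA^T)^{1/2}$ is a Lorentz transformation. Finally I would recover $U = P^{-1}A$: since $P$ and $A$ both lie in the group $O^+(1,3)$, and a group is closed under inverses and products, $U = P^{-1}A$ is again a Lorentz transformation, completing the proof.

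There is no genuine analytic obstacle here, since the substantive work has been front-loaded into Lemmas \ref{L:Lroots} and \ref{L:trans}. The one point demanding care is the orthochronous (time-preserving) condition: it is not enough to check that $P$ and $U$ preserve $\eta$; I must confirm that they lie in $O^+(1,3)$. For $P$ this is delivered by Lemma \ref{L:Lroots}, whose Intermediate Value Theorem argument already yields orthochronicity of every power. For $U$ it follows from closure of $O^+(1,3)$ under inverse and composition, which in turn requires knowing that $P^{-1}$ is orthochronous — again just group closure. Keeping this distinction between $O(1,3)$ and $O^+(1,3)$ straight is the only spot where the argument could slip.
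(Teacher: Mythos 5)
Your proposal is correct and follows essentially the same route as the paper's own proof: apply Lemma \ref{L:trans} and closure of $O(1,3)$ under products to get $AA^T\in O(1,3)$, invoke Lemma \ref{L:Lroots} to conclude $P=(AA^T)^{1/2}$ is a Lorentz transformation, and then obtain $U=P^{-1}A$ by group closure in $O^+(1,3)$. Your added remark that Lemma \ref{L:Lroots} is what secures the orthochronous condition for $P$ is exactly the role it plays in the paper's argument.
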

\begin{proof}
By the preceding lemma, $A^T\in O(1,3)$, so $AA^T\in O(1,3)$.  Since $AA^T$ is positive
definite, it follows from Lemma \ref{L:Lroots} that it  and $P=(AA^T)^{1/2}$ (from Proposition
\ref{P:polar}) are Lorentz transformations.  Thus $U=P^{-1}A$ is a Lorentz transformation.
\end{proof}

We close this section by characterizing those orthogonal matrices that are Lorentz
transformations.
\begin{proposition}\label{P:orthfact}  An orthogonal matrix $U$ is a Lorentz transformation  iff it has the block form
\[  
U= \left[\begin{matrix} 1 & 0\\
0 & S
\end{matrix}\right],
\]
where $S\in O(3)$, i.e., is an orthogonal transformation on $\R^3$.
\end{proposition}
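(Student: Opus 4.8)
The plan is to prove the two directions separately, with the forward (necessity) direction carrying essentially all of the work. For sufficiency, suppose $U$ has the stated block form with $S\in O(3)$. Then $U$ is manifestly orthogonal, it leaves the time coordinate fixed, and it acts as an orthogonal transformation on the space coordinates; hence by Problem \ref{P:3.1} it is a Lorentz transformation. This direction needs no further argument.

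For necessity, the key idea is to exploit \emph{both} structures that $U$ carries at once: it is orthogonal, so $U^TU=I$, and it preserves $\eta$, so by Proposition \ref{P:eta} we have $U^TI_{1,3}U=I_{1,3}$. I would write $U$ in block form
\[
U=\left[\begin{matrix} a & \b^T\\ \c & S\end{matrix}\right],\qquad a\in\R,\ \b,\c\in\R^3,\ S\in\R^{3\times 3},
\]
and expand each of the two matrix identities blockwise. The two top-left scalar entries give $a^2+\vert\c\vert^2=1$ (from orthogonality) and $a^2-\vert\c\vert^2=1$ (from $\eta$-preservation); adding and subtracting these forces $a^2=1$ and $\c=0$. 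The orthochronicity hypothesis then selects the correct sign: since $U$ carries the time unit vector $e_0=(1,0,0,0)^T$ to its first column $(a,\c)=(a,0)$, whose $t$-component must be positive, we conclude $a=1$.

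With $a=1$ and $\c=0$ in hand, the remaining blocks collapse quickly. The off-diagonal block of $U^TU=I$ reads $a\b+S^T\c=0$, whence $\b=0$; then the bottom-right block $\b\b^T+S^TS=I$ reduces to $S^TS=I$, that is, $S\in O(3)$. Assembling these facts yields exactly the asserted block form, completing the proof.

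I do not expect a serious obstacle here, as the content is linear-algebra bookkeeping; the one point that must not be overlooked is that neither hypothesis alone suffices. A pure Lorentz boost preserves $\eta$ but is not orthogonal, while a generic element of $O(4)$ is orthogonal but does not preserve $\eta$, so the argument must genuinely combine the two quadratic forms, and orthochronicity is essential to discard the spurious solution $a=-1$. As a cleaner alternative packaging of the same computation, one may write $I_{1,3}=I-2e_0e_0^T$; then the two conditions combine to give $(U^Te_0)(U^Te_0)^T=e_0e_0^T$, forcing $U^Te_0=\pm e_0$, and symmetrically (using $UI_{1,3}U^T=I_{1,3}$, valid since $U^{-1}=U^T$ also preserves $\eta$ by Lemma \ref{L:trans}) $Ue_0=\pm e_0$, after which orthochronicity fixes both signs and pins down the first row and column of $U$.
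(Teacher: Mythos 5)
Your proof is correct, and both directions are handled soundly; citing Problem \ref{P:3.1} for sufficiency is exactly right. For necessity, though, your route differs in execution from the paper's. The paper observes that, since $U^{-1}=U^T$, Proposition \ref{P:eta} converts $\eta$-preservation into the commutation relation $I_{1,3}U=UI_{1,3}$; comparing the off-diagonal blocks of the two sides (they are negatives of each other) forces those blocks to vanish, so $U$ is block diagonal in one stroke, after which orthogonality and orthochronicity give the $1,1$-entry equal to $1$ and $S\in O(3)$. You instead keep the two quadratic identities $U^TU=I$ and $U^TI_{1,3}U=I_{1,3}$ separate, expand both blockwise, and add/subtract their $(1,1)$-entries to get $a^2=1$ and $\c=0$ before cleaning up the remaining blocks. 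Both arguments consume exactly the same hypotheses, and both invoke orthochronicity at the same point and for the same reason (to exclude the spurious solution $a=-1$, as you correctly emphasize); the difference is that the paper's commutation trick kills the off-diagonal blocks in one line, whereas your version requires the full expansion of two Gram-type products. Notably, your closing ``alternative packaging'' --- writing $I_{1,3}=I-2e_0e_0^T$ so that the two conditions combine to $(U^Te_0)(U^Te_0)^T=e_0e_0^T$, forcing $U^Te_0=\pm e_0$ --- is the variant closest in spirit to the paper's proof, since commuting with $I_{1,3}$ is precisely the statement that $U$ preserves the eigenspace decomposition $\R e_0\oplus e_0^\perp$.
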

\begin{proof}  
The proof follows directly from the equation $U^TI_{1,3}U=I_{1,3}$, or equivalently,
$I_{1,3}U=UI_{1,3}$.  The $1,2$-blocks on each side of the latter equation are negatives
of each other and hence must be $0$, and ditto for the $2,1$-blocks. Hence
$U$ is block diagonal with each block having its transpose for its inverse.
Since $U$ is a Lorentz transformation, it follows that  the $1,1$-entry must be positive and its
own inverse, hence $1$.  The block $S$ must also have its transpose being its inverse,
hence must be orthogonal.
\end{proof}  

\subsection{Positive definite Lorentz transformations}
The goal of this section to to show that the positive definite Lorentz transformations are 
precisely the Lorentz boosts.  It follows from Problems \ref{P:2.12} and \ref{P:3.1}
that Lorentz boosts are symmetric Lorentz transformations, so we need to establish
the converse.  

We write an arbitrary positive definite Lorentz transformation in the form
\[  
A= \left[\begin{matrix} \tau &\ \x^T\\
\x & S
\end{matrix}\right],
\]
where $\tau$ is a positive scalar (since $A$ is a Lorentz transformation), $S$ is a $3\times 3$ symmetric
matrix (since $A$ is symmetric), and $\x\in\R^3$ is a column vector (by the symmetric property its transpose 
row vector $\x^T$ must appear after $\tau$ as the remainder of the first row).  Since $A$ is positive definite,
we have for $0\ne\y\in\R^3$,
$$0<\left[\begin{matrix} 0 & \y^T\end{matrix}\right]  \left[\begin{matrix} \tau &\ \x^T\\
\x & S\end{matrix}\right] \left[\begin{matrix} 0 \\ \y\end{matrix}\right]=\y^TS \y,$$
from which we conclude that $S$ is positive definite.

We recall from 
Proposition \ref{P:eta}  (and the fact that $A$ is symmetric) that
\[
\left[\begin{matrix} \tau &\ \x^T\\ \x & S \end{matrix}\right]
\left[\begin{matrix} -1& \mathbf{0}\\ \mathbf{0} & I\end{matrix}\right]
\left[\begin{matrix} \tau &\ \x^T\\ \x & S \end{matrix}\right]
= \left[\begin{matrix} -1& \mathbf{0}\\ \mathbf{0} & I\end{matrix}\right]
\]
If we multiply out the left-hand side and set the $1,1$-entries equal on both
sides of the equation, we obtain $-\tau^2+\x^T\x= -1$, which implies
$\tau=\sqrt{1+\x^T\x}$.  We set
$$\v:=\frac{c}{\sqrt{1+\x^T\x}}\x.$$
Then
$$\frac{\vert \v\vert^2}{c^2}=\frac{\x^T\x}{1+\x^T\x}<1;\mbox{ hence } \vert \v\vert^2<c^2,\mbox{ i.e., } \vert \v\vert <c.$$
We next compute
\begin{equation}
\gamma=\gamma_\v=\frac{1}{\sqrt{1-\frac{\v^T\v}{c^2}}}=\frac{1}{\sqrt{1-\frac{\x^T\x}{1+\x^T\x}}}=\sqrt{1+\x^T\x}=\tau.
\end{equation}
Thus we obtain
\[
A=\left[\begin{matrix} \gamma_\v&\frac{\gamma _\v \v^T}{c}\\
\frac{\gamma_\v \v}{c} &S
\end{matrix}\right]
\]

By equating the $2,2$-entries in the equation $AI_{1,3}A=I_{1,3}$, we obtain
the following string of equivalent equalities:
\begin{eqnarray}\label{E:Sdef}
-\frac{\gamma^2}{c^2}\v\v^T+S^2 &=& I\nonumber\\
S^2 &=& I+\frac{\gamma^2}{c^2} \v\v^T\nonumber\\
S &=& \sqrt{I + \frac{\gamma^2}{c^2}\v\v^T}.
\end{eqnarray}
The last equation follows from the fact that $S$ is positive definite, so $S^2$ is, and thus has unique positive square root $S$
(Proposition \ref{P:sqrt}).

We record the following useful identity, which can be directly verified from the definition of $\gamma_\v$, and which we use in the last
step of the following calculation.
\begin{equation}\label{E:useful}
\gamma_{\v}^2-1=\frac{\vert \v\vert^2}{c^2}\gamma_{\v}^2.
\end{equation}
We now calculate
\begin{eqnarray*}
\bigg(I+\frac{\gamma_\v -1}{\vert \v\vert^2}\v\v^T\biggl)^2 &=& I+2\frac{\gamma_\v-1}{\vert\v\vert^2}\v\v^T+\frac{(\gamma_\v-1)^2}{\vert \v\vert^4}\v\v^T\v\v^T\\
&=& I+\frac{2\gamma_\v-2}{\vert\v\vert^2}\v\v^T+\frac{\gamma_{\v}^2-2\gamma_\v+1}{\vert \v\vert^2}\v\v^T\\
&=& I+\frac{\gamma_\v^2-1}{\vert\v\vert^2}\v\v^T\\
&=& I+\frac{\gamma_\v^2}{c^2}\v\v^T
\end{eqnarray*}
It follows from this calculation and Equation (\ref{E:Sdef}) that 
\[
A= \left[\begin{matrix} \gamma_\v&\frac{\gamma _\v \v^T}{c}\\
\frac{\gamma_\v \v}{c} &S
\end{matrix}\right]=\left[\begin{matrix} \gamma_\v&\frac{\gamma _\v \v^T}{c}\\
\frac{\gamma_\v \v}{c} & I+\frac{\gamma_\v-1}{\vert \v\vert^2}\v\v^T
\end{matrix}\right]
\]
We have thus established the
\begin{proposition}\label{P:eqboost}
A positive definite Lorentz transformation is a Lorentz boost.  
\end{proposition}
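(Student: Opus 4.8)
The plan is to write an arbitrary positive definite Lorentz transformation $A$ in block form and then match it, entry by entry, against the boost matrix $B(\v)$ from Problem~\ref{P:2.12}. Because $A$ is a Lorentz transformation it carries the time direction into the forward cone, so its $1,1$-entry $\tau$ is positive; because $A$ is symmetric, the off-diagonal blocks must be a column $\x\in\R^3$ and its transpose, while the lower-right block $S$ is a symmetric $3\times 3$ matrix. Testing positive definiteness of $A$ on vectors of the form $(0,\y)^T$ forces $\y^TS\y>0$, so $S$ is itself positive definite — a fact I will need at the very end to invoke uniqueness of square roots.

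The engine of the argument is the relation $AI_{1,3}A=I_{1,3}$ from Proposition~\ref{P:eta}, which applies here since $A=A^T$. First I would equate the $1,1$-entries of both sides to get $-\tau^2+\x^T\x=-1$, hence $\tau=\sqrt{1+\x^T\x}$. This motivates the definition $\v:=\frac{c}{\sqrt{1+\x^T\x}}\,\x$; a one-line estimate gives $|\v|<c$, and a direct computation of the Lorentz factor yields $\gamma_\v=\sqrt{1+\x^T\x}=\tau$, which in turn gives $\x=\frac{\gamma_\v\v}{c}$. At this point the entire first row and first column of $A$ already coincide with those of $B(\v)$, so the whole content of the proposition collapses to identifying the block $S$.

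To pin down $S$, I would equate the $2,2$-blocks of $AI_{1,3}A=I_{1,3}$, obtaining $S^2=I+\frac{\gamma_\v^2}{c^2}\v\v^T$; since $S$ is positive definite, Proposition~\ref{P:sqrt} identifies $S$ as the \emph{unique} positive definite square root of the right-hand side. The key computation — and the one step where care is genuinely required — is verifying that the boost's lower block $I+\frac{\gamma_\v-1}{|\v|^2}\v\v^T$ squares to the very same matrix. This is a rank-one algebra exercise: expanding the square and using $\v\v^T\v\v^T=|\v|^2\v\v^T$ collapses the cross terms, after which the identity $\gamma_\v^2-1=\frac{|\v|^2}{c^2}\gamma_\v^2$ converts the coefficient into $\gamma_\v^2/c^2$. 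Since this block is also positive definite, uniqueness of the positive definite square root forces $S=I+\frac{\gamma_\v-1}{|\v|^2}\v\v^T$, and hence $A=B(\v)$ exactly. I anticipate no serious obstacle, since the boost form was essentially reverse-engineered to make this matching succeed; the only subtlety is remembering that positive definiteness of \emph{both} candidate blocks is what licenses passing from the equality of squares to the equality of the matrices themselves.
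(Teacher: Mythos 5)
Your proposal is correct and follows essentially the same route as the paper's own proof: the same block decomposition of $A$, the same use of Proposition \ref{P:eta} to extract $\tau=\sqrt{1+\x^T\x}$ and $S^2=I+\frac{\gamma_\v^2}{c^2}\v\v^T$, and the same rank-one computation showing the boost block squares to the same matrix. Your explicit remark that \emph{both} candidate blocks must be positive definite before invoking uniqueness of square roots is a point the paper leaves implicit, but it is the identical argument.
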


\section{Einstein velocity addition}
The notion of velocity addition arises in at least two obvious contexts in special relativity.  The first is the problem of finding the velocity of an object
in a frame $S'$, given that it is moving at some constant velocity in a reference frame $S$.  The second is the problem of finding the velocity at
which $S''$ is moving with respect to $S$, given the knowledge of how fast $S'$ is moving with respect to $S$ and $S''$ is moving with respect to $S'$.
The problems are more-or-less interchangeable since one can pass from an object to a frame in which it is stationary, and from a frame to an object
stationary in that frame.  There is the caveat, however, that for an object moving at constant velocity in one frame, there are more than one such frames
in which it is stationary  (for any such frame, consider the new frame obtained by a rotation of its space coordinates).  

We suppose first that the frame $S'$ is moving with velocity $v$ in the $x$-direction in the reference frame $S$.  We set 
$\gamma=\gamma_v=(1-v^2/c^2)^{-1/2}$.  We then have the Lorentz boost given by the equations
$$t=\gamma(t'+vx'/c^2),\qquad x=\gamma(x'+vt'),\qquad y=y',\qquad z=z'.$$
Suppose that an object has velocity components $u_x',u_y',u_z'$ as measured in frame $S'$.  This means by definition that
$$u_x'=\frac{dx'}{dt'},\qquad u_y'=\frac{dy'}{dt'},\qquad u_z'=\frac{dz'}{dt'}.$$
Differentiating the equations of the Lorentz boost, we obtain
\begin{eqnarray}\label{E:Lvel}
u_x &=& \frac{dx}{dt}=\frac{dx/dt'}{dt/dt'}=\frac{\gamma(dx'/dt'+v)}{\gamma(1+v(dx'/dt')/c^2)}=\frac{u_x'+v}{1+vu_x'/c^2}\nonumber\\
u_y&=&\frac{dy}{dt}=\frac{dy/dt'}{dt/dt'}=\frac{u_y'}{\gamma(1+vu_x'/c^2}=\frac{u_y'/\gamma}{1+vu_x'/c^2}\\
u_z&=&\frac{dz}{dt}=\frac{dz/dt'}{dt/dt'}=\frac{u_z'}{\gamma(1+vu_x'/c^2}=\frac{u_z'/\gamma}{1+vu_x'/c^2}.\nonumber
\end{eqnarray}

\begin{problem} Suppose that $v=u_x'=0.5c$ and $u_y'=u_z'=0$.  What is $u_x$?  What would it be in Newtonian mechanics?
\end{problem}

In a completely analogous way one obtains
\begin{eqnarray}\label{E:Lvel2}
u_x'=\frac{u_x-v}{1-vu_x/c^2}, \qquad u_y'=\frac{u_y/\gamma}{1-vu_x/c^2},\qquad u_z'=\frac{u_z/\gamma}{1-vu_x/c^2}.
\end{eqnarray}

\subsection{The one-dimensional case}
The simplest case to consider is the case that the velocity $\v$ and the velocity $\u$ are both in the direction of the $x$-axis.   
We consider a frame $S'$ moving with velocity $v$ along the $x$-axis with respect to a frame $S$ and a frame $S''$ moving with
velocity $u$ along the $x$-axis with respect to $S'$.  We assume that all three frames share a common origin.  We may assume that
$S''$ is the frame at which some object moving with velocity $u$ in $S'$ is at rest.  Then by equation (\ref{E:Lvel}) we have that
$S''$ is moving with respect to $S$ with velocity $w:=(v+u)/(1+vu/c^2)$.  We therefore define the Einstein velocity addition of
$v$ and $u$ for $u,v\in\R_c=\{u\in\R:\vert u\vert<c\}$ by
\begin{equation}
v\oplus u= \frac{v+u}{1+\frac{vu}{c^2}}.
\end{equation}
Note that the formula gives the velocity $v\oplus u$ with which a third frame or object is traveling with respect to a first frame or object,
given that a second frame or object is traveling with velocity $v$ with respect to a first frame or object, and a third
with velocity $u$ with respect to the second.
\begin{problem}  Suppose that rockets A and B are speeding toward each other at speeds $0.8c$ for A and $0.9c$ for B, both
speeds calculated in reference frame $S$.  Suppose that rocket A fires a missile toward rocket B at a velocity $0.7c$ with
respect to the frame of A.  How fast is the missile traveling in the original reference frame and in the reference frame of B?
\end{problem}

\begin{problem} A $K^\circ$ meson at rest decays into a $\pi^+$ meson and a $\pi^-$ meson, each having a speed of 0.85c.  If a $K^\circ$
 meson traveling at a speed of 0.9c in frame $S$ decays, what is the greatest  speed that one of the $\pi$ mesons can have
(again in $S$)?  What is the least speed.?
\end{problem}

\begin{problem} Show that the map $f:\R\to \R_c$ defined by $f(x)=c\tanh(x)$ is an isomorphism from $(\R,+)$ to $(\R_c,\oplus)$.
\end{problem}

\subsection{A general definition of velocity addition}
In this section we turn to the general definition of Einstein velocity addition for velocities in the open ball $\R_c^3$ of radius $c$ in $\R^3$.
Recall that the world line of an object A moving with respect to a reference frame $S'$ with constant velocity $\v$ is given by
$$\biggl\{\left[\begin{matrix} ct'\\ t'\v \end{matrix}\right]: t'\in\R\biggr\}.$$
We suppose further that the frame $S'$ is moving at constant velocity $\u$ with respect to the frame $S$.  We can calculate the equation
of the world line of A in frame $S$ via the Lorentz boost for change of coordinates from $S'$ to $S$:
\begin{eqnarray*}
\left[\begin{matrix} \gamma_\u&\frac{\gamma _\u \u^T}{c}\\
\frac{\gamma_\u \u}{c} &I+\frac{\gamma_\u-1}{\vert \u\vert^2}\u\u^T
\end{matrix}\right] \left[\begin{matrix} ct'\\ t'\v \end{matrix}\right] &=&
\left[\begin{matrix} \gamma_\u\big(ct'+\frac{t'\u^T\v}{c}\big)\\ \gamma_\u t'\u+t'\v+\frac{\gamma_\u-1}{\vert\u \vert^2}t'\u\u^T\v \end{matrix}\right]\\
&=& \left[\begin{matrix} ct'\big(\gamma_\u\big(1+\frac{\u^T\v}{c^2}\big)\big]\\ t'\big(\gamma_\u \u+\v+\frac{\gamma_\u-1}{\vert\u \vert^2}(\u^T\v)\u\big) \end{matrix}\right]
\end{eqnarray*}
We conclude that the image of the world line of A under the Lorentz boost is the world line
\begin{equation*}
\bigg\{\left[\begin{matrix} ct\\ t(\u\oplus\v)\end{matrix}\right]:t\in \R\bigg\}
\end{equation*}
where
\begin{equation*}
t=\gamma_\u(1+\frac{\u^T\v}{c^2})t'
\end{equation*}
and
\begin{equation}\label{E:einplus}
\u\oplus\v:=\frac{1}{1+\frac{\u^T\v}{c^2}}\bigg(\u+\frac{1}{\gamma_\u}\v+\frac{\gamma_\u-1}{\gamma_\u\vert\u\vert^2}(\u^T\v)\u\bigg)
\end{equation}
Using the equation $c^2(\gamma^2-1)=\gamma^2\vert\u\vert^2$ (equation (\ref{E:useful})), where $\gamma=\gamma_\u$, we note that
$$\frac{c^2(1+\gamma)}{\gamma}=\frac{c^2(1+\gamma)}{\gamma}\cdot\frac{\gamma-1}{\gamma-1}=\frac{c^2(\gamma^2-1)}{\gamma(\gamma-1)}
=\frac{\gamma^2\vert\u\vert^2}{\gamma(\gamma-1)}=\frac{\gamma\vert\u\vert^2}{\gamma-1}.$$
Inverting, we conclude 
\begin{equation}
\frac{\gamma_\u}{c^2(1+\gamma_\u)}=\frac{\gamma_\u-1}{\gamma_\u\vert\u\vert^2}.
\end{equation}
This allows us to rewrite the definition of Einstein velocity addition in the form 
\begin{equation}\label{E:einplusb}
\u\oplus\v:=\frac{1}{1+\frac{\u\cdot\v}{c^2}}\bigg(\u+\frac{1}{\gamma_\u}\v+\frac{\gamma_\u}{c^2(1+\gamma_\u)}(\u\cdot\v)\u\bigg)
\end{equation}

In the case that $\u$ and $\v$ are parallel (i.e., one is a scalar multiple of the other), Einstein addition reduces to 
\begin{equation}\label{E:einplusc}
\u\oplus\v=\frac{\u+\v}{1+\frac{\u\cdot\v}{c^2}}
\end{equation}
\begin{problem}
Prove equation (\ref{E:einplusc}). (Hint: Let $\u=r\w$, $\v=s\w$ for some $\w$, apply equation (\ref{E:einplus}) and reduce.)
\end{problem}
\begin{problem}  Alternatively prove for $-c<r,s<c$ that $$\frac{r\u}{\vert\u\vert}\oplus\frac{s\u}{\vert \u\vert}=\frac{(r+s)\u}{(1+\frac{rs}{c^2})\vert\u \vert}.$$
\end{problem}
\begin{problem} \label{oneparam}
Prove that $F:(\R,+)\to(\R_c(\x/\vert\x\vert),\oplus)$ defined by $F(x)=c\tanh(x)$ is an isomorphism.  Hence $\oplus$ restricted to any $\R_c\x$ is a group operation
isomorphic to $(\R,+)$.
\end{problem}

\begin{problem}  Show that if $\u$ and $\v$ are orthogonal, then $\u\oplus\v=\u+\gamma_\u^{-1}\v$.  In particular, the operation
 $\oplus$ is not commutative.
\end{problem}

\subsection{Einstein addition and Lorentz boosts}
This is a very close and useful connection between Einstein velocity addition, Lorentz boosts, and polar decompositions, which we
develop in this section.
\begin{proposition}\label{P:boostcomp}
For $\u,\v\in \R_c^3$, $B(\u)B(\v)=B(\u\oplus\v)h(\u,\v)$, where the right hand side is the polar decomposition of the
left hand side in the Lorentz  group $O^+(1,3)$.
\end{proposition}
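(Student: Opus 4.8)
The plan is to recognize $A := B(\u)B(\v)$ as a Lorentz transformation and then read off its polar factors using the structural results of Section 3. Since $B(\u)$ and $B(\v)$ are Lorentz transformations (Problems \ref{P:2.12} and \ref{P:3.1}) and the Lorentz group is closed under composition, $A\in O^+(1,3)$. By Proposition \ref{P:polar} it has a unique polar decomposition $A=PU$ with $P$ positive definite and $U$ orthogonal, and by Proposition \ref{P:Ldecomp} both $P$ and $U$ are themselves Lorentz transformations. Proposition \ref{P:eqboost} then forces the positive definite factor to be a boost, say $P=B(\w)$ for some $\w\in\R_c^3$, while Proposition \ref{P:orthfact} gives $U$ the block form $\left[\begin{smallmatrix} 1 & 0\\ 0 & S\end{smallmatrix}\right]$ with $S\in O(3)$. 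It then remains only to identify $\w$ with $\u\oplus\v$ and to name $h(\u,\v):=U$.

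The identification comes from testing both sides on the time vector $\mathbf t$. Because $U$ fixes $\mathbf t$ (its first column is $(1,0,0,0)^T$), we have $A\mathbf t=P\mathbf t=B(\w)\mathbf t$. On one hand, applying the matrix form (\ref{E:boost2}) of a boost to $\mathbf t$ shows $B(\w)\mathbf t=\gamma_\w\left[\begin{smallmatrix} 1\\ \w/c\end{smallmatrix}\right]$, whose encoded velocity (spatial part divided by time part, times $c$) is exactly $\w$. On the other hand, $\mathbf t$ spans the world line of an object at rest, $B(\v)\mathbf t$ spans the world line of an object moving with velocity $\v$, and the computation already carried out in deriving (\ref{E:einplus}) shows that applying $B(\u)$ to that world line yields the world line of velocity $\u\oplus\v$; equivalently $A\mathbf t$ is a positive multiple of $\left[\begin{smallmatrix} 1\\ (\u\oplus\v)/c\end{smallmatrix}\right]$, so its encoded velocity is $\u\oplus\v$. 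Since the velocity of a future-pointing time-like vector is well-defined, comparing the two expressions for $A\mathbf t$ gives $\w=\u\oplus\v$, and hence $P=B(\u\oplus\v)$.

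Setting $h(\u,\v):=U=B(\u\oplus\v)^{-1}B(\u)B(\v)$ then exhibits $B(\u)B(\v)=B(\u\oplus\v)\,h(\u,\v)$ as the (unique) polar decomposition, completing the argument. I expect essentially all of the content to be assembled from the earlier propositions; the only point requiring genuine care is the velocity bookkeeping in the second paragraph, namely verifying that the \emph{direction} of $A\mathbf t$ records precisely the Einstein sum $\u\oplus\v$ rather than some rescaling of it. This is why reducing to the already-computed world-line image of (\ref{E:einplus}) is the crux: it lets us avoid recomputing the product $B(\u)B(\v)$ entry-by-entry and instead pin down $\w$ by its geometric action on a single vector.
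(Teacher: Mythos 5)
Your proof is correct and takes essentially the same route as the paper's: both arguments use the structural results (Propositions \ref{P:polar}, \ref{P:Ldecomp}, \ref{P:eqboost}, \ref{P:orthfact}) to see that the positive definite polar factor is a boost $B(\w)$ whose first column equals the first column of $B(\u)B(\v)$ (which is exactly the action on $\mathbf t$), and then read off $\w=\u\oplus\v$ from that column. The only difference is presentational: the paper recomputes the first column of the matrix product explicitly, while you cite the world-line computation that derived (\ref{E:einplus}) --- the same calculation, referenced rather than repeated.
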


\begin{proof}
 We calculate that
\begin{eqnarray*}
 B(\u)B(\v)&=& \left[\begin{matrix} \gamma_\u&\frac{\gamma _\u \u^T}{c}\\
\frac{\gamma_\u \u}{c} &I+\frac{\gamma_\u-1}{\vert\u\vert^2}\u\u^T
\end{matrix}\right]\left[\begin{matrix} \gamma_\v&\frac{\gamma _\v \v^T}{c}\\
\frac{\gamma_\v \v}{c} &\frac{\gamma_\v-1}{\vert\v\vert^2}\v\v^T
\end{matrix}\right]\\
&=&  \left[\begin{matrix}\gamma_\u \gamma_\v(1+\u^T\v/c^2)& *\\
\frac{\gamma_\u\gamma_\v \u}{c}+\frac{\gamma_\v \v}{c}+\frac{(\gamma_\u-1)\gamma_\v}{c\vert\u\vert^2}\u\u^T\v &*
\end{matrix}\right]
\end{eqnarray*}
We note from Proposition \ref{P:orthfact}  that the left column of the product must be the left column of the Lorentz boost
$P$ in the polar factorization $P\Theta$ of $B(\u)B(\v)$.   Since a Lorentz boost matrix  $P= \left[\begin{matrix} \gamma&\x^T\\
\x &S \end{matrix}\right]$ is  the Lorentz boost for the vector $\w=(c/\gamma)\x$, we conclude that the Lorentz boost $P$ in
the polar factorization of $B(\u)B(\v)$ is the Lorentz boost for the vector
\begin{eqnarray*}
 \frac{c}{\gamma_\u \gamma_\v(1+\frac{\u^T\v}{c^2})}\bigg(\frac{\gamma_\u\gamma_\v \u}{c} &+& \frac{\gamma_\v \v}{c}+\frac{(\gamma_\u-1)\gamma_\v}{c\vert\u\vert^2}\u\u^T\v\bigg)\\
 &=& \frac{1}{1+\frac{\u\cdot\v}{c^2}}(\u+\frac{1}{\gamma_\u}\v+\frac{\gamma_{\u} -1}{\gamma_\u\vert\u\vert^2}(\u\cdot \v)\u)\\
&=& \u\oplus \v.
\end{eqnarray*}
Thus $B(\u)B(\v)=P\Theta=B(\u\oplus\v)h(\u,\v)$, where we define $h(\u,\v)$ to be $\Theta$.
\end{proof}
\begin{problem}
Show that if $P= \left[\begin{matrix} \gamma& *\\ \x &* \end{matrix}\right]$ is a Lorentz boost, then $P=B(\w)$ for $\w=(c/\gamma)\x$
and $\gamma_\w=\gamma$.
\end{problem}
\begin{problem}\label{gammaform}
Argue from the proof of Proposition \ref{P:boostcomp} and the preceding problem that
\begin{equation}\label{E:gammaplus}
\gamma_{\u\oplus\v}=\gamma_\u \gamma_\v\Big(1+\frac{\u\cdot\v}{c^2}\Big).
\end{equation}
\end{problem}
\newpage

\section{Gyrogroups}
\begin{definition} (Groupoids or Magmas, and Automorphism Groups of Groupoids) 
A \emph{groupoid} or a \emph{magma}  is a nonempty set with a binary operation. An 
automorphism of the groupoid $(S, *)$ is a bijection of $S$ that respects the binary 
operation $*$ in $S$ . The set of all automorphisms of $(S, *)$ forms a group under
composition, denoted by Aut$(S,* )$. 
\end{definition}
An important subcategory of the category of groupoids is the category of loops.
\begin{definition} (Loops) A \emph{loop} is a magma $(S,\cdot)$ with an identity element in 
which each of the two equations $a \cdot x = b$ and $y\cdot a = b$ with unknowns $x$ and $y$ 
possesses a unique solution. As customary, we frequently denote the product
$a\cdot b$ by juxtaposition $ab$.  
\end{definition}

\begin{problem} Show that $(G,\cdot)$ is a group iff it is an associative loop.
\end{problem}

Being nonassociative, the Einstein velocity addition on the set of relativistically admissible 
velocities in the special theory of relativity is not a group operation. However,
it does possess group-like properties that  have been axiomatized by A.\ A.\ Ungar
as structures called ``gyrogroups," and studied in detail in his book 
\emph{Analytic Hyperbolic Geometry and Albert Einstein's Special Theory of Relativity}.
The gyrogroup concept abstracts both
Einstein's velocity addition and the corresponding Thomas
precession. The abstract Thomas precession is called the Thomas gyration
and suggests the prefix ``gyro'' for many of the concepts of the theory. 
 \begin{definition}  (Gyrogroups) The magma $(G,\oplus)$ is a 
\emph{gyrogroup} if its binary operation satisfies the following axioms. 
\begin{itemize}
\item[($\gamma$1)] There exists in $G$ some 
element, $0$, called a left identity, satisfying for all $a\in G$:
\begin{equation}
0 \oplus a = a \tag{Left Identity} 
\end{equation}
\item[($\gamma$2)] For each $a\in G$ there is an $x\in G$, called a left inverse of a, satisfying 
\begin{equation}
x\oplus a = 0 \tag{Left Inverse}
\end{equation} 
\item[($\gamma$3)] For any $a, b, z \in G$ there exists a unique element $\gyr[a, b]z \in G$ such 
that 
\begin{equation}
 a \oplus (b \oplus z) = (a \oplus b) \oplus \gyr[a, b] z \tag{Left Gyroassociative Law}
 \end{equation} 
\item[($\gamma$4)] If $\gyr[a, b]$ denotes the map $\gyr[a, b] : G \to G$ given 
by $z\mapsto  \gyr[a, b]z$ then 
\begin{equation}
\gyr[a, b] \in Aut(G, \oplus) \tag{Gyroautomorphism}
\end{equation}
and $\gyr[a, b]$ is called the \emph{Thomas gyration}, or the \emph{gyroautomorphism} of $G$, 
generated by $a, b \in G$. 
\item[($\gamma$5)] The gyroautomorphism $\gyr[a, b]$ generated by any 
$a, b \in G$ satisfies \begin{equation}
\gyr[a, b] = \gyr[a \oplus b, b] \tag{Left Loop Property} 
\end{equation}
\end{itemize}
\end{definition}

\begin{problem} For a gyrogroup $(G,\oplus)$ establish the following properties.
\begin{itemize}
\item[(1)] $a\oplus b=a\oplus c\Rightarrow b=c$ (left cancellation).
\item[(2)] $\gyr[0,a]=I$, the identity map on $G$.
\item[(3)] $\gyr[x,a]=I$ if $x\oplus a=0$.
\item[(4)] $\gyr[a,a]=I$.
\item[(5)] $a\oplus 0=a$, i.e., $0$ is an identity.
\item[(6)] There is only one left identity.
\item[(7)] Every left inverse is a right inverse.
\item[(8)] The left inverse, denoted $\ominus a$, is unique, and $\ominus(\ominus a)=a$.
\item[(9)] $\ominus a\oplus(a\oplus b)=b$.
\item[(10)] $\gyr[a,b]x=\ominus(a\oplus b)\oplus\big(a\oplus(b\oplus x)\big)$.
\item[(11)] $\gyr[a,b]0=0$. 
\item[(12)] $\gyr[a,b](\ominus x)=\ominus\gyr[a,b] x$.
\item[(13)] $\gyr[a,0]=\gyr[0,b]=I$.
\end{itemize}
\end{problem}

The preceding list of axioms is minimal in nature.  We typically work with the more extensive, but equivalent,
set of axioms.
 \begin{definition}\label{D:gyro2} (Gyrogroups-Alternative Definition) The magma $(G,\oplus)$ is a 
\emph{gyrogroup} if its binary operation satisfies the following axioms. 
\begin{itemize}
\item[(G1)] There exists in $G$ a unique identity element 
element $0$ satisfying for all $a\in G$:
\begin{equation}
0 \oplus a =a\oplus 0= a \tag{Identity} 
\end{equation}
\item[(G2)] For each $a\in G$, there exists a unique inverse  $\ominus a\in G$ satisfying 
\begin{equation}
\ominus a\oplus a = a\oplus (\ominus a)=0 \tag{Inverse}
\end{equation} 
\end{itemize}
For all $a,b\in G$, the map $\gyr[a, b]$ of $G$ into itself given by the 
equation 
\begin{equation}\label{E:GyrId}
\gyr[a, b]z = \ominus(a\oplus  b)\oplus(a\oplus  (b\oplus z)) 
\end{equation}
for all  $z \in G$, satisfies the following axioms: 
\begin{itemize}
\item[(G3)]  $\gyr[a, b]\in\mbox{Aut}(G, \oplus)$, the  gyroautomorphism group.
\item[(G4)] 
\begin{equation} a \oplus (b \oplus c) =(a \oplus b) \oplus \gyr[a, b]c \tag{Left Gyroassociative Law} 
\end{equation}
\begin{equation}
(a \oplus b) \oplus c = a \oplus (b \oplus \gyr[b, a]c) \tag{Right Gyroassociative Law} 
\end{equation}
\item[(G5)]
\begin{equation}
\gyr[a, b]= \gyr[a \oplus b, b] \tag{Left Loop Property}
\end{equation}
\begin{equation}
\gyr[a, b] = \gyr[a, b \oplus a] \tag{Right Loop Property}
\end{equation} 
\item[(G6)] A gyrogroup is called \emph{gyrocommutative} if it satisfies
\begin{equation}
a\oplus b=\gyr[a,b](b\oplus a)\tag{Gyrocommutative Law}
\end{equation}
\end{itemize}
\end{definition}

\begin{problem} Derive the second set of axioms (except gyrocommutativity) for a gyrogroup from the first set.
\end{problem}

\subsection{Involutive groups and gyrogroups}
We work in this section in the setting of an involutive group
$G$, a group equipped with an involutive automorphism $\tau$ such that
$\tau\circ\tau$ is the identity.  We
set $g^*=\tau(g^{-1})=(\tau(g))^{-1}$ and note that $g\mapsto
g^*:G\to G$ is an involutive antiautomorphism. Let
$$G^\tau:=\{x\in G: \tau(x)=x\}, \qquad P_G:= \{xx^*: x\in G\}
\subseteq G_\tau:=\{g\in G:g=g^*\}.$$ (That
$(xx^*)^*=x^{**}x^*=xx^*$ shows $P_G\subseteq G_\tau$.)

A subset $B$ of a group $G$ is called a \emph{twisted subgroup}
if the identity $e$ is in $B$, $B$ is closed under inversion, and
$xyx\in B$ whenever $x,y\in B$.
\begin{lemma}\label{L:Igs1}
The sets $P_G$ and $G_\tau$ are twisted subgroups.
\end{lemma}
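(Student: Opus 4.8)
The plan is to verify directly that both $P_G$ and $G_\tau$ satisfy the three defining conditions of a twisted subgroup: containing the identity $e$, closure under inversion, and closure under the twisted product $x,y\mapsto xyx$. Both sets are defined in terms of the involutive antiautomorphism $g\mapsto g^*$, so the computations will rest on the two structural facts that $(gh)^*=h^*g^*$ and $g^{**}=g$, together with the parenthetical observation in the excerpt that $P_G\subseteq G_\tau$.

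First I would handle $G_\tau=\{g:g=g^*\}$, which is the more straightforward of the two. The identity lies in $G_\tau$ since $e^*=(\tau(e^{-1}))=\tau(e)=e$. For inversion, if $g=g^*$ then applying $*$ to the relation and using that $*$ is an antiautomorphism gives $(g^{-1})^*=(g^*)^{-1}=g^{-1}$, so $g^{-1}\in G_\tau$. For the twisted product, suppose $x=x^*$ and $y=y^*$; then using the antiautomorphism property twice, $(xyx)^*=x^*y^*x^*=xyx$, so $xyx\in G_\tau$. This disposes of $G_\tau$ entirely.

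Next I would treat $P_G=\{xx^*:x\in G\}$. Membership of $e$ follows from $e=ee^*$. For closure under inversion, given $xx^*\in P_G$, I would compute $(xx^*)^{-1}=(x^*)^{-1}x^{-1}$ and show this has the form $yy^*$; taking $y=(x^*)^{-1}=(x^{-1})^*$ and using $y^*=((x^{-1})^*)^*=x^{-1}$ gives $yy^*=(x^*)^{-1}x^{-1}=(xx^*)^{-1}$, as needed. The twisted-product condition is the one requiring genuine care: given $a=xx^*$ and $b=yy^*$ in $P_G$, I must exhibit $aba=ab\,a$ (note $a=a^*\in G_\tau$) as an element $zz^*$ of $P_G$. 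The natural candidate is $z=ay$, for which $z^*=y^*a^*=y^*a$ since $a\in G_\tau$; then $zz^*=ay\,y^*a=a\,b\,a=aba$, exactly the twisted product.

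The step I expect to be the main obstacle is the twisted-product closure for $P_G$: unlike $G_\tau$, where the condition $g=g^*$ is preserved transparently, for $P_G$ one must guess the correct witness $z$ realizing $aba$ as a product of the form $zz^*$, and this requires exploiting that $a=xx^*$ itself lies in $G_\tau$ (so $a^*=a$) to move $a$ past the $*$ in the computation of $z^*$. Once the witness $z=ay$ is identified, the verification is a short manipulation with the antiautomorphism identity, but recognizing that $a\in G_\tau$ is the key simplifying input drawn from the parenthetical remark preceding the lemma.
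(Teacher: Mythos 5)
Your proof is correct and takes essentially the same approach as the paper's: a direct verification of the three conditions, with the identical witness for the twisted-product closure of $P_G$ (your $z=ay$ is the paper's $gg^*h$, and both computations exploit $a^*=a$). The only difference is that you write out the $G_\tau$ case in full, which the paper dismisses as ``a similar argument'' and relegates to a problem.
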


\begin{proof}
Since $\tau(e^{-1})=\tau(e)=e$, we have $e\in P_G\subseteq G_\tau$.
Since $(g^*)^{-1}=\tau(g)=(g^{-1})^*$, we have that $g=g^*$ implies 
$(g^{-1})^*=g^{-1}$.  Also $(gg^*)^{-1}=(g^{-1})^*g^{-1}=
(g^{-1})^*(g^{-1})^{**}$.  Thus $G_\tau$ and $P_G$ are closed
under inversion.  

Let $gg^*,hh^*\in P_G$. Then $gg^*hh^*gg^*=gg^*h(gg^*h)^*\in P_G$.
Thus $P_G$ is a twisted subgroup.  A similar argument holds for $G_\tau$.  
\end{proof}

\begin{problem} Show that if $x,y\in G_\tau$, then $xyx\in G_\tau$.
\end{problem}

We recall other basic terminology.  Let $G$ be a group with subgroup
$H$. A subset $L$ of $G$ is said to be \emph{transversal to} $H$ 
if the identity $e\in L$ and $L$ intersects each coset $gH$ of $H$ 
in precisely one point.   One sees readily that a
subset $L$ containing $e$ is a transversal to $H$ if and only if the
map $(x,h)\mapsto xh:L\times H\to G$ is a bijection. In the case of
an involutive group $(G,\tau)$, if $L\subseteq \{g\in G:g=g^*\}$,
then the map $(x,k)\mapsto xk:L\times G^\tau\to G$ is called a
\emph{polar map}.  Hence $L$ containing $e$ is transversal to
$G^\tau$ if and only if the polar map is a bijection. If it is a
bijection, then the pair $(L,G^\tau)$ is called a \emph{polar
decomposition} for $(G,\tau)$.

We now come to the study of involutive groups with \emph{polar 
decomposition}. A twisted subgroup is \emph{uniquely $2$-divisible}
if each member of $P$ has a unique square root in $P$.
\begin{proposition}\label{IG:2}
Let $(G,\tau)$ be an involutive group, $P=\{gg^*\vert g\in G\}$.
The following are equivalent:
\begin{enumerate}
\item[(1)] $P$ is a uniquely $2$-divisible twisted subgroup.
\item[(2)] $P$ is transversal to $G^\tau$, i.e., the map
$(x,g)\mapsto xg:P\times G^\tau\to G$ is bijective.
\item[(3)] Every element $g\in G$ has a unique polar 
decomposition $g=xk\in PG^\tau$, $x\in P$, $k\in G^\tau$.
where $x=(gg^*)^{1/2}$.
\end{enumerate}
\end{proposition}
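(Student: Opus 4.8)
The plan is to establish the cycle $(1)\Rightarrow(2)\Rightarrow(1)$ together with the easy equivalence $(2)\Leftrightarrow(3)$, so that all three conditions become interchangeable. Throughout I would lean on two elementary identities that make every computation collapse: for $x\in P\subseteq G_\tau$ one has $x^*=x$, while for $k\in G^\tau$ one has $k^*=\tau(k)^{-1}=k^{-1}$. I would also invoke the characterization already recorded in the text, that a subset containing $e$ is transversal to $G^\tau$ exactly when the polar map $(x,k)\mapsto xk$ is a bijection, so that condition (2) is literally the assertion that this map is bijective.

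For $(1)\Rightarrow(2)$ I would prove surjectivity and injectivity of the polar map separately. For surjectivity, given $g\in G$ I would set $x:=(gg^*)^{1/2}\in P$, available by unique $2$-divisibility, and $k:=x^{-1}g$; the defining relation $x^2=gg^*$ is exactly what is needed to verify $k^*=k^{-1}$, i.e.\ $\tau(k)=k$, whence $g=xk$ with $k\in G^\tau$. For injectivity, from $x_1k_1=x_2k_2$ I would apply the map $h\mapsto hh^*$ to both sides; using $k_i^*=k_i^{-1}$ the middle factors collapse ($k_ik_i^{-1}=e$), leaving $x_1^2=x_2^2$, and then uniqueness of square roots in $P$ forces $x_1=x_2$ and hence $k_1=k_2$.

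The substantive direction is $(2)\Rightarrow(1)$. Here $P$ is already a twisted subgroup by Lemma \ref{L:Igs1}, so only unique $2$-divisibility is at issue. Existence of a square root of $p=hh^*\in P$ is easy: decompose $h=xk$ by (2) and compute $hh^*=xkk^{-1}x=x^2$. The delicate point, which I expect to be the main obstacle, is uniqueness of square roots. Given $y_1,y_2\in P$ with $y_1^2=y_2^2$, I would introduce $w:=y_1^{-1}y_2$ and show $w\in G^\tau$: expanding $y_1^2=(y_1w)^2$ and cancelling a left factor gives $y_1w=w^{-1}y_1$, i.e.\ $y_1wy_1^{-1}=w^{-1}$, while independently $w^*=y_2y_1^{-1}=y_1wy_1^{-1}$; combining the two yields $w^*=w^{-1}$, which is precisely $\tau(w)=w$. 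Once $w\in G^\tau$, the two factorizations $y_2=y_1w$ and $y_2=y_2\cdot e$ are both polar decompositions of $y_2$, so injectivity from (2) forces $y_1=y_2$ (and $w=e$). This computation showing $w$ is $\tau$-fixed is the crux of the whole proposition.

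Finally I would close with $(2)\Leftrightarrow(3)$. The implication $(3)\Rightarrow(2)$ is immediate, since (3) asserts unique existence of the decomposition $g=xk$, which is exactly bijectivity of the polar map. For $(2)\Rightarrow(3)$, given $g=xk$ I would again compute $gg^*=x^2$; since (2) now delivers (1), the element $x$ is the unique square root $(gg^*)^{1/2}\in P$, which supplies the extra identification asserted in (3). This completes the equivalence of the three conditions.
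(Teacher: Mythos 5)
Your proposal is correct, and its skeleton is the one the paper uses: uniqueness of the factorization comes from comparing squares ($gg^*=x_1^2=x_2^2$) plus unique square roots, existence comes from $x:=(gg^*)^{1/2}$, $k:=x^{-1}g$, $kk^*=e$, and unique $2$-divisibility is recovered from injectivity of the polar map by manufacturing a second polar decomposition. (The paper runs the cycle $(1)\Rightarrow(3)\Rightarrow(2)\Rightarrow(1)$; its last step is labeled $(2)\Rightarrow(3)$ but in fact proves unique $2$-divisibility, so your explicit $(1)\Leftrightarrow(2)$, $(2)\Leftrightarrow(3)$ layout is logically tidier.) The one place you genuinely diverge is the step you flagged as the crux: uniqueness of square roots under hypothesis (2). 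The paper keeps the witness $g$ for the element $gg^*\in P$ in play and shows in one line that a competing root $y\in P$ yields a second decomposition $g=y(y^{-1}g)$, via
$$y^{-1}g=y^{-1}gg^*(g^*)^{-1}=y^{-1}y^2\tau(g)=y\tau(g)=\tau(y^{-1}g).$$
You instead compare the two roots intrinsically: $w=y_1^{-1}y_2$ satisfies $y_1wy_1^{-1}=w^{-1}$ (from $y_1^2=(y_1w)^2$) and $w^*=y_2y_1^{-1}=y_1wy_1^{-1}$ (from $y_i^*=y_i$), hence $w^*=w^{-1}$, i.e., $w\in G^\tau$, and the two decompositions $y_2=y_1w$ and $y_2=y_2e$ collide by injectivity. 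Your computation never needs a group element representing the square being rooted; it isolates the fact that the ``ratio'' of two symmetric square roots of the same element is $\tau$-fixed, which is a bit longer than the paper's substitution trick but more transparent about the mechanism, and it is the form of the argument that survives if one abstracts away from $P=\{gg^*\}$. Your closing $(2)\Leftrightarrow(3)$, including the identification $x=(gg^*)^{1/2}$ obtained by routing through the already-proved $(2)\Rightarrow(1)$, is not circular and matches what the statement requires.
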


\begin{proof}
(1)$\Rightarrow$(3): If $g= x_1k_1=x_2k_2\in PG^\tau$, 
then $gg^*=(x_1)^2=(x_2)^2$. 
Hence $x_1=x_2$, and then $k_1=k_2$.  Thus factorizations, when
they exist, are unique.

For $g\in G$, set $x:=(gg^*)^{1/2}\in P$.  Choose $k\in G$ so that
$g=xk$.  We are finished if we show that $k\in G^\tau$.  We have
$$kk^*=(x^{-1}g)(x^{-1}g)^*=(gg^*)^{-1/2}gg^*(gg^*)^{-1/2}=e,$$
and thus $k^*=k^{-1}$, i.e., $k\in G^\tau$.  

(3)$\Rightarrow$(2):
Immediate.

(2)$\Rightarrow$(3): For $gg^*\in P$, let $g=xk\in PG^\tau$.  Then
$gg^*=xk(xk)^*=x^2$, so $x\in P$ is a square root of $gg^*$.  If
$y\in P$ were another, then one verifies that $y(y^{-1}g)$ would
give another decomposition of $g$, since
$$y^{-1}g=y^{-1}gg^*(g^*)^{-1}=y^{-1}y^2\tau(g)=y\tau(g)=
\tau(y^{-1}g).$$
\end{proof}

\begin{theorem}\label{T:transgyro}
Let $(G,\tau)$ be an  involutive group, $P=\{gg^*\vert g\in G\}$.
If $P$ is uniquely $2$-divisible, then $P$ is a gyrocommutative gyrogroup
for the operation $x\oplus y=(xy^2x)^{1/2}$.  The gyration automorphisms
given by  $\gyr[a,b]x=h(a,b)xh(a,b)^{-1}$, inner automorphism
by $h(a,b)$ where $ab=(a\oplus b)h(a,b)$, is the polar decomposition
of $ab$.
\end{theorem}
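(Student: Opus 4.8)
The plan is to reduce every gyrogroup axiom to the uniqueness of the polar decomposition established in Proposition~\ref{IG:2}, exploiting the fact that $\oplus$ is exactly the ``positive part'' of ordinary multiplication in $G$. First I would check that $\oplus$ is well defined on $P$: for $x,y\in P$ we have $y=y^*$, so $y^2=yy^*\in P$, and then $xy^2x\in P$ because $P$ is a twisted subgroup (Lemma~\ref{L:Igs1}); unique $2$-divisibility then gives a unique $(xy^2x)^{1/2}\in P$. The central observation---call it the polar bridge---is that for $a,b\in P$ one has $(ab)(ab)^*=ab\,b^*a^*=ab^2a$ (using $a^*=a$, $b^*=b$), so by Proposition~\ref{IG:2} the polar decomposition of $ab$ is $ab=(a\oplus b)\,h(a,b)$ with $a\oplus b=(ab^2a)^{1/2}\in P$ and $h(a,b)\in G^\tau$. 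This identifies $a\oplus b$ with the $P$-factor of $ab$ and is the device that converts all identities about $\oplus$ into statements about associativity and uniqueness of factorization in $G$.

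Next I would dispose of the elementary axioms. Since $e\oplus a=(a^2)^{1/2}=a=a\oplus e$, the element $e$ is a two-sided identity, and $a^{-1}\in P$ satisfies $a^{-1}\oplus a=a\oplus a^{-1}=(e)^{1/2}=e$, so $\ominus a=a^{-1}$; left cancellation follows because $u\oplus w_1=u\oplus w_2$ forces $uw_1^2u=uw_2^2u$, hence $w_1^2=w_2^2$ and $w_1=w_2$. I would then set $\gyr[a,b]x:=h(a,b)xh(a,b)^{-1}$ and verify it is an automorphism of $(P,\oplus)$: conjugation by $k=h(a,b)\in G^\tau$ preserves $P$ because $k(gg^*)k^{-1}=(kg)(kg)^*$, and being a group automorphism of $G$ that preserves $P$ it commutes with the formation of products and of the unique square root, hence respects $\oplus$.

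With the bridge in hand, the left gyroassociative law becomes a bookkeeping computation in $G$. Writing $a(bc)$ two ways and using $h(a,b)c=(\gyr[a,b]c)\,h(a,b)$ with $\gyr[a,b]c\in P$, I would obtain
\[
a(bc)=\bigl(a\oplus(b\oplus c)\bigr)h(a,b\oplus c)h(b,c),\qquad (ab)c=\bigl((a\oplus b)\oplus\gyr[a,b]c\bigr)h(a\oplus b,\gyr[a,b]c)h(a,b),
\]
and uniqueness of the polar decomposition of the common element $abc$ forces the $P$-parts to agree, which is precisely $a\oplus(b\oplus c)=(a\oplus b)\oplus\gyr[a,b]c$; equality of the $G^\tau$-parts records a cocycle identity for $h$. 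For gyrocommutativity I would use that $g\mapsto g^*$ is an involutive antiautomorphism: applying it to $g=xk$ shows the $P$- and $G^\tau$-parts of $g^*$ are $k^{-1}xk$ and $k^{-1}$, and since $(ab)^*=ba$ this yields $b\oplus a=h(a,b)^{-1}(a\oplus b)h(a,b)$, i.e.\ $a\oplus b=\gyr[a,b](b\oplus a)$.

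Finally, the left loop property reduces to a single clean identity. Since $p:=a\oplus b$ satisfies $p^2=ab^2a$, one computes $(pa^{-1}p)^2=pa^{-1}(ab^2a)a^{-1}p=pb^2p$, so by uniqueness of square roots $(a\oplus b)\oplus b=p\oplus b=pa^{-1}p$; substituting this into $h(a\oplus b,b)=((a\oplus b)\oplus b)^{-1}(a\oplus b)b$ gives $h(a\oplus b,b)=p^{-1}ab=h(a,b)$, whence $\gyr[a\oplus b,b]=\gyr[a,b]$. I expect the genuine content to be the polar bridge itself, after which the axioms fall out; the main obstacle is organizational rather than conceptual, namely keeping straight at each step that the auxiliary factors $h(\cdot,\cdot)$ genuinely lie in $G^\tau$ and that conjugation by them preserves $P$, so that the two-sided uniqueness of Proposition~\ref{IG:2} may legitimately be invoked. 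The most delicate single computation is the gyrocommutativity step, where the interaction of $*$ with the polar factors must be handled carefully.
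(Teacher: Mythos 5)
Your proposal is correct, and its architecture is the same as the paper's: establish the ``polar bridge'' $ab=(a\oplus b)h(a,b)$ via $(ab)(ab)^*=ab^2a$ and Proposition~\ref{IG:2}, note that conjugation by elements of $G^\tau$ preserves $P$ and hence respects $\oplus$, and then read off identity, inverses, gyroassociativity, and gyrocommutativity from uniqueness of the polar decomposition --- each of these steps matches the paper essentially line for line.

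The one place you genuinely diverge is the left loop property. The paper observes that $b(ab)=bab$ lies in $P$ (twisted subgroup), so its polar decomposition has trivial $G^\tau$-part, which yields $h(b,a\oplus b)h(a,b)=e$; it then needs the relation $h(b,a)=h(a,b)^{-1}$, extracted earlier from the gyrocommutativity computation, to convert this into $h(a\oplus b,b)=h(a,b)$. You instead compute $(a\oplus b)\oplus b$ in closed form: with $p=a\oplus b$ you verify $(pa^{-1}p)^2=pb^2p$ and invoke uniqueness of square roots in $P$ to get $(a\oplus b)\oplus b=pa^{-1}p$, after which $h(a\oplus b,b)=(pa^{-1}p)^{-1}pb=p^{-1}ab=h(a,b)$ is immediate. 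Your route is self-contained --- it makes the loop property independent of gyrocommutativity and of the inverse relation for $h$ --- and it produces the useful identity $(a\oplus b)\oplus b=(a\oplus b)a^{-1}(a\oplus b)$ as a byproduct; the paper's route is shorter once gyrocommutativity is in hand, since it reuses $h(b,a)=h(a,b)^{-1}$ rather than recomputing. Both computations are valid, and your bookkeeping of where the $h$-factors live (in $G^\tau$, so that $k^*=k^{-1}$ and conjugation preserves $P$) is exactly the care the argument requires.
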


\begin{proof}  
By Lemma \ref{L:Igs1} $P$ is a twisted subgroup, then by Proposition \ref{IG:2} 
$P$ is transversal to $G^\tau$, and hence each element
of $G$ has a unique polar decomposition.  Furthermore, for each $a,b\in P$, the 
$P$-factor of the polar decomposition of $ab$ is given by
$$((ab)(ab)^*)^{1/2}=(ab^2a)^{1/2}=a\oplus b,$$
where the second equality is true by definition.  Thus
\begin{equation}\label{E:trans}
ab=(a\oplus b)h(a,b)\in PG^\tau,
\end{equation}
where $h(a,b)$ is defined to be the $G^\tau$-factor in the polar decomposition
of $ab$.

Directly from the definition $a\oplus b=(ab^2a)^{1/2}$, we conclude that
$e\oplus a=a=a\oplus e$ and $a\oplus a^{-1}=e=a^{-1}\oplus a$.  Hence
Axioms ($\gamma$1) and ($\gamma$2) are satisfied, and
$e=0$ and $a^{-1}=\ominus a$ in the gyrogroup terminology.

To verify ($\gamma$3), we have on the one hand that
\begin{eqnarray*}
(ab)c &=& (a\oplus b)h(a,b)c\\
&=&(a\oplus b)h(a,b)c(h(a,b)^{-1}h(a,b))\\
&=&  (a\oplus b)\gyr[a, b]c (h(a, b)) \\
&=& ((a\oplus b)\oplus \gyr[a, b]c)h(a \oplus b, \gyr[a, b]c)h(a, b).
\end{eqnarray*} 
and on the other hand that
\begin{eqnarray*}
a(bc)&=&a(b\oplus c)h(b,c)\\
 &=& a\oplus(b\oplus c) h(a,b\oplus c)h(b,c).
 \end{eqnarray*}
 Axiom ($\gamma$3) now follows from uniqueness of decomposition.

Let $k\in G^\tau$.  Then $kgg^*k^{-1}=kgk^{-1}kg^*k^{-1}=(kgk^*)(kgk^*)^*$, and
thus $kgg^*k^{-1}\in P$.  It follows that $P$ is invariant under inner automorphism by
any member of $G^\tau$.  Denote $kak^{-1}$ by $a^k$.  It is straightforward to verify that 
$(ab^2a)^k=a^k(b^k)^2a^k$ and hence that  $((ab^2a)^{1/2})^k=(a^k(b^k)^2a^k)^{1/2}$.
Since $h(a,b)\in G^\tau$ and $\gyr[a,b]x=x^{h(a,b)}$, we conclude that each $\gyr[a,b]$
is an automorphism of $(P,\oplus)$, and thus Axiom ($\gamma$4) is satisfied.

To establish that $(P,\oplus)$ is gyrocommutative, we consider the equations
\[
(ab)^*=\big((a\oplus b)h(a,b)\big)^*=h(a,b)^{-1}(a\oplus b) \]
and \[
(ab)^*=ba=(b\oplus a)h(b,a).\]
From these equations we conclude that
$$h(a,b)^{-1}(a\oplus b)=(b\oplus a)h(b,a)$$
and hence that
$$a\oplus b=h(a,b)(b\oplus a)h(a,b)^{-1}h(a,b)h(b,a)=\gyr[a,b](b\oplus a)h(a,b)h(b,a).$$
From uniqueness of the polar decomposition, we conclude 
$a\oplus b=\gyr[a,b](b\oplus a)$ (gyrocommutativity) and $h(a,b)^{-1}=h(b,a)$.

Finally, to establish ($\gamma$5), we observe 
$$b(ab)=b(a\oplus b)h(a,b)=(b\oplus(a\oplus b))h(b,a\oplus b)h(a,b).$$
It follows again from the uniqueness of the polar decomposition that $
h(b,a\oplus b)^{-1}=h(a,b)$. From the last of the preceding paragraph 
we see that $h(b,a\oplus b)^{-1}=h(a\oplus b,b)$, and we conclude
$$h(a\oplus b,b)=h(b,a\oplus b)^{-1}=h(a,b).$$
We then have directly from the definition of the gyroautomorphisms that
$\gyr[a\oplus b,b]=\gyr[a,b]$.
\end{proof}

\begin{Remark}
 We remark that there is a converse to the preceding theorem, namely that every uniquely $2$-divisible
gyrocommutative gyrogroup can be realized (up to isomorphism) as $(P,\oplus)$ for some
involutive group satisfying the hypotheses of the preceding theorem.
\end{Remark}

There is a weaker converse for general gyrogroups, which we present in the next problem.
\begin{problem}\label{P:semprod}
 Let $(G,\oplus)$ be a gyrogroup and let $\A$ be a subgroup of its automorphism group containing all
automorphisms $\gyr[a,b]$, $a,b\in G$.  Then $G\times\A$ is a group with respect to the operation
\begin{equation}
 (a,A)(b,B)=(a+Ab,\gyr[a,Ab]A B).
\end{equation}
\end{problem}

\begin{problem}
 Show that  for $a\in G$, the map $na \mapsto (a,I)^n$ is an isomorphism from the subgyrogroup of $G$
generated by $a$ to the subgroup of $G\times \A$ generated by $(a,I)$, where $(n+1)a=na\oplus a$.  
In particular, a gyrogroup is power associative.
\end{problem}

\subsection{The Einstein gyrogroup}
The main goal of this section is to show that the set of admissible velocities $\R_c^3$ endowed with the Einstein addition
is a gyrocommutative gyrogroup.
\begin{lemma} \label{L:Logrp}
 The Lorentz group endowed with the involution for which $A^*=A^T$ satisfies the hypotheses of Theorem
\ref{T:transgyro}  with $P$ equal to the set of Lorentz boosts and $O^+(1,3)^\tau$ equal to the subgroup
of orthogonal Lorentz transformations.
\end{lemma}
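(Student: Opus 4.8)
The plan is to verify the two hypotheses of Theorem \ref{T:transgyro} for $G = O^+(1,3)$ with the involution determined by $A^* = A^T$, while simultaneously pinning down the sets $P$ and $G^\tau$ as claimed. First I would identify the involution itself. Since the paper sets $g^* = \tau(g^{-1})$, the requirement $A^* = A^T$ forces $\tau(A) = (A^T)^{-1} = (A^{-1})^T$. A short computation with $(AB)^{-1} = B^{-1}A^{-1}$ and $(AB)^T = B^T A^T$ shows that this $\tau$ is a homomorphism and that $\tau\circ\tau$ is the identity, so it is an involutive automorphism of the ambient matrix group. The one point needing genuine verification is that $\tau$ maps $G = O^+(1,3)$ into itself.

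For that I would show $O^+(1,3)$ is closed under transpose (it is already closed under inversion, being a group, so $A^{-1}\in G$). By Lemma \ref{L:trans} the transpose of a member of $O(1,3)$ again lies in $O(1,3)$, so it remains only to check that transpose preserves the orthochronous condition. Here I would use the characterization that $A\in O(1,3)$ is orthochronous precisely when the image of $(1,0,0,0)$ has positive time component, i.e.\ when the $(1,1)$-entry of $A$ is positive; since the $(1,1)$-entry of $A^T$ equals that of $A$, the transpose of an orthochronous transformation is again orthochronous. Hence $\tau(A) = (A^{-1})^T \in O^+(1,3)$ and $(G,\tau)$ is an involutive group. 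The fixed-point set is then immediate: $\tau(A)=A$ is equivalent to $A^T = A^{-1}$, i.e.\ $A$ is orthogonal, so $G^\tau$ is exactly the group of orthogonal Lorentz transformations (the block form of Proposition \ref{P:orthfact}).

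Next I would identify $P = \{AA^* : A\in G\} = \{AA^T : A\in O^+(1,3)\}$ with the set of Lorentz boosts. For the inclusion into the boosts, note that for invertible $A$ the matrix $AA^T$ is symmetric and positive definite, and it lies in $O^+(1,3)$ because $A$ and $A^T$ do and $O^+(1,3)$ is a group; thus $AA^T$ is a positive definite Lorentz transformation, hence a Lorentz boost by Proposition \ref{P:eqboost}. For the reverse inclusion, a Lorentz boost $B$ is a symmetric positive definite Lorentz transformation, so writing $B = B^{1/2}(B^{1/2})^T$ (the positive definite square root being symmetric) and using Lemma \ref{L:Lroots} to see $B^{1/2}\in O^+(1,3)$, we get $B\in P$. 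Thus $P$ is precisely the set of Lorentz boosts.

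Finally I would check unique $2$-divisibility of $P$. Given $B\in P$, it is positive definite, so by Proposition \ref{P:sqrt} it has a unique positive definite square root $B^{1/2}$; that root is a Lorentz transformation by Lemma \ref{L:Lroots}, hence a positive definite Lorentz transformation, hence a Lorentz boost, hence a member of $P$. Uniqueness of the square root \emph{within} $P$ follows from its uniqueness among all positive definite matrices, since $P$ consists of such matrices. With $(G,\tau)$ an involutive group and $P$ uniquely $2$-divisible, the hypotheses of Theorem \ref{T:transgyro} are satisfied. The only genuinely delicate step is the closure of $O^+(1,3)$ under transpose; everything else is an assembly of Propositions \ref{P:eqboost}, \ref{P:orthfact}, \ref{P:sqrt} together with Lemmas \ref{L:Lroots} and \ref{L:trans}.
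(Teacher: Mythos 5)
Your proposal takes a genuinely different route from the paper at the crucial step, and most of it is sound, but it rests on one fact that the paper has not established at this point. The paper proves the reverse inclusion (every Lorentz boost lies in $P$) by velocity doubling: by Problem \ref{oneparam} there is $\u$ with $\u\oplus\u=\v$, so by Proposition \ref{P:boostcomp} and uniqueness of the polar decomposition, $B(\v)=B(\u)B(\u)=B(\u)B(\u)^T\in P$. That argument needs only the \emph{symmetry} of boosts, which is available from Problems \ref{P:2.12} and \ref{P:3.1}. You instead write a boost $B$ as $B^{1/2}\bigl(B^{1/2}\bigr)^T$ and invoke Lemma \ref{L:Lroots}, which requires knowing that $B$ is \emph{positive definite} --- a claim you assert without proof. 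Within the paper's development this implication is not yet available: the paper has only shown that boosts are symmetric Lorentz transformations, and Proposition \ref{P:eqboost} runs in the opposite direction (positive definite $\Rightarrow$ boost). Indeed, ``boost $\Rightarrow$ positive definite'' is essentially equivalent to the membership $B\in P$ you are trying to prove, and the paper's later Lemma \ref{L:P} obtains it precisely by the velocity-doubling argument. So, as written, your reverse inclusion assumes something very close to what is to be shown, and the same assumption propagates into your $2$-divisibility step, which routes through the identification of $P$ with the boosts.

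The gap is real but easily repaired: one can check directly that $B(\v)$ is positive definite. In an orthonormal basis whose first spatial vector is $\v/\vert\v\vert$, the matrix $B(\v)$ is block diagonal with the identity and the block
\begin{equation*}
\left[\begin{matrix}\gamma_\v & \gamma_\v\vert\v\vert/c\\ \gamma_\v\vert\v\vert/c & \gamma_\v\end{matrix}\right],
\end{equation*}
whose eigenvalues are $\gamma_\v(1\pm\vert\v\vert/c)>0$; hence all eigenvalues of $B(\v)$ are positive and Proposition \ref{P:beg} applies. With that inserted, your argument goes through, and it has genuine merits relative to the paper's: you verify explicitly that $O^+(1,3)$ is closed under transposition (Lemma \ref{L:trans} plus the observation that the $(1,1)$-entry, which detects orthochronicity, is unchanged by transposing), a point the paper waves off as ``straightforward'' when defining $\tau$; and you verify unique $2$-divisibility explicitly, which the paper leaves implicit (existence of square roots is buried in its doubling argument, uniqueness in Proposition \ref{P:sqrt}).
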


\begin{proof}  In the Lorentz  group $O^+(1,3)$ it is straight forward to verify that $\tau(A)=(A^{-1})^T$ is an involution
 with $A^*=A^T$.  We verify that $P=\{AA^T:A\in O^+(1,3)\}$ is the set of Lorentz boosts.  On the one hand, for any $A\in O(^+(1,3)$,
we have directly that $AA^T$ is a positive definite Lorentz transformation, hence a Lorentz boost by Proposition
\ref{P:eqboost}.  Conversely let $A$ be a Lorentz boost, say $B(\v)$.   It follows from Problem \ref{oneparam} that there exists
a velocity $\u$ such that $\u\oplus \u=\v$.  By Proposition \ref{P:Ldecomp}, we obtain $B(\u)B(\u)=B(\u\oplus \u)h(\u,\u))$.  But
$B(\u)B(\u)=B(\u)B(\u)^*$ is in $P$ already, so the factor $h(\u,\u)$ is the identity, and $$B(\v)=B(\u\oplus \u)=
B(\u)B(\u)=B(\u)B(\u)^*\in P.$$
If $(U^T)^{-1}=\tau(U)=U$, then taking inverses we obtain $U^T=U^{-1}$ if and only if $U$ is orthogonal.  Hence 
the Lorentz transformations fixed by $\tau$ are precisely the orthogonal ones.
\end{proof}

\begin{theorem}\label{T:eingyro}
 The correspondence $\v\mapsto B(\v)$ defines an isomorphism between the gyrocommutative gyrogroups
 $(\R_c^3,\oplus)$ of admissible velocities under Einstein velocity addition and the set $P=\{B(\u):\u\in\R_c^3\}$ of
 Lorentz boosts under the operation $B(\u)\oplus B(\v)=(B(\u)B(\v)^2B(\u))^{1/2}$.
\end{theorem}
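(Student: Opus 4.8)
The plan is to realize $(\R_c^3,\oplus)$ as a transported copy of the gyrogroup structure that Theorem \ref{T:transgyro} has already placed on the Lorentz boosts, so that the entire statement reduces to checking one bijection and one intertwining identity. First I would observe that $\v\mapsto B(\v)$ is a bijection of $\R_c^3$ onto $P$. Surjectivity is immediate from the definition $P=\{B(\u):\u\in\R_c^3\}$, and injectivity follows by reading off from the matrix form (\ref{E:boost2}) the $(1,1)$-entry $\gamma_\v$ together with the first subcolumn $\gamma_\v\v/c$, which recover $\v=(c/\gamma_\v)(\gamma_\v\v/c)$ uniquely (exactly the recovery in the Problem following Proposition \ref{P:boostcomp}).

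Next I would invoke Lemma \ref{L:Logrp}: the involutive group $(O^+(1,3),\tau)$ with $A^*=A^T$ satisfies the hypotheses of Theorem \ref{T:transgyro}, and its distinguished set $P=\{AA^T\}$ is precisely the set of Lorentz boosts. Theorem \ref{T:transgyro} then already asserts that $(P,\oplus')$ is a gyrocommutative gyrogroup under $X\oplus' Y=(XY^2X)^{1/2}$. Because every boost is symmetric, $B(\v)^*=B(\v)^T=B(\v)$, so this operation coincides verbatim with the operation $B(\u)\oplus B(\v)=(B(\u)B(\v)^2B(\u))^{1/2}$ named in the theorem; thus the target magma is the gyrocommutative gyrogroup supplied by the abstract machinery.

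The crux is the homomorphism identity $B(\u\oplus\v)=B(\u)\oplus' B(\v)$, and I would establish it by matching two descriptions of the same polar decomposition. On one hand, equation (\ref{E:trans}) in the proof of Theorem \ref{T:transgyro} identifies $X\oplus' Y$ with the positive-definite factor of the polar decomposition of $XY$; since $B(\u),B(\v)$ are symmetric, that factor is $\bigl((B(\u)B(\v))(B(\u)B(\v))^T\bigr)^{1/2}=(B(\u)B(\v)^2B(\u))^{1/2}$. On the other hand, Proposition \ref{P:boostcomp} states that the polar decomposition of $B(\u)B(\v)$ in $O^+(1,3)$ is $B(\u\oplus\v)h(\u,\v)$, whose positive-definite factor is $B(\u\oplus\v)$. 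By uniqueness of the polar decomposition (Proposition \ref{P:polar}) these two factors agree, giving $B(\u\oplus\v)=B(\u)\oplus' B(\v)$.

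Finally, since $\v\mapsto B(\v)$ is a bijection intertwining $\oplus$ with $\oplus'$, and the gyrogroup data on $P$ (identity, inverses, and the gyrations defined from $\oplus'$ via (\ref{E:GyrId})) are determined entirely by the binary operation, the structure transports isomorphically: $(\R_c^3,\oplus)$ is itself a gyrocommutative gyrogroup and the map is the asserted isomorphism, with $\gyr[\u,\v]$ corresponding to conjugation by the factor $h(\u,\v)$ of Proposition \ref{P:boostcomp}. The only real obstacle is the bookkeeping that the \emph{linear-algebra} polar decomposition of Proposition \ref{P:polar} and the \emph{abstract involutive-group} polar decomposition of Proposition \ref{IG:2}/Theorem \ref{T:transgyro} are literally the same factorization here; once that identification is made explicit (together with the symmetry $B(\v)^*=B(\v)$), Proposition \ref{P:boostcomp} feeds directly into the abstract machinery and the proof is pure assembly.
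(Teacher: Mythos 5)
Your proposal is correct and follows essentially the same route as the paper: both rest on Lemma \ref{L:Logrp} to place the abstract gyrogroup structure of Theorem \ref{T:transgyro} on the set of boosts, then match the abstract polar factorization $B(\u)B(\v)=(B(\u)\oplus B(\v))h(B(\u),B(\v))$ against Proposition \ref{P:boostcomp} and invoke uniqueness of the polar decomposition to get $B(\u\oplus\v)=B(\u)\oplus B(\v)$, finishing by transport of structure along the bijection. Your write-up is somewhat more explicit than the paper's (the injectivity argument via the first column of $B(\v)$, and the careful identification of the linear-algebra polar decomposition with the involutive-group one), but these are elaborations of the same argument, not a different one.
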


\begin{proof}
By Proposition \ref{P:boostcomp} for $\u,\v\in \R_c^3$, 
$B(\u)B(\v)=B(\u\oplus\v)h(\u,\v)$, where the right hand side is the polar decomposition of the
left hand side in the Lorentz  group $O^+(1,3)$.  By Lemma \ref{L:Logrp} and the first paragraph of the proof
of Theorem \ref{T:transgyro}, we have  $B(\u)B(\v)=(B(\u)\oplus B(\v))h(B(\u),B(\v))$.  By uniqueness of the 
polar decomposition in $O^+(1,3)$, we have $B(\u)\oplus B(\v)=B(\u\oplus\v)$, which shows that $\u\to B(\u)$ is
an isomorphism (since essentially by definition it is a bijection).  It is straightforward to verify that an isomorphism of
groupoids (magmas) preserves all the properties of Definition \ref{D:gyro2}, and hence one of the systems is a 
gyrocommutative gyrogroup if and only if the other is.
\end{proof}

\subsection{Basic theory of gyrogroups}
We assume throughout this section the axioms of Definition \ref{D:gyro2} for a gyrogroup.  We first list some
basic properties of the gyrations; see \cite[Chapter 2]{Ung}.
\begin{proposition}\label{P:gyrprops}
 Let $(G,\oplus)$ be a gyrogroup.  Then for all $a,b\in G$ 
the gyrations satisfy the following properties:
\begin{itemize}
\item[(i)] $\gyr[\ominus a,\ominus b]=\gyr[a,b]$;
\item[(ii)] $\gyr[b,a]=\gyr^{-1}[a,b]$, the inverse of $\gyr[a,b]$;
\item[(iii)] $\gyr[a\oplus b,\ominus a]=\gyr[a,b]$;
\item[(iv)] $\gyr[na,ma]=I$ for all integers $m,n$.
\end{itemize}
\end{proposition}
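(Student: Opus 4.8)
The plan is to combine the elementary gyrogroup identities already recorded (left cancellation, $\gyr[0,a]=I$, $\gyr[a,a]=I$, and $\gyr[\ominus a,a]=I$) with the two genuinely structural tools at our disposal: the left and right gyroassociative and loop laws of Definition~\ref{D:gyro2}, and the group $G\times\A$ of Problem~\ref{P:semprod}, in which the product of two ``pure'' elements reads $(a,I)(b,I)=(a\oplus b,\gyr[a,b])$. I would prove (ii) and (iii) by direct manipulation of the axioms, and then use the group $G\times\A$ to dispatch (i) and (iv) by reading gyrations off as second coordinates.

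For (ii), I would substitute $c\mapsto\gyr[a,b]c$ into the right gyroassociative law to get $(a\oplus b)\oplus\gyr[a,b]c=a\oplus(b\oplus\gyr[b,a]\gyr[a,b]c)$, recognize the left-hand side as $a\oplus(b\oplus c)$ via the left gyroassociative law, and then cancel $a$ and $b$ on the left to conclude $\gyr[b,a]\gyr[a,b]c=c$ for all $c$, i.e.\ $\gyr[b,a]=\gyr^{-1}[a,b]$. For (iii), I would apply the right loop property to rewrite $\gyr[a\oplus b,\ominus a]=\gyr[a\oplus b,\,\ominus a\oplus(a\oplus b)]$; the inner term collapses to $b$ by left cancellation, and then the left loop property gives $\gyr[a\oplus b,b]=\gyr[a,b]$.

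For (i), the key preliminary is that in $G\times\A$ one has $(a,I)^{-1}=(\ominus a,I)$, and more generally the inverse of any $(c,C)$ carries $C^{-1}$ as its second coordinate; this rests on $\gyr[a,\ominus a]=I$, which follows from (ii) together with $\gyr[\ominus a,a]=I$. Granting this, I would write $(\ominus a,I)(\ominus b,I)=(a,I)^{-1}(b,I)^{-1}=\big((b,I)(a,I)\big)^{-1}$. The left side has second coordinate $\gyr[\ominus a,\ominus b]$, while the right side, being the inverse of $(b\oplus a,\gyr[b,a])$, has second coordinate $\gyr^{-1}[b,a]=\gyr[a,b]$ by (ii); equating proves (i).

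For (iv), I would first establish $(a,I)^n=(na,I)$ for every integer $n$, where $na$ is defined by $(n+1)a=na\oplus a$. By induction, $(a,I)^{n+1}=(na,I)(a,I)=(na\oplus a,\gyr[na,a])$, so the claim reduces to $\gyr[na,a]=I$; this follows by iterating the left loop property $\gyr[(n-1)a,a]=\gyr[(n-1)a\oplus a,a]=\gyr[na,a]$ down to the base value $\gyr[0,a]=I$, with negative powers handled by running the same argument with $\ominus a$ in place of $a$. Once $(a,I)^n=(na,I)$ is known, comparing the two ways of computing $(a,I)^{n+m}$, namely $(na,I)(ma,I)=(na\oplus ma,\gyr[na,ma])$ versus $((n+m)a,I)$, forces $\gyr[na,ma]=I$. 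The main obstacle is conceptual rather than computational and lies in part (i): it is the one place where the group $G\times\A$ and its inverse formula genuinely earn their keep, whereas (ii)--(iv) are short once the correct law is applied.
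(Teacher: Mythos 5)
The paper itself contains no proof of Proposition \ref{P:gyrprops}: it is stated with a citation to \cite[Chapter 2]{Ung}, and the surrounding problems are left as exercises. Measured against what the paper plainly intends, your argument is correct and is essentially that intended route: your treatment of (i) is precisely Problem \ref{P:switch} (invert $(a,I)(b,I)=(a\oplus b,\gyr[a,b])$ in the gyrosemidirect product group of Problem \ref{P:semprod} and compare second coordinates, using (ii) to rewrite $\gyr^{-1}[b,a]$), and your treatment of (iv) is the power-associativity problem that follows it. Your proofs of (ii) (substitute $\gyr[a,b]c$ into the right gyroassociative law, identify the left side via the left gyroassociative law, and cancel twice on the left) and of (iii) (right loop property, then left cancellation $\ominus a\oplus(a\oplus b)=b$, then left loop property) are clean, correct, and purely axiomatic. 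Two points deserve flagging. First, (i) and (iv) lean on the unproved-in-the-paper facts that $G\times\A$ is a group and that $(a,A)^{-1}=(A^{-1}(\ominus a),A^{-1})$; this is legitimate and not circular, since associativity of the gyrosemidirect product needs only the gyroassociative and loop axioms together with $A\gyr[a,b]A^{-1}=\gyr[Aa,Ab]$ for $A\in\A$, none of which invoke (i)--(iv), but it should be acknowledged as an input. Second, in (iv) the case of negative multiples is glossed: rather than rerunning the induction with $\ominus a$ and then having to identify $n(\ominus a)$ with $(-n)a$, it is cleaner to observe that by the inverse formula every power $(a,I)^m$, $m$ an integer, has second coordinate $I$, so its first coordinates satisfy the defining recursion $(m+1)a=ma\oplus a$ in both directions; uniqueness of solutions of $x\oplus a=b$ (Proposition \ref{P:Rloop}) then gives $(a,I)^m=(ma,I)$ for all integers $m$, after which comparing $(a,I)^{n}(a,I)^{m}$ with $(a,I)^{n+m}$ yields $\gyr[na,ma]=I$ exactly as you say.
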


\begin{problem}\label{P:switch}
Using Problem \ref{P:semprod} and the fact that $(a,A)^{-1}=(A^{-1}(\ominus a),A^{-1})$, invert both sides of the equation
$(a,I)(b,I)=(a\oplus b,\gyr[a,b])$ to show that $\gyr[\ominus b,\ominus a]=\gyr^{-1}[a,b]$ and $\ominus(a\oplus b)=\gyr[a,b]
(\ominus b\ominus a)$.
\end{problem}

A magma is called \emph{left power alternative}  if  for all $a,b$ and all integers $m,n$, $$ma\oplus(na\oplus b)=(m+n)a\oplus b.$$
\begin{corollary}
 A gyrogroup is left power alternative.
\end{corollary}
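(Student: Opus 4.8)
The plan is to expand the left-hand side by a single application of the Left Gyroassociative Law and then to collapse the resulting gyration using part (iv) of Proposition \ref{P:gyrprops}, which asserts $\gyr[na,ma]=I$ for all integers $m,n$. Concretely, I would apply axiom (G4) with $ma$ in the role of $a$, with $na$ in the role of $b$, and with $b$ in the role of $c$, to obtain
$$ma\oplus(na\oplus b)=(ma\oplus na)\oplus\gyr[ma,na]b.$$

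Next I would invoke Proposition \ref{P:gyrprops}(iv) to conclude $\gyr[ma,na]=I$, so that $\gyr[ma,na]b=b$ and the right-hand side above reduces to $(ma\oplus na)\oplus b$. It then remains to identify $ma\oplus na$ with $(m+n)a$. This is exactly the content of power associativity, established in the problem immediately preceding this corollary: the map $na\mapsto(a,I)^n$ is an isomorphism of the subgyrogroup generated by $a$ onto the (genuine) subgroup of $G\times\A$ generated by $(a,I)$, and under this isomorphism the group identity $(a,I)^m(a,I)^n=(a,I)^{m+n}$ translates to $ma\oplus na=(m+n)a$. Substituting yields $ma\oplus(na\oplus b)=(m+n)a\oplus b$, which is precisely the left power alternative identity.

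I expect no serious obstacle here once part (iv) is available; the corollary is essentially a two-line deduction. The only points demanding a moment of care are that the identity $ma\oplus na=(m+n)a$ must be read off from power associativity rather than silently assumed, and that negative values of $m$ and $n$ are legitimately covered, since the isomorphism with $(\mathbb{Z},+)$ handles all integer exponents uniformly. With these observations in place the proof is complete.
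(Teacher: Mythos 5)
Your proof is correct and takes exactly the route the paper intends: the paper leaves this corollary as a problem whose hint is to use Proposition \ref{P:gyrprops}(iv), and your argument---one application of left gyroassociativity, collapsing $\gyr[ma,na]=I$, and then identifying $ma\oplus na=(m+n)a$ via the power associativity established in the preceding problem---is precisely that deduction, with the added merit of explicitly noting that the isomorphism onto the subgroup generated by $(a,I)$ covers negative integers as well.
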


\begin{problem}
 Use Proposition \ref{P:gyrprops}(iv)  to prove the preceding corollary.
\end{problem}

\begin{proposition}\label{P:lBol}
A gyrogroup satisfies the left Bol identity
\begin{equation}\label{E:lBol}
a\oplus(b\oplus(a\oplus c))=(a\oplus(b\oplus a))\oplus c.
\end{equation}
\end{proposition}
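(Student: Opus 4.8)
The plan is to expand the left-hand side by repeated use of the left gyroassociative law (G4) together with the fact that each gyration is an automorphism (G3), and then to show that the residual gyration which accumulates acts as the identity.

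First I would apply the left gyroassociative law to $a\oplus(b\oplus(a\oplus c))$, treating $a\oplus c$ as a single element in the third slot, to obtain $(a\oplus b)\oplus\gyr[a,b](a\oplus c)$. Since $\gyr[a,b]$ is an automorphism, this equals $(a\oplus b)\oplus\bigl(\gyr[a,b]a\oplus\gyr[a,b]c\bigr)$. Applying the left gyroassociative law once more, now with $a\oplus b$ in the first slot, gives
$$\bigl((a\oplus b)\oplus\gyr[a,b]a\bigr)\oplus\gyr[a\oplus b,\gyr[a,b]a]\,\gyr[a,b]c.$$
The first summand $(a\oplus b)\oplus\gyr[a,b]a$ is precisely the right-hand side of the left gyroassociative law applied to $a\oplus(b\oplus a)$, so it collapses to $a\oplus(b\oplus a)$, matching the first summand on the desired right-hand side. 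Thus the whole problem reduces to showing that the trailing composite gyration acts trivially, i.e. $\gyr[a\oplus b,\gyr[a,b]a]\,\gyr[a,b]=I$.

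To establish this I would invoke the nested gyration identity
$$\gyr[a\oplus b,\gyr[a,b]c]\,\gyr[a,b]=\gyr[a,b\oplus c]\,\gyr[b,c],$$
which follows immediately by computing the triple product $(a,I)(b,I)(c,I)$ two ways in the group $G\times\A$ of Problem~\ref{P:semprod} and comparing automorphism components; associativity in that group is exactly this identity. Specializing to $c=a$ and applying the right loop property $\gyr[a,b\oplus a]=\gyr[a,b]$ (G5), the right-hand side becomes $\gyr[a,b]\,\gyr[b,a]$, which equals $I$ by Proposition~\ref{P:gyrprops}(ii). Hence $\gyr[a\oplus b,\gyr[a,b]a]\,\gyr[a,b]=I$, so the trailing factor sends $c$ back to $c$, and the two displayed sides of the left Bol identity agree.

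The main obstacle is the nested gyration identity: once it is in hand, everything else is bookkeeping with the gyroassociative law and the inverse relation for gyrations. If one prefers to avoid the semidirect-product construction, the same identity can be obtained directly by expanding the triple product in two orders and appealing to the uniqueness clause in axiom ($\gamma$3), but routing it through Problem~\ref{P:semprod} is the cleanest path.
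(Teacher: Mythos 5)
Your proof is correct, but it takes a genuinely different route from the paper's. The paper associates the \emph{inner} sum first, writing $b\oplus(a\oplus c)=(b\oplus a)\oplus\gyr[b,a]c$, and then applies left gyroassociativity once more at the outer level; the composite gyration that falls out is $\gyr[a,b\oplus a]\gyr[b,a]$, which is immediately the identity by the right loop property and Proposition \ref{P:gyrprops}(ii). That is the whole proof: two applications of (G4) and no auxiliary lemma. You instead associate at the \emph{outer} level first, push $\gyr[a,b]$ through $a\oplus c$ via the automorphism axiom, reassociate, and are then left with the composite gyration $\gyr[a\oplus b,\gyr[a,b]a]\,\gyr[a,b]$, which you kill with the nested gyration identity $\gyr[a\oplus b,\gyr[a,b]c]\,\gyr[a,b]=\gyr[a,b\oplus c]\,\gyr[b,c]$ obtained from associativity in the gyrosemidirect product of Problem \ref{P:semprod} (or, as you note, directly from expanding $a\oplus(b\oplus(c\oplus z))$ two ways and cancelling). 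Note that after specializing to $c=a$, the right-hand side of your nested identity is exactly $\gyr[a,b\oplus a]\gyr[b,a]$ --- the very expression the paper reaches in one step --- so your argument in effect re-derives the paper's computation inside the proof of the nested identity. What your route buys is the nested identity itself, a useful general fact (equivalent to associativity of the gyrosemidirect product) that the paper never states explicitly; what it costs is reliance on Problem \ref{P:semprod}, which the paper leaves as an exercise, so a fully self-contained write-up should include your alternative direct derivation (which uses only (G3), (G4), and left cancellation, hence involves no circularity with the Bol identity).
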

 
\begin{proof}
We have
\begin{eqnarray*}
a\oplus(b\oplus(a\oplus c))&=& a\oplus ((b\oplus  a)\oplus \gyr[b,a]c)\\
&=& (a\oplus (b \oplus a))\oplus \gyr[a,b\oplus a]\gyr[b,a] c.
\end{eqnarray*}
Noting from Proposition \ref{P:gyrprops} that $\gyr[a,b\oplus a]\gyr[b,a]
=\gyr[a,b]\gyr[b,a]=I$, we obtain the result.
\end{proof}

In the  next two propositions we show that a gyrogroup is a loop.
\begin{proposition}\label{P:Lloop}
In a gyrogroup $(G,\oplus)$, the equation $a\oplus x=b$ in unknown $x$ has the unique 
solution $x=\ominus a\oplus b$.
\end{proposition}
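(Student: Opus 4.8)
The plan is to verify directly that $x = \ominus a \oplus b$ solves the equation, and then to prove uniqueness via left cancellation. For existence, I would substitute and compute $a \oplus (\ominus a \oplus b)$. The key observation is that this expression is precisely the left-hand side of property (9) from the earlier problem list, namely $\ominus a \oplus (a \oplus b) = b$; but here the roles are slightly different, so I would instead apply the left gyroassociative law (G4). Writing $a \oplus (\ominus a \oplus b) = (a \oplus \ominus a) \oplus \gyr[a, \ominus a] b$, I would then use the inverse axiom (G2) to get $a \oplus \ominus a = 0$, reducing this to $0 \oplus \gyr[a, \ominus a] b = \gyr[a, \ominus a] b$ by the identity axiom (G1). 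Finally, $\gyr[a, \ominus a] = I$: this follows from Proposition \ref{P:gyrprops}(iv) with $n = 1$, $m = -1$ (since $\ominus a = (-1)a$), giving $\gyr[a, \ominus a]b = b$. Hence $a \oplus (\ominus a \oplus b) = b$, establishing existence.

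For uniqueness, I would suppose $a \oplus x = b$ and show $x = \ominus a \oplus b$. Left-operating by $\ominus a$ gives $\ominus a \oplus (a \oplus x) = \ominus a \oplus b$. The left-hand side simplifies to $x$ by the identity $\ominus a \oplus (a \oplus x) = x$, which is property (9) from the earlier problem list (and which is itself an easy consequence of left gyroassociativity together with $\gyr[\ominus a, a] = I$). Thus $x = \ominus a \oplus b$, and the solution is unique.

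The main obstacle, such as it is, lies in justifying that $\gyr[a, \ominus a] = I$ (equivalently $\gyr[\ominus a, a] = I$) cleanly. This can be cited directly from Proposition \ref{P:gyrprops}(iv), which asserts $\gyr[na, ma] = I$ for all integers $m, n$; taking $\{n, m\} = \{1, -1\}$ covers both orientations. Alternatively, one could invoke the simpler fact $\gyr[a, a] = I$ together with $\gyr[\ominus b, \ominus a] = \gyr[a, b]$-type symmetry, but the cleanest route is the direct appeal to (iv). Everything else is a routine application of the identity and inverse axioms together with the gyroassociative law, so no genuinely hard step arises; the proposition is essentially an immediate corollary of the structure already developed.
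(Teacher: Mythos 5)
Your proof is correct. Note that the paper gives no proof to compare against: Proposition \ref{P:Lloop} is immediately followed by a Problem asking the reader to prove it, and your argument (existence by substitution via left gyroassociativity plus $\gyr[a,\ominus a]=I$, uniqueness by left cancellation) is exactly the intended routine one. One small simplification: rather than invoking Proposition \ref{P:gyrprops}(iv) and the identification $\ominus a=(-1)a$, you can get $\gyr[a,\ominus a]=I$ directly from the left loop property, $\gyr[a,\ominus a]=\gyr[a\oplus(\ominus a),\ominus a]=\gyr[0,\ominus a]=I$, or from item (3) of the earlier problem list ($\gyr[x,a]=I$ whenever $x\oplus a=0$) applied with $a\oplus(\ominus a)=0$, which keeps the argument entirely within the axioms and the basic properties already established.
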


\begin{problem} Prove Proposition \ref{P:Lloop}
\end{problem}

\begin{definition}\label{D:coop}
In a gyrogroup $(G,\oplus)$, we define the \emph{coaddition} $\boxplus$ by
$$a\boxplus b=a\oplus \gyr[a,\ominus b]b.$$
We set 
$$a\boxminus b:=a\boxplus (\ominus b)=a\oplus\gyr[a,b](\ominus b)=a\ominus\gyr[a,b]b.$$
\end{definition}

\begin{proposition}\label{P:Rloop}
In a gyrogroup $(G,\oplus)$ the equation $x\oplus a=b$ has the unique solution 
$x=b\boxminus a$.
\end{proposition}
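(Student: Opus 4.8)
The plan is to isolate the real content---an existence-and-uniqueness statement for the right translation $x\mapsto x\oplus a$---and to organize the whole argument around the coaddition $\boxminus$ of Definition \ref{D:coop}, splitting it into a clean uniqueness half and a harder existence half.

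First I would prove the auxiliary cancellation identity $(\star)$: $(x\oplus a)\boxminus a=x$ for all $x,a\in G$. Unwinding the definition, $(x\oplus a)\boxminus a=(x\oplus a)\oplus\gyr[x\oplus a,a](\ominus a)$; the left loop property collapses $\gyr[x\oplus a,a]$ to $\gyr[x,a]$, and then the right gyroassociative law rewrites $(x\oplus a)\oplus\gyr[x,a](\ominus a)$ as $x\oplus\big(a\oplus\gyr[a,x]\gyr[x,a](\ominus a)\big)$. By Proposition \ref{P:gyrprops}(ii) we have $\gyr[a,x]\gyr[x,a]=I$, so the inner term is $a\oplus(\ominus a)=0$ and the whole expression collapses to $x$. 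Identity $(\star)$ is exactly what uniqueness requires: if $x\oplus a=b$, then applying $(\star)$ gives $x=(x\oplus a)\boxminus a=b\boxminus a$. Thus a solution, if one exists, is forced to equal $x=b\boxminus a$, which simultaneously yields uniqueness and pins down the candidate.

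It remains to verify existence, i.e.\ the companion identity $(\star\star)$: $(b\boxminus a)\oplus a=b$. I would expand $b\boxminus a=b\oplus\gyr[b,a](\ominus a)$ and apply the right gyroassociative law to $(b\oplus\gyr[b,a](\ominus a))\oplus a$, obtaining $b\oplus\big(\gyr[b,a](\ominus a)\oplus\gyr[\gyr[b,a](\ominus a),b]\,a\big)$; the goal then reduces to showing the bracketed term is $0$, equivalently to the single gyration identity $\gyr[\gyr[b,a](\ominus a),b]\,a=\gyr[b,a]\,a$. This is the step I expect to be the main obstacle, since the nested gyration does not collapse by one stroke. My approach would be to rewrite the offending gyration using the tools already available---Proposition \ref{P:gyrprops}(i),(iii) (namely $\gyr[\ominus a,\ominus b]=\gyr[a,b]$ and $\gyr[a\oplus b,\ominus a]=\gyr[a,b]$), the left and right loop properties, and the inversion formulas of Problem \ref{P:switch}---to bring it into the form $\gyr[b,a]$ on the relevant argument. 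As a structural cross-check, note that $(\star)$ already shows the right translation $R_a\colon x\mapsto x\oplus a$ is injective with left inverse $y\mapsto y\boxminus a$; establishing $(\star\star)$ is precisely the assertion that this left inverse is also a right inverse, after which $R_a$ is a bijection and the proposition---together with Proposition \ref{P:Lloop}---shows that every gyrogroup is a loop.
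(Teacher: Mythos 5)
Your uniqueness half is complete and correct, and it is in substance the paper's own argument: both proofs pass through the same middle expression $(x\oplus a)\oplus\gyr[x,a](\ominus a)$, the paper reading it from $x=x\oplus(a\ominus a)$ via left gyroassociativity and the left loop property, you reading it from $(x\oplus a)\boxminus a$ via the left loop property, right gyroassociativity, and $\gyr[a,x]\gyr[x,a]=I$.

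The existence half has a genuine gap. Your reduction of $(b\boxminus a)\oplus a=b$ to the nested gyration identity $\gyr[\gyr[b,a](\ominus a),b]\,a=\gyr[b,a]\,a$ is correct, but that identity is where all the content lies, and you leave it unproved, expecting that Proposition \ref{P:gyrprops}(i),(iii), the loop properties, and Problem \ref{P:switch} will close it. They will not, at least not by straightforward rewriting: the loop properties replace $\gyr[x',b]$ by $\gyr[x'\oplus b,b]$, so to collapse $\gyr[\gyr[b,a](\ominus a),b]$ you must exhibit $\gyr[b,a](\ominus a)$ in the form $x'\oplus b$ with $x'$ recognizable --- and producing such an $x'$ is precisely the right-division problem the proposition is about; pushed through, your reduction bottoms out in further unproved identities (for instance $\gyr[b,a]a=(b\oplus a)\ominus b$). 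In Ungar's development such nested-gyration identities are corollaries of the right cancellation law, not lemmas for it. The paper takes a different, self-contained route that you should adopt: by the defining formula (\ref{E:GyrId}),
$$\gyr[b,a](\ominus a)=\ominus(b\oplus a)\oplus\bigl(b\oplus(a\ominus a)\bigr)=\ominus(b\oplus a)\oplus b,$$
so that $b\boxminus a=b\oplus\bigl(\ominus(b\oplus a)\oplus b\bigr)$, and then the left Bol identity (Proposition \ref{P:lBol}), proved immediately before this proposition for exactly this purpose, gives
$$(b\boxminus a)\oplus a=\Bigl(b\oplus\bigl(\ominus(b\oplus a)\oplus b\bigr)\Bigr)\oplus a
=b\oplus\bigl(\ominus(b\oplus a)\oplus(b\oplus a)\bigr)=b.$$
With these two ingredients in place of the unproved nested identity, your outline becomes a complete proof.
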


\begin{proof}
If $x\oplus a=b$, then 
\begin{eqnarray*}
x&=&x\oplus (a\ominus a)\\
&=& (x\oplus a)\oplus\gyr[x,a](\ominus a)\\
&=& (x\oplus a)\ominus\gyr[x,a]a\\
&=& (x\oplus a)\ominus\gyr[x+a,a]a\\
&=& b\ominus \gyr[b,a]a\\
&=& b\boxminus a.
\end{eqnarray*}
Conversely we first note that
\begin{eqnarray*}
b\boxminus a&=&b\boxplus (\ominus a)\\
&=& b\oplus\gyr[b,a](\ominus a)\\
&=& \lambda_b\lambda_{\ominus(b\oplus a)}\lambda_b\lambda_a(\ominus a)\\
&=& \lambda_b\lambda_{\ominus(b\oplus a)}\lambda_b(0)\\
&=& b\oplus(\ominus(b\oplus a)\oplus b).
\end{eqnarray*}
Therefore
$$(b\boxminus a)\oplus a= (b\oplus(\ominus(b\oplus a)\oplus b))\oplus a
=b\oplus(\ominus(b\oplus a)\oplus (b\oplus a))=b.$$
where the second equality follows from the left Bol identity.
\end{proof}

\begin{problem}  Show for a gyrogroup $(G,\oplus)$ that 
 $(G,\boxplus)$ is a loop with the same identity and inverses
as $(G,\oplus)$.
\end{problem}

We consider a basic alternative characterization of gyrocommutative gyrogroups.
\begin{proposition} \label{P:commute}
A gyrogroup $(G,\oplus)$ is gyrocommutative if and only if it satisfies the
\emph{automorphic inverse property}
$$\ominus(a\oplus b)=\ominus a\ominus b.$$
\end{proposition}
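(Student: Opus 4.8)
The plan is to treat the identity $\ominus(a\oplus b)=\gyr[a,b](\ominus b\ominus a)$, established in Problem \ref{P:switch}, as the single linchpin of the whole argument; everything else is a reformulation of the gyrocommutative law so that it lines up with this identity. First I would record this identity, valid in \emph{every} gyrogroup, and label it $(\ast)$.

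Next I would massage the gyrocommutative law into an equivalent ``inverted'' form. The law reads $x\oplus y=\gyr[x,y](y\oplus x)$ for all $x,y$. Since $\ominus\colon G\to G$ is an involutive bijection (from $\ominus(\ominus c)=c$), the substitution $x=\ominus a$, $y=\ominus b$ loses no generality, so gyrocommutativity is equivalent to $\ominus a\oplus\ominus b=\gyr[\ominus a,\ominus b](\ominus b\oplus\ominus a)$ holding for all $a,b$. Applying Proposition \ref{P:gyrprops}(i), namely $\gyr[\ominus a,\ominus b]=\gyr[a,b]$, this becomes $\ominus a\ominus b=\gyr[a,b](\ominus b\ominus a)$ for all $a,b$; call this $(\ast\ast)$.

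The point is now that $(\ast)$ and $(\ast\ast)$ share the same right-hand side $\gyr[a,b](\ominus b\ominus a)$. Comparing them, gyrocommutativity (equivalently $(\ast\ast)$) holds for all $a,b$ if and only if $\ominus a\ominus b=\ominus(a\oplus b)$ for all $a,b$, which is precisely the automorphic inverse property. This disposes of both implications at once: the forward direction reads off $\ominus(a\oplus b)=\ominus a\ominus b$ directly from $(\ast)$ together with $(\ast\ast)$, while the reverse direction substitutes the automorphic inverse property into $(\ast)$ to recover $(\ast\ast)$, hence gyrocommutativity.

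I do not anticipate a serious obstacle; the real content is bundled into Problem \ref{P:switch}. The only step requiring genuine care is the passage in the second paragraph from the gyrocommutative law to its $\ominus$-shifted form $(\ast\ast)$: one must confirm this is a true equivalence and not merely an implication, which is exactly why I invoke that $\ominus$ is an involutive bijection, so that $(\ominus a,\ominus b)$ ranges over all pairs as $(a,b)$ does. Should one prefer to sidestep that remark, the reverse direction can instead be proved head-on by combining $(\ast)$ with the earlier property $\gyr[a,b](\ominus x)=\ominus\gyr[a,b]x$ and the automorphic inverse property applied to $b\oplus a$, and then cancelling a leading $\ominus$ via $\ominus(\ominus x)=x$; but the symmetric formulation above is cleaner and I would present that one.
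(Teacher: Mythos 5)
Your proof is correct, and while your forward direction is in substance the same as the paper's (both hinge on the identity $\ominus(a\oplus b)=\gyr[a,b](\ominus b\ominus a)$ from Problem \ref{P:switch} together with gyrocommutativity evaluated at negated arguments), your overall organization is genuinely different. You prove once and for all that gyrocommutativity is \emph{equivalent} to the identity $\ominus a\ominus b=\gyr[a,b](\ominus b\ominus a)$, by exploiting that $\ominus$ is an involutive bijection (so the substitution $(a,b)\mapsto(\ominus a,\ominus b)$ loses nothing) and then applying Proposition \ref{P:gyrprops}(i); after that, both implications collapse into a comparison of two equations with identical right-hand sides. The paper treats the two directions asymmetrically: its converse never reuses Problem \ref{P:switch}, but instead computes $\ominus(a\oplus b)\oplus\gyr[a,b](b\oplus a)=0$ directly from the automorphic inverse property, Proposition \ref{P:gyrprops}(i), left gyroassociativity, and left cancellation, and then concludes $\gyr[a,b](b\oplus a)=a\oplus b$ by uniqueness of inverses. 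Your route buys symmetry and brevity --- one equivalence chain handles both directions, with no gyroassociativity computation in sight --- at the cost of loading all the weight onto Problem \ref{P:switch}; the paper's converse is a useful illustration of how to argue directly from the axioms when such a ready-made identity is not on hand. One small point to make explicit if you write this up: the substitution step uses $\ominus(\ominus a)=a$, which is property (8) of the gyrogroup problem list, so it is available exactly as you claim.
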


\begin{proof}
Assume that $G$ is gyrocommutative.  By Problem \ref{P:switch}
$$\ominus(a\oplus b)=\gyr[a,b](\ominus b\oplus(\ominus a))
=\gyr[a,b]\gyr[\ominus b,\ominus a](\ominus a\ominus b)=\ominus a\ominus b.$$

Conversely, assume that $G$ satisfies the automorphic inverse property.  Then
using Proposition \ref{P:gyrprops}(i) and gyroassociativity, we obtain
\begin{eqnarray*}
\ominus(a\oplus b)\oplus\gyr[a,b](b\oplus a)&=&(\ominus a\oplus(\ominus b))
\oplus\gyr[\ominus a,\ominus b](b\oplus a)\\
&=& \ominus a\oplus(\ominus b\oplus(b\oplus a))\\
&=& \ominus a\oplus a=0.
\end{eqnarray*}
It follows that $\gyr[a,b](b\oplus a)$ is the inverse of $\ominus(a\oplus b)$, and
hence must equal $a\oplus b$.

\end{proof}

\section{The Einstein gyrovector space}
In this section we enrich the structure of the Einstein velocity gyrogroup.  Our ultimate
goal is to equip it with enough structure to carry out an analytic hyperbolic geometry.

\subsection{Gyrovector spaces}
In this section let $(G,\oplus)$ be a gyrocommutative gyrogroup.   
\begin{definition}
A \emph{gyrovector space} consists of a gyrocommutative group $(G,\oplus)$ such that for 
every $0\ne x\in G$, there exists a unique injective homomorphism $\alpha_x$ of $(\R,+)$ into $(G,\oplus)$
such that $\alpha_x(1)=x$.  In this case we define a \emph{scalar multiplication}  from $\R\times G$ to
$G$ by $r.x=\alpha_x(r)$.  We sometimes write $x.r$ for $r.x$  to avoid proliferation of parentheses.
\end{definition}

\begin{lemma}\label{L:scalprop}
 In a gyrovector space scalar multiplication satisfies
\begin{itemize}
 \item[(i)] $1.x=x$ and $(-1).x=\ominus x$, $0.x=0$;
\item[(ii)] $(s+t).x=s.x\oplus t.x$;
\item[(iii)] $s.t.x=(st).x$
\item[(iv)] $m.x=\bigoplus_{i=1}^m x$ for any positive integer $m$.
\end{itemize}
\end{lemma}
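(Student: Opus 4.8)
The plan is to read all four items off the defining property of $\alpha_x$ as an injective homomorphism of $(\R,+)$ into $(G,\oplus)$, together with the elementary facts (already available for gyrogroups) that a magma homomorphism into a gyrogroup sends $0$ to $0$ and inverses to inverses. Throughout I fix $x\neq 0$ and adopt the convention $r.0=0$, so that all four identities hold trivially when $x=0$; I also note that since $\alpha_x$ is injective with $\alpha_x(0)=0$, we have $\alpha_x(t)\neq 0$ whenever $t\neq 0$. Only item (iii) requires a genuine idea, the uniqueness clause of the definition; items (i), (ii), (iv) are formal.

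For (ii) I simply write $(s+t).x=\alpha_x(s+t)=\alpha_x(s)\oplus\alpha_x(t)=s.x\oplus t.x$, which is exactly the homomorphism property. For (i), $1.x=\alpha_x(1)=x$ holds by definition; $0.x=\alpha_x(0)=0$ because $\alpha_x(0)=\alpha_x(0)\oplus\alpha_x(0)$ together with left cancellation forces $\alpha_x(0)=0$; and from $x\oplus\alpha_x(-1)=\alpha_x(1)\oplus\alpha_x(-1)=\alpha_x(0)=0$ we see that $\alpha_x(-1)$ is the (unique) inverse of $x$, whence $(-1).x=\ominus x$.

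The crux is (iii), where the uniqueness asserted in the gyrovector space definition does the work. Fixing $t\neq 0$, I define $\beta\colon\R\to G$ by $\beta(s)=\alpha_x(ts)$. Since $s\mapsto ts$ is an injective endomorphism of $(\R,+)$ and $\alpha_x$ is an injective homomorphism, $\beta$ is an injective homomorphism of $(\R,+)$ into $(G,\oplus)$ with $\beta(1)=\alpha_x(t)=t.x\neq 0$. Thus $\beta$ and $\alpha_{t.x}$ are both injective homomorphisms carrying $1$ to the nonzero element $t.x$, so by uniqueness they coincide; evaluating at $s$ gives $s.(t.x)=\alpha_{t.x}(s)=\beta(s)=\alpha_x(ts)=(st).x$. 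The degenerate cases are immediate: if $t=0$ then $s.(t.x)=s.0=0=0.x=(st).x$, and the case $x=0$ is covered by the convention $r.0=0$. I expect recognizing that $\beta$ must equal $\alpha_{t.x}$ to be the only nonroutine point.

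Finally (iv) follows by induction on $m$, with $\bigoplus_{i=1}^m x$ read as the left-nested sum, which is unambiguous because a gyrogroup is power associative, so the subgyrogroup generated by $x$ is an ordinary abelian group. The base case $1.x=x$ is part of (i), and the inductive step $m.x=((m-1)+1).x=(m-1).x\oplus 1.x=\big(\bigoplus_{i=1}^{m-1}x\big)\oplus x=\bigoplus_{i=1}^{m}x$ uses (ii) and the induction hypothesis.
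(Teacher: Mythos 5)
Your proof is correct, and for item (iii) --- the only item the paper actually proves, leaving (i), (ii), and (iv) as an exercise --- it follows essentially the same route as the paper: define $\beta(s)=\alpha_x(ts)$ and identify $\beta$ with $\alpha_{t.x}$ via the uniqueness clause of the definition. If anything you are more careful than the paper, which silently skips the check that $\beta$ is injective (needed before uniqueness can be invoked) and the degenerate cases $t=0$ and $x=0$.
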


\begin{proof}  For (iii) let $\alpha_x:\R\to G$ be a homomorphism with $\alpha_x(1)=x$.  Then $t.x=\alpha_x(t)$.
 Define $\beta:\R\to G$ by $\beta(r)=\alpha_x(rt)$. Then $\beta$ is a homomorphism (since $\alpha_x$ is) and
$\beta(1)=\alpha_x(t)=t.x$.  Thus by definition $\beta(s)=s.t.x$.  But also $\beta(s)=\alpha_x(st)=(st).x$, and hence
the two are equal.
\end{proof}

\begin{problem}
 Verify the other conclusions of the preceding lemma.
\end{problem}

\begin{problem}
 Suppose that one is given a scalar multiplication $(r,x)\mapsto r.x$ satisfying the conditions of the preceding
lemma.  Show that  if one defines $\alpha_x(t)=t.x$ for each nonzero $x$, then  one obtains a gyrovector space.
What  is a minimal set of the properties in Lemma \ref{L:scalprop} needed to derive this result?
\end{problem}

\begin{definition} A \emph{topological  gyrovector space} is a gyrovector space $G$ equipped with a Hausdorff topology
such that  $\oplus:G\times G\to G$ and $t.x:\R\times G\to G$ are continuous.
\end{definition}

\begin{definition}
An \emph{exponential function} for a topological gyrovector space $G$ is a homeomorphism $\exp:V\to G$ 
from a real topological vector space $V$ to $G$ such that the restriction to any one-dimensional subspace is
an additive homomorphism into $G$.
\end{definition}

\begin{lemma}\label{L:gyrovect}
The continuity of the scalar multiplication $t.x:\R\times G\to G$ in a topological gyrovector space follows 
from the existence of an exponential function.
\end{lemma}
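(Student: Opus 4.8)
The plan is to reduce continuity of scalar multiplication to the continuity of $\exp$, of $\exp^{-1}$, and of scalar multiplication in the topological vector space $V$, by first deriving the closed formula $t.x=\exp\bigl(t\,\exp^{-1}(x)\bigr)$. Once this formula is in hand, $(t,x)\mapsto t.x$ is visibly a composite of continuous maps.

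First I would record that $\exp$ carries the zero vector of $V$ to the identity $0\in G$: restricted to any one-dimensional subspace $\R v$ (with $v\neq\mathbf 0$), $\exp$ is an additive homomorphism of $(\R,+)$ into $(G,\oplus)$ and therefore sends $0\cdot v=\mathbf 0$ to the group identity. Hence $\exp^{-1}(0)=\mathbf 0$, and consequently, for every $x\neq 0$, the vector $v:=\exp^{-1}(x)$ is nonzero.

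Next, fixing $x\neq 0$ and $v=\exp^{-1}(x)$, I would examine the map $t\mapsto\exp(tv)$. As the composite of the isomorphism $t\mapsto tv$ of $(\R,+)$ onto $\R v$ with the restriction $\exp|_{\R v}$, it is an additive homomorphism of $(\R,+)$ into $(G,\oplus)$; it is injective because $\exp$ is injective and $v\neq\mathbf 0$; and it takes the value $\exp(v)=x$ at $t=1$. By the uniqueness, demanded in the definition of a gyrovector space, of the injective homomorphism $\alpha_x$ through $x$, this map must equal $\alpha_x$. Therefore $t.x=\alpha_x(t)=\exp\bigl(t\,\exp^{-1}(x)\bigr)$ for $x\neq 0$, and the same identity holds at $x=0$ under the convention $t.0=0$, since $\exp^{-1}(0)=\mathbf 0$ gives $\exp(t\mathbf 0)=\exp(\mathbf 0)=0$.

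Finally, this formula exhibits $(t,x)\mapsto t.x$ as the composite
$$\R\times G\xrightarrow{\ \mathrm{id}\times\exp^{-1}\ }\R\times V\xrightarrow{\ (t,v)\mapsto tv\ }V\xrightarrow{\ \exp\ }G,$$
in which $\exp^{-1}$ is continuous because $\exp$ is a homeomorphism, the middle arrow is continuous because $V$ is a topological vector space, and $\exp$ is continuous by hypothesis; hence $t.x$ is continuous. The one substantive step is the identification $t\mapsto\exp(tv)=\alpha_x$, which rests entirely on the uniqueness of the one-parameter homomorphism through $x$; checking $\exp(\mathbf 0)=0$, the injectivity, and the degenerate case $x=0$ are routine, so I expect no real obstacle beyond correctly invoking that uniqueness.
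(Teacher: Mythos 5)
Your proof is correct and follows essentially the same route as the paper, whose entire argument is the one-line observation that $(t,x)\mapsto t.x$ equals the continuous composite $(t,x)\mapsto\exp\bigl(t\,\exp^{-1}(x)\bigr)$. Your additional work---verifying $\exp^{-1}(0)=\mathbf{0}$ and invoking the uniqueness of the injective one-parameter homomorphism $\alpha_x$ to justify the identity $t.x=\exp\bigl(t\,\exp^{-1}(x)\bigr)$---is exactly the detail the paper leaves implicit, and it is carried out correctly.
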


\begin{proof}
The scalar multiplication can be written as the continuous composition $(t,x)\mapsto \exp(t.\log x)$.
\end{proof}

\subsection{The exponential for gyroboosts and admissible velocities}
We begin with a lemma which we have essentially proved, but never formally stated.
\begin{lemma}\label{L:P}
The following are equivalent for $A\in O^+(1,3)$:
\begin{enumerate}
\item $A$ is a Lorentz boost.
\item $A$ is positive definite.
\item $A=BB^*$ for some $B\in O^+(1,3)$.
\end{enumerate}
\end{lemma}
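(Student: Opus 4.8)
The plan is to prove the cycle of implications $(1)\Rightarrow(3)\Rightarrow(2)\Rightarrow(1)$, since each single arrow is already available from results established earlier in the paper; indeed the lemma is flagged as one we have ``essentially proved,'' so the work is almost entirely bookkeeping. Throughout I would keep in mind that under the involution $\tau(A)=(A^{-1})^T$ on $O^+(1,3)$ one has $A^*=A^T$, so that condition (3) reads $A=BB^T$ with $B\in O^+(1,3)$, i.e., $A$ lies in the set $P$ of Lemma \ref{L:Logrp}.

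First, for $(1)\Rightarrow(3)$ I would simply invoke Lemma \ref{L:Logrp}, which identifies $P=\{BB^T: B\in O^+(1,3)\}$ with the set of Lorentz boosts. Hence if $A$ is a Lorentz boost it lies in $P$, so $A=BB^*$ for some $B\in O^+(1,3)$. If one prefers a self-contained argument, one reproduces the proof of that lemma: writing $A=B(\v)$, by Problem \ref{oneparam} choose $\u$ with $\u\oplus\u=\v$; then $B(\u)B(\u)=B(\u)B(\u)^*$ is already positive definite, so its orthogonal polar factor is trivial, and Proposition \ref{P:boostcomp} forces $B(\v)=B(\u)B(\u)^*$.

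Next, $(3)\Rightarrow(2)$ is the standard positivity computation, exactly as in the proof of Proposition \ref{P:polar}: writing $A=BB^T$ with $B$ invertible, for every $\x\ne 0$ we have $\langle \x,A\x\rangle=\langle B^T\x,B^T\x\rangle=\Vert B^T\x\Vert^2>0$ since $B^T$ is invertible, and $A=(BB^T)^T$ is symmetric; thus $A$ is positive definite. Finally, $(2)\Rightarrow(1)$ is precisely Proposition \ref{P:eqboost}: a positive definite Lorentz transformation is a Lorentz boost, and $A$ is assumed to lie in $O^+(1,3)$.

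No step presents a genuine difficulty; the only thing to be careful about is that all three conditions are predicated on the standing hypothesis $A\in O^+(1,3)$, so that in $(2)\Rightarrow(1)$ the matrix $A$ is indeed a Lorentz transformation (not merely positive definite) and Proposition \ref{P:eqboost} applies. The ``main obstacle,'' such as it is, is just confirming that Lemma \ref{L:Logrp} delivers the full set-equality $P=\{\text{boosts}\}$ rather than a single inclusion, which it does.
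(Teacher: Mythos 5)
Your proposal is correct and follows essentially the same route as the paper: the cycle $(1)\Rightarrow(3)\Rightarrow(2)\Rightarrow(1)$, with $(2)\Rightarrow(1)$ from Proposition \ref{P:eqboost}, $(3)\Rightarrow(2)$ by the standard positivity computation (which the paper simply calls immediate), and $(1)\Rightarrow(3)$ via the square-root-of-a-boost argument of Lemma \ref{L:Logrp}, which the paper reproves inline rather than citing. The only difference is organizational, not mathematical.
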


\begin{proof}
That (3) implies (2) is immediate and (2) implies (1) is the content of Proposition  \ref{P:eqboost}.
Suppose that $A=B(\u)$ is a Lorentz boost.  From the fact that one dimensional subspaces intersect
$\R_c^3$ in one-dimensional subgroups isomorphic to $(\R,+)$, we have that 
$\u=t\u\oplus t\u$ for some $t$.  Since $B(t\u)B(t\u)=B(t\u)B(t\u)^T\in P$, we have that the
$O(3)$ component of the polar decomposition of $B(t\u)B(t\u)$ is the identity and hence
$$B(t\u)B(t\u)=B(t\u\oplus t\u)=B(\u).$$
Hence $B(\u)\in P$.
\end{proof}

Let $P$ be the subset of the Lorentzian group $O^+(1,3)$ consisting of all Lorentzian boosts.
The Lie algebra $\mathfrak{o}^+(1,3)=\{X\in M_4(\R): \forall t\in \R,\ \exp(tX)\in O^+(1,3)\}$ of $O^+(1,3)$
is computed in the standard way for Lie groups defined by preserving a bilinear form and is given by
$$\mathfrak{o}^+(1,3)=\{X\in M_4(\R): I_{1,3}X+XI_{1,3}=0\}.$$

Our first goal is to compute the tangent space  $\mathfrak p$ of $P$.

\begin{lemma}\label{L;frakp}
 The tangent space $\p=\{X\in\frak{o}^+(1,3): \exp(tX)\in P\mbox{ for all }t\in\R\}$ of $P$ is given by
$$\p=\{X\in\frak{o}^+(1,3): X=\left[\begin{matrix} 0& \u^T\\ \u& 0\end{matrix}\right]\mbox{ for some }\u\in\R^3\}.$$
\end{lemma}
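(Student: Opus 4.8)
The plan is to prove the displayed equality by first observing that the asserted set is exactly the set of \emph{symmetric} elements of $\frak{o}^+(1,3)$, and then showing that $\p$ coincides with this symmetric part via two inclusions. To set this up I would write a general $4\times 4$ matrix in block form $X=\left[\begin{smallmatrix} a & \b^T\\ \c & D\end{smallmatrix}\right]$ with $a\in\R$, $\b,\c\in\R^3$, and $D$ a $3\times 3$ block; multiplying out gives $I_{1,3}X+XI_{1,3}=\left[\begin{smallmatrix} -2a & 0\\ 0 & 2D\end{smallmatrix}\right]$, so membership in $\frak{o}^+(1,3)$ forces $a=0$ and $D=0$, and such an $X$ is symmetric exactly when $\c=\b$, i.e.\ exactly when $X=\left[\begin{smallmatrix} 0 & \u^T\\ \u & 0\end{smallmatrix}\right]$ for some $\u\in\R^3$. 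This identifies the right-hand side of the claim with the symmetric part of $\frak{o}^+(1,3)$ and reduces the lemma to the two inclusions below.

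For $\supseteq$ I would take $X=\left[\begin{smallmatrix} 0 & \u^T\\ \u & 0\end{smallmatrix}\right]$, note that it lies in $\frak{o}^+(1,3)$ by the block computation, so that $\exp(tX)\in O^+(1,3)$ for every $t\in\R$ by the defining property of the Lie algebra. Since $tX$ is symmetric, $\exp(tX)$ is positive definite by Proposition \ref{P:inj}, and a positive definite member of $O^+(1,3)$ is a Lorentz boost by Lemma \ref{L:P}; hence $\exp(tX)\in P$ for all $t$, so $X\in\p$.

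For $\subseteq$ I would take $X\in\p$, so that every $\exp(tX)$ is a Lorentz boost, hence positive definite, hence symmetric (positive definiteness being defined only for symmetric matrices). From the power-series identity $\exp(tX)^T=\exp(tX^T)$ this forces $\exp(tX^T)=\exp(tX)$ for all $t\in\R$; differentiating this identity of matrix-valued functions at $t=0$ gives $X^T=X$. Then $X$ is a symmetric element of $\frak{o}^+(1,3)$, hence of the asserted form by the first paragraph.

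I expect no serious obstacle here: both analytic inputs are off the shelf, namely positive definite $\Leftrightarrow$ exponential of a symmetric matrix (Proposition \ref{P:inj}) and positive definite $\Leftrightarrow$ Lorentz boost (Lemma \ref{L:P}). The only step that is not pure bookkeeping is the differentiation in the $\subseteq$ direction, and the one place to be careful is to lean on these prior results rather than reprove them, together with the block-form computation that correctly matches ``symmetric element of $\frak{o}^+(1,3)$'' with the displayed normal form.
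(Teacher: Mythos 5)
Your proposal is correct in substance, and your $\subseteq$ direction is exactly the paper's argument: elements of $P$ are positive definite, hence symmetric, so differentiating $\exp(tX)^T=\exp(tX)$ at $t=0$ gives $X=X^T$, and the block computation pins down the form. Where you genuinely diverge is the $\supseteq$ direction. The paper does it by explicit computation: it finds $X^{2n}$ and $X^{2n+1}$, sums the series to get
$$\exp(tX)=\left[\begin{matrix} \cosh(t\vert\u\vert) & \frac{\sinh(t\vert\u\vert)}{\vert\u\vert}\u^T\\ \frac{\sinh(t\vert\u\vert)}{\vert\u\vert}\u
& I+\frac{\cosh(t\vert\u\vert)-1}{\vert\u\vert^2}\u\u^T\end{matrix}\right],$$
and checks that this is literally a boost matrix (via equation (\ref{E:Sdef}) and Proposition \ref{P:eqboost}). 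You instead argue abstractly: $tX$ symmetric makes $\exp(tX)$ positive definite by Proposition \ref{P:inj}, and a positive definite member of $O^+(1,3)$ is a boost by Lemma \ref{L:P}. Your route is shorter and leverages the machinery already built; the paper's computation has the side benefit of producing the explicit identification $\exp(tX)=B\bigl(c\tanh(t\vert\u\vert)\,\u/\vert\u\vert\bigr)$, which is precisely what drives the $c\tanh$ exponential on $\R_c^3$ and the commutative diagram of Proposition \ref{P:expAV} later on.

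One step of yours deserves a patch. To invoke Lemma \ref{L:P} you need $\exp(tX)\in O^+(1,3)$, and you obtain this ``by the defining property of the Lie algebra,'' i.e.\ from the paper's claim that $\{X: I_{1,3}X+XI_{1,3}=0\}$ coincides with $\{X: \exp(tX)\in O^+(1,3)\mbox{ for all }t\}$. As a blanket statement that claim is false: $X=\left[\begin{matrix} 0 & \b^T\\ \mathbf{0} & 0\end{matrix}\right]$ satisfies the anticommutator condition, yet $\exp(tX)=I+tX$ fails to preserve $\eta$ for $t\neq 0$ (the correct infinitesimal condition is $X^TI_{1,3}+I_{1,3}X=0$, which this $X$ violates). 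Your application survives only because your $X$ is symmetric: in that case $X^TI_{1,3}+I_{1,3}X=XI_{1,3}+I_{1,3}X=0$, so $\frac{d}{dt}\bigl(\exp(tX)^TI_{1,3}\exp(tX)\bigr)=0$ and $\exp(tX)$ preserves $\eta$ for all $t$; orthochronicity is then automatic because a positive definite matrix has positive $(1,1)$ entry. Inserting that one line makes your argument airtight, and it is worth noting that the paper's explicit computation never needs this point at all, since it exhibits $\exp(tX)$ directly as a boost matrix.
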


\begin{proof}  If $\exp(tA)=e^{tA}$ is symmetric for all $t$, then $$A=\frac{d}{dt} e^{tA}\vert_{t=0}=\lim_{t\to 0} \frac{e^{tA}-I}{t}$$
 is symmetric.  It is straightforward to verify that  the conditions of symmetry and $ I_{1,3}X+XI_{1,3}=0$ imply that $X$ must
be of the form specified in the lemma.

For the converse direction, consider $X=\left[\begin{matrix} 0& \u^T\\ \u& 0\end{matrix}\right]$.  By direct computation one
verifies that $X^{2n}=\vert\u\vert^{2n-2}\left[\begin{matrix} \u^T\u& 0\\ 0& \u\u^T\end{matrix}\right]$ and $X^{2n+1}=\vert \u\vert^{2n}X$.
It follows that 
$$\exp(tX)=\left[\begin{matrix} \cosh(t\vert\u\vert) & \frac{\sinh(t\vert\u\vert)}{\vert\u\vert}\u^T\\ \frac{\sinh(t\vert\u\vert)}{\vert\u\vert}\u
& I+\frac{\cosh(t\vert\u\vert)-1}{\vert\u\vert^2}\u\u^T\end{matrix}\right].$$
One verifies directly that the preceding matrix satisfies the conditions of equation \ref{E:Sdef}, and hence is a Lorentz boost by 
Proposition \ref{P:eqboost}.
\end{proof}

\begin{problem}  Verify in detail that the powers of $X$ and $\exp(tX)$ are indeed as asserted in the previous proof.  
Verify in detail that $\exp(tX)$ is a Lorentz boost.  
\end{problem}

\begin{proposition} The exponential map from $\p$ to $(P,\oplus)$ is an exponential map from $\p$ to the topological 
gyrovector space of Lorentz boosts.  
\end{proposition}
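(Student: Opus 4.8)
The plan is to verify directly the two requirements in the definition of an exponential function (stated just before Lemma~\ref{L:gyrovect}): that $\exp\colon\p\to P$ is a homeomorphism of the real topological vector space $\p$ onto $(P,\oplus)$, and that its restriction to every one-dimensional subspace of $\p$ is an additive homomorphism into $(P,\oplus)$. Note first that $\p$ is a three-dimensional real vector space, parametrized by $\u\in\R^3$ via $\u\mapsto X_\u$, where $X_\u$ is the symmetric matrix with $\u$ in its off-diagonal blocks as in Lemma~\ref{L;frakp}; with its usual Euclidean topology it is a topological vector space.

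For the homeomorphism, I would begin from the forward direction of Lemma~\ref{L;frakp}, which gives the closed-form expression for $\exp(X_\u)$ and shows it to be a Lorentz boost, so $\exp(\p)\subseteq P$. Comparing that expression with the standard boost matrix of (\ref{E:boost2}) identifies $\exp(X_\u)=B(\v)$ with $\v=(c\tanh\vert\u\vert/\vert\u\vert)\,\u$ (and $\v=\mathbf{0}$ when $\u=\mathbf{0}$). The assignment $\u\mapsto\v$ is a radial rescaling by the continuous positive factor $c\tanh\vert\u\vert/\vert\u\vert$; since $t\mapsto c\tanh t$ is a homeomorphism of $(0,\infty)$ onto $(0,c)$, it is a homeomorphism of $\R^3$ onto $\R_c^3$, and composing with the identification $\p\cong\R^3$ and the bijection $\v\mapsto B(\v)$ of Theorem~\ref{T:eingyro} shows that $\exp\colon\p\to P$ is a bijection with explicit inverse $B(\v)\mapsto X_\u$, $\u=\tanh^{-1}(\vert\v\vert/c)\,\v/\vert\v\vert$. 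That it is moreover a homeomorphism follows because $\exp\colon\Sym\to\Spp$ is itself a homeomorphism (continuous with continuous inverse $\log$, by the functional calculus of Proposition~\ref{P:inj}), while $\p$ is a linear subspace of $\Sym$ and, by Lemma~\ref{L:P}, its image $P$ lies in $\Spp$; the restriction of a homeomorphism to a subspace is a homeomorphism onto its image.

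The crux is the one-dimensional homomorphism property, where the explicit form of the gyrogroup operation makes the verification short. Fix $0\neq X_0\in\p$ and let $\ell=\R X_0$; I must show $\exp((s+t)X_0)=\exp(sX_0)\oplus\exp(tX_0)$ for all $s,t\in\R$. By Lemma~\ref{L:Logrp} and Theorem~\ref{T:transgyro} the operation on $P$ is $A\oplus B=(AB^2A)^{1/2}$, the unique positive definite square root. Setting $A=\exp(sX_0)$ and $B=\exp(tX_0)$, the exponents $sX_0$ and $2tX_0$ commute, so $AB^2A=\exp(sX_0)\exp(2tX_0)\exp(sX_0)=\exp\big(2(s+t)X_0\big)$. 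Since $\exp((s+t)X_0)\in P$ is positive definite and squares to this, uniqueness of positive definite square roots (Proposition~\ref{P:sqrt}) yields $(AB^2A)^{1/2}=\exp((s+t)X_0)$, which is exactly the asserted additivity.

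I expect the homomorphism identity not to be the obstacle, since it is immediate once one uses that $\exp(sX_0)$ and $\exp(tX_0)$ commute and that $\oplus$ is a square root of a product. The main care is needed in the homeomorphism step: one must confirm that the image of $\exp\vert_\p$ is all of $P$ rather than merely a subset, and that the subspace topologies on $\p\subseteq\Sym$ and $P\subseteq\Spp$ are the ones making the restriction argument valid. Both points are settled by the explicit mutually inverse formulas $\u\mapsto\v$ and $\v\mapsto\u$ above together with the global homeomorphism $\exp\colon\Sym\to\Spp$.
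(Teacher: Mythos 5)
Your proof is correct, but it reaches the crucial surjectivity fact by a genuinely different route from the paper. The paper's proof never invokes the closed-form expression for $\exp(X_\u)$ from Lemma \ref{L;frakp}; instead it shows $P\subseteq\exp(\p)$ abstractly: $\exp^{-1}(P)$ is closed in $\Sym$ (since $P$ is the intersection of $O^+(1,3)$ and $\Spp$, both closed in $GL_4(\R)$), every boost $A$ has an $n$th root in $P$ for each $n$ (transporting the divisibility of $(\R_c^3,\oplus)$ from Problem \ref{oneparam} through the isomorphism of Theorem \ref{T:eingyro}), these roots must equal $\exp(\tfrac{1}{n}\log A)$ by uniqueness of positive definite roots, so $\exp^{-1}(P)$ contains a dense subset of $\R\log A$ and hence, by closedness, all of it, giving $\log A\in\p$. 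You replace all of this by reading off $\exp(X_\u)=B\bigl(c\tanh(\vert\u\vert)\u/\vert\u\vert\bigr)$ from Lemma \ref{L;frakp} and exhibiting the explicit inverse; in effect you establish the commutative diagram of Proposition \ref{P:expAV} first and deduce the present proposition from it, reversing the paper's order (the paper proves \ref{P:expAV} afterwards, leaving the diagram's commutativity as a problem). Both are sound: yours buys concreteness, an explicit inverse, and brevity; the paper's buys an argument needing no formula, of the kind that survives in any setting where one has unique divisibility and closedness. Your treatment of the one-dimensional homomorphism property via $(AB^2A)^{1/2}=\exp\bigl((s+t)X_0\bigr)$ is a concrete instance of the paper's observation that multiplication agrees with gyroaddition on commuting elements of $P$, so that step is essentially shared. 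One small omission worth repairing: the proposition implicitly asserts that $(P,\oplus)$ is a \emph{topological} gyrovector space, and the paper closes by checking this --- gyroaddition is continuous (e.g.\ directly from the formula $(AB^2A)^{1/2}$, composed of multiplication and the positive definite square root), and continuity of scalar multiplication then follows from Lemma \ref{L:gyrovect} using exactly the exponential you constructed; a sentence to this effect would make your proof cover the full claim.
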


\begin{proof}  We have seen in Section 3.2 that $\exp:\mbox{Sym}\to \mbox{Sym}^{>0}$ is a homeomorphism and a group
homomorphism on each one-dimensional subspace of Sym.  Since $P$ is the intersection of the set of $O^+(1,3)$ and the 
positive definite matrices (Lemma \ref{L:P}) and the latter two sets are closed in the general linear group $GL_4(\R)$, it follows that
$\exp^{-1}(P)$ is closed in Sym.  It follows easily from Problem \ref{oneparam} that each element of $(\R_c^3,\oplus)$ has an $n^{th}$-root
for each positive integer $n$.  By Theorem \ref{T:eingyro} the isomorphic gyrogroup of Lorentz boosts must have $n^{th}$-roots for each
element.  But for a positive definite element $A$  these roots are unique and are given by $\exp((1/n)(\log A))$.  Hence $B_n=(1/n)(\log A))$
is in $\exp^{-1}(P)$  for each $n$.  Then for each integer $m$, $\exp((m/n)B_n)=(B^n)^m\in P$,  since the product is positive definite
and in $O^+(1,3)$.  It follows that $\exp^{-1}(P)$ contains a dense subset of $\R\log A$, and by its closeness must therefore
contain $\R\log A$.  It follows that $\R\log A\subseteq \p$, in particular $\log A\in \p$.  Thus $\log P\subseteq \p$, or applying
$\exp$, we obtain $P\subseteq \exp \p$.  The reverse inclusion is immediate.  Thus $\exp:\p\to P$ is a homeomorphism
that is  group homomorphism on one-dimensional subspaces into the multiplicative structure of $P$.    It follows from
Theorem \ref{T:eingyro}  that the multiplication agrees with gyroaddition on commutative subgroups of $P$, in particular
on the image of one-parameter subgroups.  Hence $\exp$ restricted to any one-dimensional subspace of $\p$
is a homomorphism into $(P,\oplus)$.  

Since gyroaddition in $P$ is multiplication followed by projection into the $P$-factor of the product, it is continuous.
By Lemma \ref{L:gyrovect} the scalar multiplication is continuous.

\end{proof}

\begin{problem}  The commutator product of two $n\times n$-matrices $X$ and $Y$ is defined by $[X,Y]=XY-YX$.  
 Show that $\frak{o}^+(1,3)$ is closed under commutator product (and is hence a \emph{Lie algebra}) and
 $\p$ is closed under the \emph{triple product} $\langle X,Y,Z\rangle:=[X,[Y,Z][$.
\end{problem}

Let $(\R_c^3,\oplus)$ be the gyrogroup of admissible velocities.  We define an exponential function 
$\exp:\R^3\to \R_c^3$ by $\exp(\u):=c\tanh(\vert \u\vert)(\u/\vert\u\vert)$.  

\begin{proposition}\label{P:expAV}
 Each of the maps in the following diagram is a diffeomorphism (smooth homeomorphism) and the diagram commutes:
\[\begin{diagram}
\R^3    &\rTo^{\beta}      &\p \\
 \dTo^{\exp}   &   &\dTo_{\exp} \\
 \R_c^3               & \rTo^{B} & P.
\end{diagram}\]
In the diagram $B(\u)$ is the Lorentz boost for $\u$ and $\beta(\u)=\left[\begin{matrix} 0& \u^T\\ \u& 0\end{matrix}\right]$.
\end{proposition}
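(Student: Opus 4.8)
The plan is to verify that the square commutes by an explicit block computation, and then to read off the diffeomorphism assertions economically: $\beta$ is trivially a diffeomorphism, so once the two vertical exponentials are shown to be diffeomorphisms the relation $B=\exp\circ\beta\circ(\exp)^{-1}$ exhibits $B$ as a composite of diffeomorphisms. Thus only three of the four maps need to be treated directly. First, $\beta$ is precisely the linear isomorphism of $\R^3$ onto the subspace $\p$ identified in Lemma \ref{L;frakp}, hence a diffeomorphism.

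For commutativity I fix $\u\in\R^3$, write $r=\vert\u\vert$, and set $\v=\exp(\u)=c\tanh(r)\,\u/r$, so that $\vert\v\vert=c\tanh(r)$. The crux is the identity $\gamma_\v=\cosh(r)$: since $1-\vert\v\vert^2/c^2=1-\tanh^2(r)=1/\cosh^2(r)$, we get $\gamma_\v=(1-\vert\v\vert^2/c^2)^{-1/2}=\cosh(r)$. Inserting this into the boost matrix (\ref{E:boost2}), I would compare it block by block with the expression for $\exp(\beta(\u))$ recorded in the proof of Lemma \ref{L;frakp}. The $(1,1)$ entries agree, both being $\cosh(r)$; the off-diagonal blocks agree because $\gamma_\v\v/c=\cosh(r)\tanh(r)\,\u/r=(\sinh(r)/r)\,\u$; and the lower-right blocks agree because $\frac{\gamma_\v-1}{\vert\v\vert^2}\v\v^T=\frac{\cosh(r)-1}{r^2}\u\u^T$, the factors $c^2\tanh^2(r)$ cancelling. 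Hence $B(\exp(\u))=\exp(\beta(\u))$ for every $\u$ (the case $\u=0$ following by continuity, both sides being the identity), and the diagram commutes.

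For the vertical maps, the matrix exponential is real analytic and, by Proposition \ref{P:inj}, restricts to a bijection of $\Sym$ onto $\Spp$; its differential on symmetric matrices is everywhere invertible (in an eigenbasis of the base point its eigenvalues are the strictly positive numbers $e^{\lambda_i}$ and $(e^{\lambda_i}-e^{\lambda_j})/(\lambda_i-\lambda_j)$), so this bijection is in fact a diffeomorphism. Restricting to the linear subspace $\p$ and recalling from the preceding proposition (which establishes that $\exp$ maps $\p$ homeomorphically onto $P$, so in particular $\exp(\p)=P$), I conclude that $\exp\colon\p\to P$ is a diffeomorphism onto the embedded submanifold $P$. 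For the left vertical map, $\exp(\u)=c\,\phi(\vert\u\vert^2)\,\u$ with $\phi(s)=\tanh(\sqrt{s})/\sqrt{s}$ extending to a smooth (indeed analytic) function of $s\ge0$ with $\phi(0)=1$; it is the radial diffeomorphism induced by the diffeomorphism $r\mapsto c\tanh(r)$ of $[0,\infty)$ onto $[0,c)$, with smooth inverse $\v\mapsto\operatorname{arctanh}(\vert\v\vert/c)\,\v/\vert\v\vert$ by the same even-function reasoning applied to $\operatorname{arctanh}(t)/t$. With these two exponentials and $\beta$ all diffeomorphisms and the square commuting, $B=\exp\circ\beta\circ(\exp)^{-1}$ is a composite of diffeomorphisms, which finishes the proof.

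I expect the only delicate points to be the smoothness of the two radial maps at the origin, handled by writing them as smooth functions of $\vert\,\cdot\,\vert^2$ through the even analytic functions $\tanh(t)/t$ and $\operatorname{arctanh}(t)/t$, and the upgrade of the matrix exponential from the homeomorphism of the earlier discussion to a genuine diffeomorphism, which requires the invertibility of its differential on $\Sym$. The commutativity computation, though it is the heart of the statement, is routine once the identity $\gamma_{\exp(\u)}=\cosh(\vert\u\vert)$ is in hand.
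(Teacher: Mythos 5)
Your proof is correct, and it reaches the conclusion by a genuinely different distribution of labor than the paper. The paper's own proof is a three-line sketch: the two horizontal maps $\beta$ and $B$ are coordinatewise smooth, the matrix exponential $\exp\colon\p\to P$ is smooth because it is a power series, and then commutativity plus bijectivity are invoked to conclude that the remaining map, the exponential $\exp\colon\R^3\to\R_c^3$, is smooth --- i.e., the paper treats the left vertical map as the derived one, precisely because its smoothness at the origin is the delicate point, and it is the opposite economy from yours. You instead prove the left vertical map smooth directly, via the even analytic functions $\tanh(t)/t$ and $\operatorname{arctanh}(t)/t$ (which is exactly the right device for the origin), and derive the bottom map $B$ as the composite $\exp\circ\beta\circ(\exp)^{-1}$. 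Both routes are valid, and yours buys two things the paper does not supply in the proof itself: first, you actually verify commutativity through the identity $\gamma_{\exp(\u)}=\cosh(\vert\u\vert)$ and the block comparison with the matrix computed in Lemma \ref{L;frakp} (the paper relegates commutativity and bijectivity to the problem stated immediately after the proposition, so its proof is not self-contained); second, you establish the stronger conclusion that the inverses are smooth, using the divided-difference description of $d\exp$ on $\Sym$, whereas the paper's parenthetical reading of ``diffeomorphism'' as ``smooth homeomorphism'' lets it dispense with any inverse function theorem argument. What the paper's route buys in exchange is brevity: since the entries of $B(\u)$ and $\beta(\u)$ are visibly smooth functions of $\u$, it never needs the radial-function analysis at all, and the only genuinely nontrivial smoothness claim (the left exponential) comes for free from the square. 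One small economy you could borrow from this: surjectivity of $\exp\colon\p\to P$ already follows from your commutativity computation together with surjectivity of the other three maps, so your appeal to the preceding (unlabeled) proposition on the topological gyrovector space structure of $P$ could be avoided if desired.
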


\begin{proof}
 The horizontal maps are coordinatewise smooth, hence smooth.  The matrix exponential function is a power series map, hence
smooth.  From commutativity of the diagram and the bijectivity of each map, it follows that the exponential on $\R^3$ is smooth.
\end{proof}

\begin{problem}  Verify that all the maps are bijections and that the diagram commutes.
 \end{problem}

From the preceding proposition we easily obtain the
\begin{corollary}\label{C:expAV}
 The map $\exp:\R^3\to \R_c^3$ defined above is an exponential map for
$(\R_c^3,\oplus)$ and hence $(\R_c^3,\oplus)$ is a topological gyrovector space.
\end{corollary}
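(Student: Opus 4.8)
The plan is to verify that $\exp:\R^3\to\R_c^3$ meets the two requirements in the definition of an exponential function — that it be a homeomorphism from the topological vector space $\R^3$ onto $\R_c^3$, and that its restriction to every one-dimensional subspace be an additive homomorphism into $(\R_c^3,\oplus)$ — and then to read off the gyrovector space structure from it. Essentially all the substantive content is already in hand: $(\R_c^3,\oplus)$ is a gyrocommutative gyrogroup and $B:\R_c^3\to P$ is a gyrogroup isomorphism (Theorem \ref{T:eingyro}), the matrix exponential $\exp:\p\to P$ is an exponential function for $(P,\oplus)$ (the Proposition preceding \ref{P:expAV}), and by Proposition \ref{P:expAV} every map in the square is a diffeomorphism with $B\circ\exp=\exp\circ\beta$. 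So the corollary amounts to transporting these facts across the square.

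The homeomorphism requirement is immediate, since Proposition \ref{P:expAV} already gives that $\exp:\R^3\to\R_c^3$ is a diffeomorphism. The substance lies in the homomorphism property, which I would obtain by pushing the corresponding property of $\exp:\p\to P$ through the commuting square. Since $\beta$ is linear, it carries a one-dimensional subspace $\R\u$ of $\R^3$ onto the one-dimensional subspace $\R\beta(\u)$ of $\p$, on which the matrix exponential is an additive homomorphism into $(P,\oplus)$. Using $B\circ\exp=\exp\circ\beta$, the linearity of $\beta$, and the fact that $B$ is a gyrogroup isomorphism, one computes for $s,t\in\R$ that $B(\exp(s\u)\oplus\exp(t\u))=\exp(\beta(s\u))\oplus\exp(\beta(t\u))=\exp(\beta((s+t)\u))=B(\exp((s+t)\u))$, whence $\exp(s\u)\oplus\exp(t\u)=\exp((s+t)\u)$ by injectivity of $B$. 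Thus $\exp$ restricts to an additive homomorphism on each line through the origin, and so is an exponential function.

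With this in place the gyrovector space structure is produced by setting $\alpha_x(t)=\exp(t\log x)$ for $x\neq 0$, where $\log=\exp^{-1}$; this is the injective homomorphism with $\alpha_x(1)=x$ demanded by the definition, its uniqueness being pinned down by the identification of each one-parameter subgroup in Problem \ref{oneparam}, and the associated scalar multiplication is $r.x=\exp(r\log x)$. Continuity of $\oplus$ is clear from the explicit formula (\ref{E:einplusb}) — equivalently, by transporting the continuous gyroaddition on $P$ through the homeomorphism $B$ — and continuity of scalar multiplication follows from Lemma \ref{L:gyrovect}, since $(t,x)\mapsto\exp(t.\log x)$ is a composition of continuous maps. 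Hence $(\R_c^3,\oplus)$ is a topological gyrovector space. I anticipate no genuine obstacle; the only thing to watch is that additivity is faithfully carried through the linear map $\beta$ and the isomorphism $B$, the real analytic work having already been discharged in Proposition \ref{P:expAV} and its predecessor.
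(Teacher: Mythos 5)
Your proposal is correct and follows essentially the same route as the paper, which derives the corollary directly from Proposition \ref{P:expAV} and the preceding proposition on $\exp:\p\to P$; you have simply written out the transport argument (homeomorphism from the diffeomorphism square, additivity on lines via $B(\exp(s\u)\oplus\exp(t\u))=\exp(\beta(s\u))\oplus\exp(\beta(t\u))=\exp(\beta((s+t)\u))=B(\exp((s+t)\u))$ and injectivity of $B$) that the paper leaves implicit with the phrase ``we easily obtain.'' No gaps; the continuity points you handle via the explicit formula for $\oplus$ and Lemma \ref{L:gyrovect} are exactly what the paper relies on.
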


\subsection{Rooted vectors and gyrolines}  In this section we consider some basic notions
of the vector geometry of gyrovector spaces.   We begin by adding two additional axioms to our notion of
a gyrovector vector space.
\begin{definition} \label{gyrovec}A \emph{gyrovector space} is a gyrocommutative gyrogroup $V$ equipped with 
 a scalar multiplication $(t,x)\mapsto t.x$ from $\R\times V\to V$ that satisfies:
\begin{enumerate}
 \item $1.\x=\x$;  $ -1.\x=\ominus \x$; $0.\x=t.0=0$;
\item $(s+t).\x=s.\x\oplus t.\x$;
\item $s.t.\x=(st).\x.$;
\item $\gyr[\a,\b](s.x)= s.\gyr[\a,\b]x$;
\item $\gyr[s.\a,t.\a]=I$.
\end{enumerate}
\end{definition}

 In a gyrovector space $(V,\oplus,.)$ it is 
convenient to think of members of $V$ in two distinct ways: as (geometric) points $P,Q,R$
and as vectors $\u$, $\v$, $\w$ emanating from the origin $0$.  A rooted gyrovector is
viewed as a vector emanating from other points in addition to the origin.  More formally, 
a \emph{rooted gyrovector} is a pair $(P,\v)\in V\times V$.  We can alternatively write a rooted
gyrovector in the equivalent form $\oa{PQ}$, where $\v=\ominus P\oplus Q$.  
The rooted vector $\oa{PQ}$ has \emph{head} $P$ and \emph{tail} $Q$.  
\begin{lemma}\label{gyrotrans1}
In a gyrogroup $(G,\oplus)$, 
$$\ominus(a\oplus b)\oplus(a\oplus c)=\gyr[a,b](\ominus b\oplus c).$$
\end{lemma}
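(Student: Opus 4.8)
The plan is to prove the identity
$$\ominus(a\oplus b)\oplus(a\oplus c)=\gyr[a,b](\ominus b\oplus c)$$
by reducing the left-hand side to the definition of the gyration operator via the left gyroassociative law. The key observation is that the gyration $\gyr[a,b]$ is \emph{defined} in Definition \ref{D:gyro2} by the identity $\gyr[a,b]z=\ominus(a\oplus b)\oplus(a\oplus(b\oplus z))$. So if I can manufacture a term of the form $a\oplus(b\oplus z)$ on the left-hand side, the result will fall out with $z$ chosen appropriately.

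First I would set $z:=\ominus b\oplus c$, so that the right-hand side is exactly $\gyr[a,b]z$. By the defining equation \eqref{E:GyrId}, this equals $\ominus(a\oplus b)\oplus\big(a\oplus(b\oplus z)\big)$. The task then reduces to showing that the inner expression $b\oplus z=b\oplus(\ominus b\oplus c)$ simplifies to $c$. This is precisely the left cancellation property $\ominus b\oplus(b\oplus c)=c$ (property (9) in the minimal-axioms problem list, equivalently the content of Proposition \ref{P:Lloop}), after recognizing $b\oplus(\ominus b\oplus c)=c$ by the same cancellation applied with $b$ and $\ominus b$ interchanged. Concretely, $b\oplus(\ominus b\oplus c)$ is handled by noting $\ominus(\ominus b)=b$ and applying the left cancellation identity $\ominus x\oplus(x\oplus c)=c$ with $x=\ominus b$.

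Carrying this out, I would write
$$\gyr[a,b](\ominus b\oplus c)=\ominus(a\oplus b)\oplus\big(a\oplus(b\oplus(\ominus b\oplus c))\big)
=\ominus(a\oplus b)\oplus(a\oplus c),$$
where the first equality is the definition of the gyration and the second uses $b\oplus(\ominus b\oplus c)=c$. Reversing the order of the two sides gives exactly the claimed identity.

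I do not expect any serious obstacle here; the proof is essentially a one-line unwinding of the definition of $\gyr[a,b]$ together with a cancellation. The only point requiring minor care is justifying $b\oplus(\ominus b\oplus c)=c$ rather than the more familiar $\ominus b\oplus(b\oplus c)=c$; both follow from left cancellation (Proposition \ref{P:Lloop} or property (9)) since $\ominus(\ominus b)=b$, so one simply applies the cancellation law with the role of the base element played by $\ominus b$.
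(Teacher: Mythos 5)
Your proof is correct and is essentially the paper's own argument: the paper's one-line proof invokes $\gyr[a,b]=\lambda_{\ominus(a\oplus b)}\lambda_a\lambda_b$, which is exactly the defining identity \eqref{E:GyrId} that you unwind, applied to $z=\ominus b\oplus c$ with the same cancellation $b\oplus(\ominus b\oplus c)=c$. Your extra care in justifying that cancellation via $\ominus(\ominus b)=b$ is a fine (and correct) elaboration of what the paper leaves implicit.
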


\begin{proof}
The proof is a straightforward application of the fact that $\gyr[a,b]=\lambda_{\ominus(a\oplus b)}
\lambda_a\lambda_b$.
\end{proof}

\begin{proposition}\label{P:gve}
In a gyrovector space $V$, the following are equivalent for $P,Q,P',Q'\in V$.
\begin{itemize}
\item[(1)] For some $\v\in V$, $Q=P\oplus \v$ and $Q'=P'\oplus \v$.
\item[(2)] $\ominus P\oplus Q=\ominus P'\oplus Q'$.
\item[(3)] For some $\u\in V$, $P'=\gyr[P,\u](\u\oplus P)$ and $Q'=\gyr[P,\u](\u\oplus Q)$.
\end{itemize}
In case $(3)$ the vector $\u$ is unique and given by $\u=\ominus P\oplus P'$.
\end{proposition}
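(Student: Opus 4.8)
The plan is to establish the cycle of implications (1) $\Leftrightarrow$ (2) and (2) $\Leftrightarrow$ (3), reading off the uniqueness of $\u$ along the way. The equivalence (1) $\Leftrightarrow$ (2) is immediate from Proposition \ref{P:Lloop}: the equation $P\oplus x=Q$ has the unique solution $x=\ominus P\oplus Q$, so (1) says exactly that $\ominus P\oplus Q$ and $\ominus P'\oplus Q'$ coincide with a common $\v$, which is (2); conversely, setting $\v=\ominus P\oplus Q=\ominus P'\oplus Q'$ recovers (1). So all the real content lies in the equivalence with (3).

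The heart of the argument is a single computation: assuming (3), I would evaluate $\ominus P'\oplus Q'$ and show it collapses to $\ominus P\oplus Q$. Since gyroautomorphisms preserve both the operation and inverses, and since a gyrocommutative gyrogroup enjoys the automorphic inverse property $\ominus(\u\oplus P)=\ominus\u\oplus(\ominus P)$ (Proposition \ref{P:commute}), the two equations of (3) combine to give
$$\ominus P'\oplus Q'=\gyr[P,\u]\big(\ominus(\u\oplus P)\oplus(\u\oplus Q)\big).$$
Lemma \ref{gyrotrans1} (with $a=\u$, $b=P$, $c=Q$) rewrites the inner expression as $\gyr[\u,P](\ominus P\oplus Q)$, and Proposition \ref{P:gyrprops}(ii) gives $\gyr[P,\u]\gyr[\u,P]=I$, so that $\ominus P'\oplus Q'=\ominus P\oplus Q$, which is (2). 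The step I expect to be the main obstacle is orchestrating these gyration identities in the correct order — in particular, noticing that the outer gyration $\gyr[P,\u]$ and the gyration $\gyr[\u,P]$ manufactured by Lemma \ref{gyrotrans1} are mutually inverse and therefore cancel.

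For the reverse implication (2) $\Rightarrow$ (3), I would set $\u=\ominus P\oplus P'$, so that $P'=P\oplus\u$ by Proposition \ref{P:Lloop}; the gyrocommutative law then yields $P'=P\oplus\u=\gyr[P,\u](\u\oplus P)$, which is the first equation of (3). For the second equation, I define the candidate $\tilde Q=\gyr[P,\u](\u\oplus Q)$ and apply the boxed computation above verbatim (with $\tilde Q$ in place of $Q'$) to obtain $\ominus P'\oplus\tilde Q=\ominus P\oplus Q$; by hypothesis (2) the right side equals $\ominus P'\oplus Q'$, and left cancellation (again Proposition \ref{P:Lloop}) forces $\tilde Q=Q'$, establishing (3).

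Finally, the uniqueness assertion is essentially free: in any instance of (3) the gyrocommutative law applied to the first equation gives $P'=\gyr[P,\u](\u\oplus P)=P\oplus\u$, whence $\u=\ominus P\oplus P'$ is uniquely determined by $P$ and $P'$ via Proposition \ref{P:Lloop}. This both completes the equivalences and pins down $\u$ as claimed.
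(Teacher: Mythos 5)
Your proposal is correct and takes essentially the same route as the paper: the same (1)$\Leftrightarrow$(2) via left cancellation, the same use of Lemma \ref{gyrotrans1} together with $\gyr[P,\u]\gyr[\u,P]=I$ (Proposition \ref{P:gyrprops}(ii)) to connect (3) with (2), and the same gyrocommutativity computation pinning down $\u=\ominus P\oplus P'$. The only cosmetic difference is in (2)$\Rightarrow$(3): the paper computes $Q'=P'\oplus(\ominus P\oplus Q)=\gyr[P,\u](\u\oplus Q)$ directly by gyroassociativity, whereas you reuse your (3)$\Rightarrow$(2) computation on a candidate $\tilde Q$ and cancel on the left --- the same identities applied in a different order.
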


\begin{proof} (1)$\Leftrightarrow$(2): From $Q=P\oplus\v$, we deduce that $\ominus P\oplus Q=\v$ and similarly 
$\ominus P'\oplus Q'=\v$. In the converse case simply set $\v=\ominus P\oplus Q=\ominus P'\oplus Q'$.

(2)$\Rightarrow$(3): Set $\u=\ominus P\oplus P'$. Then
$$\gyr[P,\u](\u\oplus P)=P\oplus \u=P\oplus(\ominus P\oplus P')=P',$$
where the first equality follows from gyrocommutativity.  Since $\ominus P\oplus Q=\ominus P'\oplus Q'$,
we have
\begin{eqnarray*}
Q'&=&P'\oplus(\ominus P\oplus Q)=\gyr[P,\u](\u\oplus P)\oplus (\ominus P\oplus Q)\\
&=& \gyr[P,\u]\bigl( (\u\oplus P)\oplus(\gyr[\u,P](\ominus P\oplus Q))\bigr)\\
&=&\gyr[P,\u]\bigl(\u\oplus(P\oplus(\ominus P\oplus Q))\bigr)\\
&=& \gyr[P,\u](\u\oplus Q)
\end{eqnarray*}

(3)$\Rightarrow$(2): We have by gyrocommutativity, $P'=\gyr[P,\u](\u\oplus P)$ 
and $Q'=\gyr[P,\u](\u \oplus Q)$, so by Lemma \ref{gyrotrans1}
$$\gyr[\u,P] (\ominus P\oplus Q)=\ominus(\u\oplus P)\oplus(\u\oplus Q).$$
Applying $\gyr[P,\u]$ to both sides yields
$$\ominus P\oplus Q=\gyr[P,\u](\ominus(\u\oplus P))\oplus\gyr[P,\u](\u\oplus Q)=
\ominus P'\oplus Q'.$$

The uniqueness of $\u$ in condition (3) follows from the fact that
$$\ominus P\oplus P'=\ominus P\oplus\gyr[P,\u](\u\oplus P)=
\ominus P\oplus (P\oplus \u)=\u.$$
\end{proof}

Two rooted gyrovectors  $\overrightarrow{PQ}$ and $\overrightarrow{P'Q'}$ are \emph{equivalent}
if they satisfy the equivalent conditions of Proposition \ref{P:gve}. 

\begin{definition}  A \emph{gyroline} in a gyrovector space $V$ is a set of the form
$$\{P\oplus t.\v: t\in \R,\ P,\v\in V,\ \v\ne 0\}.$$
The specific gyroline is called the gyroline through $P$ in direction $\v$.
The map $t\mapsto P\oplus t.\v$ is a \emph{linear parameterization} of the
gyroline.  
\end{definition}

We list some elementary properties of gyrolines.

\begin{lemma}\label{L:2pt}
Given two distinct points $P$ and $Q$ in a gyrovector space $V$, there exists a 
gyroline containing the two points.  The parameterization $t\mapsto P\oplus t.(\ominus P
\oplus Q)$ is a  gyrolinear parameterization taking on the value $P$ at $0$ and
$Q$ at $1$.  
\end{lemma}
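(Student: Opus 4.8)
The plan is to use the obvious candidate: set $\v := \ominus P \oplus Q$ and take the gyroline through $P$ in direction $\v$. By Proposition \ref{P:Lloop}, $\v$ is precisely the unique solution of $P \oplus x = Q$, so the key relation $P \oplus \v = Q$ is available from the outset. This single relation is what drives the whole argument.

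First I would check nondegeneracy, namely $\v \neq 0$, which is needed for the set $\{P \oplus t.\v : t \in \R\}$ to qualify as a gyroline. If instead $\v = 0$, then $Q = P \oplus \v = P \oplus 0 = P$ by the identity axiom, contradicting the assumption that $P$ and $Q$ are distinct. Hence $\v \neq 0$ and the gyroline through $P$ in direction $\v$ is well-defined. Next I would evaluate the linear parameterization $\phi(t) = P \oplus t.(\ominus P \oplus Q)$ at the two distinguished parameter values. At $t = 0$, axiom (1) for scalar multiplication in Definition \ref{gyrovec} gives $0.\v = 0$, and then the identity axiom gives $\phi(0) = P \oplus 0 = P$. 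At $t = 1$, the same axiom gives $1.\v = \v$, so $\phi(1) = P \oplus \v = Q$ by the relation noted above. This simultaneously exhibits a gyroline containing both $P$ and $Q$ (establishing existence) and confirms the claimed endpoint values, completing the argument.

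This lemma has no genuine obstacle; it is a direct bookkeeping exercise in the defining axioms, combining the solvability statement of Proposition \ref{P:Lloop} with the two scalar-multiplication identities $0.\x = 0$ and $1.\x = \x$. The only subtlety worth flagging is the nondegeneracy step, which is the sole place the hypothesis $P \neq Q$ is used and which is exactly what prevents the candidate set from collapsing to the single point $P$.
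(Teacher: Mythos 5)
Your proof is correct and follows essentially the same route as the paper's: the paper simply notes that $\{P\oplus t.(\ominus P\oplus Q): t\in\R\}$ is a gyroline by definition and that evaluating at $t=0$ and $t=1$ gives $P$ and $Q$. Your version spells out the details the paper leaves implicit, including the worthwhile observation that $P\neq Q$ forces $\ominus P\oplus Q\neq 0$ so that the set genuinely qualifies as a gyroline under the definition.
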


\begin{proof} By definition $\{P\oplus t.(\ominus P\oplus Q): t\in\R\}$ is a gyroline,
and one sees directly that the given parametrization is gyrolinear and takes on the
values $P$ and $Q$ at $0$ and $1$ resp.
\end{proof}

The notion of a gyroline allows a geometric visualization of a gyrovector
$\oa{PQ}$.  Namely we consider the segment of the gyroline determined by
$P$ and $Q$ that lies between $P$ and $Q$, namely $\{P\oplus  t.(\ominus P\oplus Q): 
0\leq t \leq 1\}$, directed in the direction from $P$ to $Q$.

\begin{lemma}\label{L:rtline}
 The left translation of a gyroline is a gyroline.
\end{lemma}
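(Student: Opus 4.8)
The plan is to take an arbitrary gyroline and compute its image under a left translation $\lambda_a:x\mapsto a\oplus x$ directly, showing that the result again has the standard form $\{P'\oplus t.\v':t\in\R\}$ with $\v'\ne 0$. So the whole proof reduces to a rewriting of $a\oplus(P\oplus t.\v)$.

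First I would fix a gyroline $\{P\oplus t.\v:t\in\R\}$ with $\v\ne 0$ and an element $a\in V$, and apply the left gyroassociative law (G4) of Definition \ref{D:gyro2} with the middle term $P$ and the last term $t.\v$, obtaining $a\oplus(P\oplus t.\v)=(a\oplus P)\oplus\gyr[a,P](t.\v)$. The crucial observation is that the gyration produced here is $\gyr[a,P]$, whose two arguments are both independent of the parameter $t$; it is therefore a single fixed automorphism of $(V,\oplus)$ rather than one varying along the line.

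Next I would invoke axiom (4) of Definition \ref{gyrovec}, namely $\gyr[a,P](t.\v)=t.\gyr[a,P]\v$, so that $a\oplus(P\oplus t.\v)=(a\oplus P)\oplus t.(\gyr[a,P]\v)$. Setting $P'=a\oplus P$ and $\v'=\gyr[a,P]\v$, the translated set $\lambda_a\{P\oplus t.\v:t\in\R\}$ equals $\{P'\oplus t.\v':t\in\R\}$. Finally, since $\gyr[a,P]$ is an automorphism (axiom (G3)), it is injective and fixes $0$, so $\v\ne 0$ forces $\v'\ne 0$; hence the image is a genuine gyroline, namely the gyroline through $a\oplus P$ in direction $\gyr[a,P]\v$.

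I do not anticipate a serious obstacle: the argument is essentially a two-line computation once the correct tools are assembled. The only point demanding care is the one flagged above, that the gyration extracted by gyroassociativity depends solely on the translation element $a$ and the base point $P$ and thus acts as the \emph{same} automorphism for every value of $t$. This is exactly what allows it to be absorbed into a single direction vector via the scalar-multiplication compatibility axiom, and it is what guarantees that the output is one gyroline rather than a more complicated curve.
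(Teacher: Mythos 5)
Your proof is correct and is essentially identical to the paper's: both apply the left gyroassociative law to get $a\oplus(P\oplus t.\v)=(a\oplus P)\oplus\gyr[a,P](t.\v)$ and then use the axiom $\gyr[\a,\b](s.x)=s.\gyr[\a,\b]x$ to absorb the fixed gyration into the direction vector. Your added check that $\gyr[a,P]\v\ne 0$ (so the image is a genuine gyroline) is a small detail the paper leaves implicit, but otherwise the arguments coincide.
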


\begin{proof}   We note that $P\oplus (Q\oplus t.\v)=(P\oplus Q)\oplus \gyr[P,Q](t.\v)=
(P\oplus Q)\oplus t.\gyr[P,Q]\v.$. Thus the left translation by $P$ of the gyroline through $Q$ in the 
direction $\v$ is the gyroline through $P+Q$ in the direction $\gyr[P,Q]\v$. 
\end{proof}

\begin{lemma} \label{L:gyrorig}
 Any gyroline $\ell$ through $0$ has a parametrization of the form $\alpha_\v(t)=t.\v$ for some $\v\ne 0$ in $\ell$, 
which is an injective gyrovector space homomorphism.
\end{lemma}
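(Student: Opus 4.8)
The plan is to re-root the defining parametrization of $\ell$ at the origin and then collapse it using the distributive scalar law. Write $\ell=\{P\oplus t.\v:t\in\R\}$ with $\v\neq 0$, as in the definition of a gyroline. Since $0\in\ell$, there is a parameter value $t_0$ with $P\oplus(t_0.\v)=0$. By Proposition \ref{P:Lloop} the equation $P\oplus x=0$ has the unique solution $x=\ominus P$, so $t_0.\v=\ominus P$; applying $\ominus$ and using $(-1).x=\ominus x$ together with $s.t.x=(st).x$ from Lemma \ref{L:scalprop} gives $P=\ominus(t_0.\v)=(-t_0).\v$.

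Next I would substitute this value of $P$ and use the distributive law $(s+t).\x=s.\x\oplus t.\x$ (Lemma \ref{L:scalprop}(ii)) to simplify each point of $\ell$:
$$P\oplus t.\v=(-t_0).\v\oplus t.\v=(t-t_0).\v.$$
Reindexing by $s=t-t_0$ then shows $\ell=\{s.\v:s\in\R\}$. The direction vector $\v=1.\v=\alpha_\v(1)$ is a nonzero member of $\ell$ (taking $s=1$), so the reparametrization $\alpha_\v(s)=s.\v$ has exactly the asserted form.

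Finally I would record that $\alpha_\v$ is an injective gyrovector space homomorphism. Additivity is immediate from $(s+t).\v=s.\v\oplus t.\v$, and compatibility with scalars, $\alpha_\v(rs)=(rs).\v=r.(s.\v)=r.\alpha_\v(s)$, is Lemma \ref{L:scalprop}(iii); injectivity is built into the notion of a gyrovector space, where $t\mapsto t.\v=\alpha_\v(t)$ is precisely the unique injective homomorphism $(\R,+)\to(V,\oplus)$ with $\alpha_\v(1)=\v$. This is an essentially routine lemma; the only point demanding care is the inverse computation $P=(-t_0).\v$, which rests on the identities $(-1).x=\ominus x$ and $s.t.x=(st).x$ rather than on any direct distributivity of $\ominus$ over scalar multiplication, together with the observation that the collapsed direction $\v$ does lie in $\ell$.
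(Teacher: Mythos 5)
Your proof is correct and follows essentially the same route as the paper's: both locate the parameter value where the line passes through $0$, deduce $P=(-t_0).\v$ via $(-1).x=\ominus x$ and $(st).x=s.t.x$, and then collapse $P\oplus t.\v=(t-t_0).\v$ using the distributive law, with the homomorphism and injectivity claims read off from the scalar-multiplication properties. The only cosmetic difference is that you invoke Proposition \ref{P:Lloop} and reindex the parameter, where the paper solves $0=P\oplus r.\v$ by left cancellation and writes $\alpha_\v(t)=P\oplus(t+r).\v$ directly; these are interchangeable.
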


\begin{proof}
Let  $\ell=\{P+t.\v:t\in\R\}$ contain $0$, i.e., $0=P\oplus r.\v$ for some $r\in \R$.  Then $r.\v=\ominus P$ or
$P=(-r).\v$.  Set $\alpha_\v(t)=t.\v$.  Then
$$\alpha_v (t)=t.\v=(-r).\v\oplus (t+r).\v=P\oplus (t+r)\v.$$
is a gyrolinear parameterization of $\ell$, a gyrovector space homomorphism by properties (2) and (3)
of Definition \ref{gyrovec}, and injective by property (1).
\end{proof}

\begin{corollary}\label{C:uniqueline}
Given two distinct points of a gyrovector space, there is a unique gyroline containing the two points.
\end{corollary}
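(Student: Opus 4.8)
The plan is to separate existence from uniqueness. Existence is already in hand: Lemma \ref{L:2pt} produces the gyroline $\{P\oplus t.(\ominus P\oplus Q):t\in\R\}$ through the two given distinct points $P$ and $Q$. So the entire remaining content is the uniqueness assertion, and my strategy is to transport the problem to the origin, where gyrolines have been completely described by Lemma \ref{L:gyrorig}.

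First I would exploit the fact that left translation is a bijection of $V$ carrying gyrolines to gyrolines. Concretely, the map $L_{\ominus P}:x\mapsto \ominus P\oplus x$ has inverse $x\mapsto P\oplus x$ (by the left cancellation identity $\ominus a\oplus(a\oplus b)=b$), and by Lemma \ref{L:rtline} it sends any gyroline to a gyroline. It sends $P$ to $0$ and $Q$ to $\w:=\ominus P\oplus Q$, and $\w\ne 0$ since $P\ne Q$ (again by left cancellation, as $\ominus P\oplus Q=0=\ominus P\oplus P$ would force $Q=P$). Thus the gyrolines through $P$ and $Q$ correspond bijectively, via $L_{\ominus P}$ and its inverse, to the gyrolines through $0$ and $\w$, and it suffices to prove that the latter is unique.

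For lines through the origin I would invoke Lemma \ref{L:gyrorig}: any gyroline $\ell$ containing $0$ has the form $\ell=\{t.\v:t\in\R\}$ for some $\v\ne 0$. If moreover $\w\in\ell$, then $\w=s.\v$ for some scalar $s$, and $s\ne 0$ because $\w\ne 0$. Using property $(3)$ of scalar multiplication from Definition \ref{gyrovec}, $t.\w=t.(s.\v)=(ts).\v$; as $t$ ranges over $\R$ and $s$ is a fixed nonzero scalar, $ts$ ranges over all of $\R$, so $\{t.\w:t\in\R\}=\{r.\v:r\in\R\}=\ell$. Hence every gyroline through $0$ and $\w$ coincides with the explicitly determined set $\{t.\w:t\in\R\}$, giving uniqueness at the origin.

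Pulling back by $L_P$ then yields uniqueness for $P$ and $Q$: if $\ell$ is any gyroline through $P$ and $Q$, then $L_{\ominus P}(\ell)$ is the unique gyroline through $0$ and $\w$, so $\ell=L_P\bigl(L_{\ominus P}(\ell)\bigr)$ is forced. The only place where anything could go wrong is the origin step, so that is where I would concentrate the care: ensuring that $\w\ne 0$ legitimizes speaking of a gyroline in direction $\w$, and that the scalar $s$ relating the two direction vectors is genuinely nonzero so that the reparametrization $t\mapsto ts$ is surjective onto $\R$. Everything else is bookkeeping with the translation bijection.
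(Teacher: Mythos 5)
Your proof is correct and follows essentially the same route as the paper's: existence from Lemma \ref{L:2pt}, uniqueness for gyrolines through the origin via Lemma \ref{L:gyrorig} and the scalar multiplication axioms, and reduction of the general case to the origin case by left translation using Lemma \ref{L:rtline}. The only difference is cosmetic---you translate to the origin first and then pull back, and you spell out the translation bijection that the paper merely gestures at with ``one can now use Lemma \ref{L:rtline} to argue\ldots''---which is a fair bit of bookkeeping done carefully, but not a different argument.
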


\begin{proof} Let $P,Q$ be distinct points.  By Lemma \ref{L:2pt} there exists a gyroline $\ell$ containing
the two.   Consider the special case that $Q=0$ (hence $P\ne 0$).  By Lemma \ref{L:gyrorig} there
exists $\v\ne 0$ such that  $\ell=\{t.\v:t\in\R\}$.  Let $P=s.\v$; then $s\ne 0$.  Since $t.v=(t/s).P$ for all 
$t\in\R$, It follows that $\ell=\{t.P:t\in\R\}$.  Since $\ell$ was an arbitrary gyroline containing $0$ and $P$,
it follows that $\{t.P:t\in\R\}$ is the unique gyroline containing $0$ and $P$.  One can now use
Lemma \ref{L:rtline} to argue that there is a unique gyroline through any two distinct points.
\end{proof}

\begin{remark}\label{R:uniqueline}
It follows from Corollary \ref{C:uniqueline} and Lemma \ref{L:2pt} that 
$t\mapsto P\oplus t.(\ominus P\oplus Q)$ is a  gyrolinear parameterization 
of the unique gyroline through any two distinct points $P,Q$.
\end{remark}

\subsection{Gyrovector spaces with inner product}
To talk about length of gyrovectors and angles between gyrovectors, we introduce an inner product.
\begin{definition}\label{D:innergyro}
A \emph{real inner product gyrovector space} consists of three components:
\begin{itemize}
\item[(1)] A gyrovector space $(G,\oplus,.)$ defined on some open ball of a real inner product vector space.
\item[(2)]  The set  $\Vert G\Vert:=\{\pm\Vert \v\Vert : \v\in G\}$ is equipped with a gyroaddition and scalar 
multiplication making it a gyrovector space.  Here the norm is the one induced by the
inner product.
\item[(3)] The gyrovector space structure is connected to the inner product through the following laws:\\
$(i) ~\gyr[\u,\v]\a\cdot\gyr[\u,\v]\b=\a\cdot\b$, i.e., gyrations preserve inner product.\\
$(ii)~\frac{\vert r\vert.\a}{\Vert r.\a\Vert}=\frac{\a}{\Vert \a\Vert}$.\\
$(iii)~\Vert r.\a\Vert=\vert r\vert.\Vert\a\Vert.$\\
$(iv)~\Vert \a\oplus \b\Vert\leq \Vert\a\Vert\oplus\Vert\b\Vert$.
\end{itemize}
\end{definition}

\begin{remark}\label{zero=0}
The zero of the gyrogroup $G$ is equal to $0$ in the real inner product vector space.
\end{remark}

\begin{proof}
We have 
$$0=\Vert 0\Vert=\vert -1\vert.\Vert 0\Vert=\Vert -1.0\Vert=\Vert \ominus 0\Vert.$$
It follows that $\ominus 0=0$.  Adding $0$ to both sides , we obtain
$Z=0\oplus 0=2.0$, where $Z$ is the gyrogroup additive identity.  Multiplying
both sides by $1/2$ yields $Z=(1/2).Z=0$.
\end{proof}

\subsection{The gyrodistance}
We assume in this section we are working in an inner product gyrovector space $G$.
\begin{definition}
The \emph{gyrodistance} function $d_\oplus(\a,\b)$ is defined by 
$$ d_\oplus(\a,\b)=\Vert \ominus \a\oplus \b\Vert.$$
\end{definition}

\begin{proposition}\label{P:gyrodist}
The gyyrodistance function satisfies the standard axioms for a metric, with addition in the
triangle inequality replaced by gyroaddtion.
\end{proposition}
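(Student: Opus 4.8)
The plan is to verify the three metric axioms for $d_\oplus(\a,\b)=\Vert\ominus\a\oplus\b\Vert$: non-negativity together with the identity of indiscernibles $d_\oplus(\a,\b)=0\iff\a=\b$, symmetry, and the gyro-triangle inequality $d_\oplus(\a,\c)\leq d_\oplus(\a,\b)\oplus d_\oplus(\b,\c)$, where the $\oplus$ on the right is the gyroaddition carried by $\Vert G\Vert$. Two standing facts I would use repeatedly are that gyrations preserve the norm (immediate from Definition \ref{D:innergyro}(3)(i), since the norm is induced by the inner product) and that $\Vert\ominus\x\Vert=\Vert(-1).\x\Vert=\vert-1\vert.\Vert\x\Vert=\Vert\x\Vert$ by Definition \ref{D:innergyro}(3)(iii). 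Non-negativity is then immediate, since $d_\oplus$ is a value of the norm. For the identity of indiscernibles, $d_\oplus(\a,\b)=0$ forces $\ominus\a\oplus\b=0$ (using Remark \ref{zero=0} to identify the gyrogroup zero with the vector-space zero); since also $\ominus\a\oplus\a=0$, left cancellation gives $\b=\a$, and the converse is clear.

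For symmetry, I would use gyrocommutativity to write $\ominus\a\oplus\b=\gyr[\ominus\a,\b](\b\oplus(\ominus\a))=\gyr[\ominus\a,\b](\b\ominus\a)$, whence $d_\oplus(\a,\b)=\Vert\b\ominus\a\Vert$ because gyrations preserve the norm. On the other hand, the automorphic inverse property (Proposition \ref{P:commute}) gives $\ominus\b\oplus\a=\ominus(\b\oplus(\ominus\a))=\ominus(\b\ominus\a)$, so that $d_\oplus(\b,\a)=\Vert\ominus(\b\ominus\a)\Vert=\Vert\b\ominus\a\Vert$. Comparing the two expressions yields $d_\oplus(\a,\b)=d_\oplus(\b,\a)$.

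The triangle inequality is the main step. The idea is to insert $\b$: since $\b\oplus(\ominus\b\oplus\c)=\c$ (Proposition \ref{P:Lloop}), the left gyroassociative law gives
$$\ominus\a\oplus\c=\ominus\a\oplus\bigl(\b\oplus(\ominus\b\oplus\c)\bigr)=(\ominus\a\oplus\b)\oplus\gyr[\ominus\a,\b](\ominus\b\oplus\c).$$
Taking norms, applying the norm gyro-triangle inequality (Definition \ref{D:innergyro}(3)(iv)) to the right-hand side, and then using that gyrations preserve the norm to replace $\Vert\gyr[\ominus\a,\b](\ominus\b\oplus\c)\Vert$ by $\Vert\ominus\b\oplus\c\Vert$, we obtain
$$d_\oplus(\a,\c)=\Vert\ominus\a\oplus\c\Vert\leq\Vert\ominus\a\oplus\b\Vert\oplus\Vert\ominus\b\oplus\c\Vert=d_\oplus(\a,\b)\oplus d_\oplus(\b,\c).$$
Notably, no monotonicity of the scalar gyroaddition on $\Vert G\Vert$ is required, because the gyration alters only the second summand and does so by a norm-preserving automorphism, so the substitution is an exact equality rather than an inequality. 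I expect the main obstacle to be purely bookkeeping: ensuring the insertion uses the correct form of gyroassociativity so that the $\gyr$-factor lands on the second summand (where its norm can be stripped off), rather than any genuine analytic difficulty, which is entirely absorbed into axiom (iv).
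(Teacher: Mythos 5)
Your proposal is correct and takes essentially the same approach as the paper: the triangle inequality rests on the identical key identity $\ominus\a\oplus\c=(\ominus\a\oplus\b)\oplus\gyr[\ominus\a,\b](\ominus\b\oplus\c)$ (which the paper obtains from Lemma \ref{gyrotrans1} and you obtain by inserting $\b$ and applying left gyroassociativity), followed by the norm gyrotriangle axiom and norm-invariance under gyrations. Your symmetry argument uses the same ingredients as the paper's (gyrocommutativity, the automorphic inverse property from Proposition \ref{P:commute}, and $\Vert\ominus\x\Vert=\Vert\x\Vert$), merely applied in a slightly different order.
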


\begin{proof}
 From Remark \ref{zero=0}, we have that $\Vert \v \Vert=0$ if and only
if  $\v=0$, the gyroaddition identity.  If $\Vert\ominus \a\oplus\b\Vert=0$,
then $\ominus \a\oplus \b=0$ and hence $\a=\b$.

Since a gyrovector space is gyrocommutative, we have 
$$\ominus \a\oplus \b=\ominus(\a\ominus \b)=\ominus\gyr[\a,\ominus \b](\ominus\b\oplus\a).$$
Since $\Vert \ominus \c\Vert=\Vert(-1).\c\Vert=\vert -1\vert. \Vert\c\Vert=\Vert\c\Vert$, we
have 
$$\Vert \ominus \a\oplus \b\Vert=\Vert \ominus\gyr[\a,\ominus \b](\ominus\b\oplus\a)\Vert
=\Vert\gyr[\a,\ominus \b](\ominus \b\oplus\a)\Vert=\Vert\ominus\b \oplus \a\Vert.$$
where the last equality follows from the fact that $\gyr[\a,\ominus \b]$ preserves the
norm since it preserves the inner product. It follows that $d_\oplus(\a,\b)=
d_\oplus(\b,\a)$.

By Lemma \ref{gyrotrans1}
$$\ominus(\ominus\a\oplus \b)\oplus(\ominus \a\oplus \c)=\gyr[\ominus \a,\b](\ominus\b\oplus \c),$$
and hence
$$\ominus \a\oplus\c=(\ominus \a\oplus\b)\oplus\gyr[\ominus \a,\b](\ominus\b\oplus \c).$$
By the Gyrotriangle Inequality Axiom, 
$$\Vert \ominus \a\oplus\c\Vert\leq \Vert\ominus\a\oplus \b\Vert\oplus 
\Vert\gyr[\ominus \a,\b](\ominus\b\oplus \c)\Vert=
\Vert\ominus\a\oplus \b\Vert\oplus 
\Vert\ominus\b\oplus \c\Vert.$$
Hence $d_\oplus(\a,\c)\leq d_\oplus(\a,\b)\oplus d_\oplus(\b,\c).$

\end{proof}

\begin{proposition}\label{P:distinv}
The distance $d_\oplus$ is invariant under gyrations and left translations.
\end{proposition}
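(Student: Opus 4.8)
The plan is to reduce both invariance statements to a single preliminary fact: every gyration preserves the norm. First I would record this observation. Setting $\a=\b$ in law $(i)$ of Definition \ref{D:innergyro}(3) gives $\gyr[\u,\v]\a\cdot\gyr[\u,\v]\a=\a\cdot\a$, whence $\Vert\gyr[\u,\v]\a\Vert=\Vert\a\Vert$ for all $\a,\u,\v\in G$. Thus each gyration is a norm-preserving bijection of $G$, and the entire proof will consist of collapsing the relevant distance expressions onto $\Vert\ominus\a\oplus\b\Vert$.

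For invariance under gyrations, I would compute $d_\oplus(\gyr[\u,\v]\a,\gyr[\u,\v]\b)=\Vert\ominus\gyr[\u,\v]\a\oplus\gyr[\u,\v]\b\Vert$. Since $\gyr[\u,\v]$ is an automorphism of $(G,\oplus)$ it respects $\oplus$ and carries inverses to inverses, so $\ominus\gyr[\u,\v]\a\oplus\gyr[\u,\v]\b=\gyr[\u,\v](\ominus\a\oplus\b)$. Applying the norm-preservation just established, this norm equals $\Vert\ominus\a\oplus\b\Vert=d_\oplus(\a,\b)$.

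For invariance under the left translation $x\mapsto\c\oplus x$, the key tool is Lemma \ref{gyrotrans1}. Substituting $a=\c$, $b=\a$, $c=\b$ into that identity yields $\ominus(\c\oplus\a)\oplus(\c\oplus\b)=\gyr[\c,\a](\ominus\a\oplus\b)$. Hence $d_\oplus(\c\oplus\a,\c\oplus\b)=\Vert\gyr[\c,\a](\ominus\a\oplus\b)\Vert$, which again equals $\Vert\ominus\a\oplus\b\Vert=d_\oplus(\a,\b)$ by norm-preservation of gyrations.

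No step here is genuinely difficult; the only thing to get right is the bookkeeping that makes each expression reduce to $\Vert\ominus\a\oplus\b\Vert$. The one conceptual point worth flagging is that left-translation invariance is \emph{not} immediate from the definition, since a left translation is not an automorphism of $(G,\oplus)$; instead it routes through the gyrotranslation identity of Lemma \ref{gyrotrans1}, after which it reduces to the gyration case. I would therefore present the norm-preservation observation first and then treat the two invariances in that order.
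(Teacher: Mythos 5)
Your proof is correct and follows essentially the same route as the paper's: establish that gyrations preserve the norm via axiom $(i)$ of Definition \ref{D:innergyro}, reduce the gyration case by using that $\gyr[\u,\v]$ is an automorphism, and reduce the left-translation case to the gyration case via Lemma \ref{gyrotrans1}. The only difference is presentational, in that you make explicit the automorphism step and the norm-preservation observation that the paper uses implicitly.
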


\begin{proof}
We first note by the invariance of the inner product under gyrations that
$$\Vert \gyr[\a,\b]\u\Vert=(\gyr[\a,\b]\u\cdot\gyr[\a,\b]\u)^{1/2}=(\u\cdot\u)^{1/2}=\Vert\u\Vert.$$
Hence 
\begin{eqnarray*}
d_\oplus(\gyr[\a,\b]\u,\gyr[\a,\b]\v)&=&\Vert \ominus\gyr[\a,\b]\u\oplus\gyr[\a,\b]\v\Vert\\
&=&\Vert\gyr[\a,\b](\ominus \u\oplus\v)\Vert\\
&=&\Vert\ominus \u\oplus\v\Vert=d_\oplus(\u,\v).
\end{eqnarray*}

For the second assertion, we first note by Lemma \ref{gyrotrans1} that
$$\ominus(\a\oplus\u)\oplus(\a\oplus\v)=\gyr[\a,\u](\ominus\u\oplus\v),$$
and hence
\begin{eqnarray*}
d_\oplus(\a\oplus\u,\a\oplus\v)&=&\Vert \ominus(\a\oplus\u)\oplus(\a\oplus\v)\Vert\\
&=&\Vert \gyr[\a,\u](\ominus\u\oplus\v)\Vert\\
&=& \Vert \ominus\u\oplus\v\Vert=d_\oplus(\u,\v).
\end{eqnarray*}

\end{proof}

\subsection{The Einstein gyrovector space inner product}
We have considered previously the gyrogroup $\R_c^3$ of admissible velocities under Einstein velocity
addition.  We have also introduced  the exponential map $\exp:\R^3\to \R_c^3$ defined by
$\exp(\u)=c\tanh(\vert\u\vert)(\u/\vert \u\vert).$  Via the exponential map and its inverse log,
we can define the scalar multiplication by 
$t.\v:=\exp(t\log \v).$  The fact that $\exp$ restricted to the one-dimensional subspaces of $\R^3$
is a homomorphism preserving scalar multiplication yields the axioms of a gyrovector space (see 
Section 6.1).  

We equip $\R_c^3$ with the usual euclidean inner product on $\R^3$ restricted to $\R_c^3$.
We need some preparation to establish the appropriate axioms for the inner product.
We recall from Proposition \ref{P:boostcomp} that 
for $\u,\v\in \R_c^3$ and the corresponding Lorentz boosts $B(\u),B(\v)$, 
$$B(\u)B(\v)=B(\u\oplus\v)h(\u,\v), \mbox{ where }  h(\u,\v)\in O^+(1,3).$$
We note that actually $h(\u,\v)\in SO^+(1,3)$,  since in the preceding equation it must have a positive determinant
for equality to hold, since each Lorentz boost has a positive determinant.

\begin{proposition}\label{P:gyr=}
For $\u,\v\in \R_c^3$, $\gyr[\u,\v]=S(\u,\v)\in SO^+(1,3)$,  where $S(\u,\v)$ is the $3\times 3$-block
matrix in the block diagonal matrix $h(\u,\v)$.
\end{proposition}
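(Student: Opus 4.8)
The plan is to transport the gyration from the Einstein gyrogroup $(\R_c^3,\oplus)$ to the gyrogroup $(P,\oplus)$ of Lorentz boosts, where Theorem \ref{T:transgyro} realizes it as an explicit inner automorphism, and then to compute that inner automorphism directly on boost matrices. The first step is to note that, because the gyration is defined purely in terms of the gyroaddition via equation (\ref{E:GyrId}), every gyrogroup isomorphism intertwines gyrations (this is exactly the preservation of the structure of Definition \ref{D:gyro2} used in the proof of Theorem \ref{T:eingyro}). Applying this to the isomorphism $\w\mapsto B(\w)$ and combining the formula $\gyr[B(\u),B(\v)]X=h(\u,\v)Xh(\u,\v)^{-1}$ from Theorem \ref{T:transgyro} with Proposition \ref{P:boostcomp}, I obtain for every $\w\in\R_c^3$ the identity
$$B(\gyr[\u,\v]\w)=h(\u,\v)\,B(\w)\,h(\u,\v)^{-1}.$$

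Next I would invoke Proposition \ref{P:orthfact} (and the determinant remark preceding the statement) to write $h(\u,\v)=\left[\begin{matrix}1&0\\0&S\end{matrix}\right]$ with $S=S(\u,\v)\in SO(3)$, so that $h(\u,\v)^{-1}=\left[\begin{matrix}1&0\\0&S^T\end{matrix}\right]$. The heart of the argument is then a direct block computation: conjugating the boost matrix (\ref{E:boost2}) by this block-diagonal orthogonal matrix replaces $\w^T$ by $(S\w)^T$, replaces $\w$ by $S\w$, and turns the lower-right block $I+\frac{\gamma_\w-1}{\vert\w\vert^2}\w\w^T$ into $I+\frac{\gamma_\w-1}{\vert\w\vert^2}(S\w)(S\w)^T$. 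Because $S$ is orthogonal, $\vert S\w\vert=\vert\w\vert$ and hence $\gamma_{S\w}=\gamma_\w$, so the result is precisely $B(S\w)$; that is,
$$h(\u,\v)\,B(\w)\,h(\u,\v)^{-1}=B(S(\u,\v)\w).$$

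Combining the two displays gives $B(\gyr[\u,\v]\w)=B(S(\u,\v)\w)$ for all $\w$, and since $\w\mapsto B(\w)$ is injective this forces $\gyr[\u,\v]\w=S(\u,\v)\w$, i.e. $\gyr[\u,\v]=S(\u,\v)$ as an orthogonal transformation (restricted to $\R_c^3$, which is preserved since $\vert S(\u,\v)\w\vert=\vert\w\vert<c$). I expect the only genuine obstacle to be the careful bookkeeping in transporting the gyration across the isomorphism and matching $h(B(\u),B(\v))$ with $h(\u,\v)$; the conjugation computation itself is routine and clean once the orthogonal invariance of $\vert\w\vert$ and $\gamma_\w$ is exploited.
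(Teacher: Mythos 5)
Your proof is correct, and it reaches the result by a genuinely more modular route than the paper's. The paper does not transport the gyration through the isomorphism $\w\mapsto B(\w)$ at all; it works entirely inside $O^+(1,3)$: it expands $B(\u)(B(\v)B(\w))$ and $(B(\u)B(\v))B(\w)$ by repeated use of Proposition \ref{P:boostcomp}, performs the same conjugation step $h(\u,\v)B(\w)h(\u,\v)^{-1}=B(S\w)$ that you perform, and then compares the boost factors of the two resulting polar decompositions to get $\u\oplus(\v\oplus\w)=(\u\oplus\v)\oplus S\w$; left gyroassociativity and left cancellation then yield $\gyr[\u,\v]=S(\u,\v)$. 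You instead cite Theorem \ref{T:transgyro} for the formula $\gyr[a,b]x=h(a,b)xh(a,b)^{-1}$ in the boost gyrogroup, observe that gyrogroup isomorphisms intertwine gyrations (legitimate, since the gyration is determined by $\oplus$ via equation (\ref{E:GyrId})), and identify $h(B(\u),B(\v))=h(\u,\v)$ by uniqueness of the polar decomposition; this reduces the proposition to a single conjugation computation. What your route buys: the proposition is exposed as the concrete instance of the abstract gyration formula under the isomorphism of Theorem \ref{T:eingyro}, no triple product has to be manipulated, and your block computation of the conjugate is actually more complete than the paper's, which computes only the first column and appeals to the fact that a boost is determined by that column. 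What the paper's route buys: it is self-contained at this point, re-deriving the needed instance of gyroassociativity directly in the Lorentz group, so it never has to carry out the bookkeeping you rightly flagged as the delicate step, namely matching the abstract gyration and orthogonal factor of $(P,\oplus)$ with the concrete $\gyr[\u,\v]$ and $h(\u,\v)$ across the isomorphism. Both arguments ultimately rest on the same two pillars: the block form of $h(\u,\v)$ from Proposition \ref{P:orthfact} together with the determinant remark preceding the proposition, and the invariances $\vert S\w\vert=\vert\w\vert$, $\gamma_{S\w}=\gamma_\w$ behind the conjugation identity.
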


\begin{proof}
For $\u,\v,\w\in\R_c^3$, we have
$$B(\u)(B(\v)B(\w)=B(\u)B(\v\oplus \w)h(\v,\w)=B(\u\oplus(\v\oplus\w))h(\u,\v\oplus\w)h(\v,\w)$$
and associating the other way
$$B(\u)B(\v)B(\w)=B(\u\oplus\v)h(\u,\v)B(\w)=B(\u\oplus\v)B(\w)^{h(\u,\v)}h(\u,v),$$
where $B(\w)^{h(\u,\v)}=h(\u,\v)B(\w)h(\u,\v)^{-1}$.  Since $h(\u,\v)\in SO^+(1,3)$, by Proposition \ref{P:orthfact}
it has a block diagonal form with diagonal entries $1$, $S=S(\u,\v)\in SO(3)$.  Hence
\begin{eqnarray*}
B(\w)^{h(\u,\v)}&=& \left[\begin{matrix} 1 & 0\\ 0 & S\end{matrix}\right]\left[\begin{matrix} \gamma & *\\ \frac{\gamma}{c}\w & *\end{matrix}\right]
\left[\begin{matrix} 1 & 0\\ 0 & S^{-1}\end{matrix}\right]\\
&=&\left[\begin{matrix} \gamma & *\\  \frac{\gamma}{c}S\w& *\end{matrix}\right]=B(S\w),
\end{eqnarray*}
where $\gamma=\gamma_\w$ and where the last equality holds since the conjugation 
of $B(\w)$ must again be a Lorentz boost, and the first column
then determines it.  We now continue our earlier computation:
$$B(\u\oplus \v)B(\w)^{h(\v,\v)}h(\u,\v)=B(\u\oplus \v)B(S\w)h(\u,\v)=B((\u\oplus\v)\oplus S\w)h(\u\oplus\v,S\w)h(\u,\v).$$
Comparing with our first computation, we conclude $B(\u\oplus(\v\oplus\w))=B((\u\oplus\v)\oplus S\w)$.  Since $B$ is
bijective, 
$$(\u\oplus \v)\oplus\gyr[\u,\v]\w=\u\oplus(\v\oplus\w)=(\u\oplus\v)\oplus S\w,$$
and  by left cancellation $\gyr[\u,\v]=S=S(\u,\v)$.
\end{proof} 

From Proposition \ref{P:gyr=} we have immediately the
\begin{corollary}\label{C:preservedot}
The gyrations $\gyr[\u,\v]$ of $\R_c^3$ preserve the inner product.
\end{corollary}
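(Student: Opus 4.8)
The plan is to read off the conclusion directly from Proposition \ref{P:gyr=}, which has already carried out all the substantive work. That proposition identifies the gyration $\gyr[\u,\v]$ with the matrix $S(\u,\v)$, namely the lower-right $3\times 3$ block of the block-diagonal orthogonal factor $h(\u,\v)$, and records that $S(\u,\v)\in SO(3)$. The only thing remaining is to translate ``is a rotation'' into ``preserves the inner product,'' which is the defining property of an orthogonal matrix.

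Concretely, I would first recall that $S(\u,\v)\in SO(3)\subseteq O(3)$ means precisely that $S(\u,\v)$ is orthogonal, i.e.\ $S(\u,\v)^TS(\u,\v)=I$. Then, writing $S=S(\u,\v)$ and using that the inner product we have placed on $\R_c^3$ is just the restriction of the usual Euclidean inner product on $\R^3$, I would compute for arbitrary $\a,\b\in\R_c^3$
$$\gyr[\u,\v]\a\cdot\gyr[\u,\v]\b=\la S\a,S\b\ra=\la\a,S^TS\b\ra=\la\a,\b\ra=\a\cdot\b.$$
This is exactly the assertion that $\gyr[\u,\v]$ preserves the inner product, completing the proof.

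There is essentially no obstacle at this stage: the genuine content is that the Thomas gyration coincides with an honest rotation of the space coordinates, and that identification was established in Proposition \ref{P:gyr=} through the associativity comparison of the triple boost product $B(\u)B(\v)B(\w)$ together with the block-diagonal form of Lorentz-orthogonal matrices from Proposition \ref{P:orthfact}. Once the equality $\gyr[\u,\v]=S(\u,\v)\in SO(3)$ is in hand, the corollary is immediate, which is why the text introduces it with ``we have immediately.''
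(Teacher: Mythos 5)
Your proposal is correct and matches the paper exactly: the paper also derives this corollary as an immediate consequence of Proposition \ref{P:gyr=}, which identifies $\gyr[\u,\v]$ with the rotation $S(\u,\v)\in SO(3)$, and your computation $\la S\a,S\b\ra=\la\a,S^TS\b\ra=\la\a,\b\ra$ simply spells out the orthogonality argument the paper leaves implicit. Nothing is missing.
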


\subsection{Rapidities and norm axioms}
The notion of the \emph{rapidity} 
$$\phi_\v=\tanh^{-1}\frac{\Vert\v\Vert}{c}$$
of an admissible velocity $\v$ was introduced very early in the development of special relativity.
Rapidities satisfy a number of useful properties.
\begin{lemma}\label{L:rapid1}
Let $\v\in\R_c^3$ be an admissible velocity vector.
\begin{itemize}
\item[(1)] $c\tanh(\phi_\v)=\Vert\v\Vert$.
\item[(2)] $\v=\exp\frac{\phi_\v\v}{\Vert\v\Vert}$; hence $\log \v=\frac{\phi_\v\v}{\Vert\v\Vert}$.
\item[(3)] $\cosh\phi_\v=\gamma_\v$.
\item[(4)] $\sinh\phi_\v=\gamma_\v\frac{\Vert\v\Vert}{c}$.
\end{itemize}
\end{lemma}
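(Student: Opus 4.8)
The plan is to dispatch the four claims in order, since each feeds the next, reducing everything to the definition $\phi_\v=\tanh^{-1}(\Vert\v\Vert/c)$, the formula $\exp(\u)=c\tanh(\Vert\u\Vert)(\u/\Vert\u\Vert)$ for the exponential map, and a pair of elementary hyperbolic identities. Throughout, admissibility of $\v$ gives $\Vert\v\Vert/c\in[0,1)$, which is exactly the domain where $\tanh^{-1}$ is defined.

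First, (1) is immediate: applying $\tanh$ to the defining equality $\phi_\v=\tanh^{-1}(\Vert\v\Vert/c)$ and multiplying by $c$ yields $c\tanh(\phi_\v)=\Vert\v\Vert$.

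For (2) I would substitute $\u:=\phi_\v\v/\Vert\v\Vert$ into the defining formula for $\exp$. One computes $\Vert\u\Vert=\phi_\v$, so the unit direction $\u/\Vert\u\Vert$ coincides with $\v/\Vert\v\Vert$, and hence $\exp(\u)=c\tanh(\phi_\v)(\v/\Vert\v\Vert)$. By (1) the scalar $c\tanh(\phi_\v)$ equals $\Vert\v\Vert$, so the right side collapses to $\v$, giving $\v=\exp(\phi_\v\v/\Vert\v\Vert)$. The ``hence'' clause then follows by applying the inverse map $\log$ (a bijection by Proposition \ref{P:expAV}) to both sides. The only subtlety is the degenerate case $\v=0$, where the direction $\v/\Vert\v\Vert$ is undefined; this is handled by the convention, forced by continuity, that $\exp$ fixes the origin, under which both sides of (2) vanish.

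Finally, (3) and (4) are hyperbolic bookkeeping built on (1). For (3), the identity $1-\tanh^2=1/\cosh^2$ together with (1) gives $\cosh^2\phi_\v=1/(1-\tanh^2\phi_\v)=1/(1-\Vert\v\Vert^2/c^2)$; taking the positive square root and comparing with $\gamma_\v=(1-\Vert\v\Vert^2/c^2)^{-1/2}$ yields $\cosh\phi_\v=\gamma_\v$. Then (4) drops out of $\sinh=\cosh\cdot\tanh$: by (3) and (1), $\sinh\phi_\v=\cosh\phi_\v\,\tanh\phi_\v=\gamma_\v(\Vert\v\Vert/c)$. There is no genuine obstacle beyond this bookkeeping; the single point meriting care is the $\v=0$ degeneracy noted above.
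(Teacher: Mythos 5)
Your proposal is correct and follows essentially the same route as the paper's proof: (1) directly from the definition of rapidity, (2) by substituting $\phi_\v\v/\Vert\v\Vert$ into the formula for $\exp$ and invoking (1), and (3), (4) from the hyperbolic identities $1-\tanh^2=\mathrm{sech}^2$ and $\sinh=\cosh\cdot\tanh$. The only additions are your explicit treatment of the degenerate case $\v=0$ and the remark on the domain of $\tanh^{-1}$, both of which the paper leaves implicit.
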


\begin{proof}
Item (1) follows directly from the definition of the rapidity.  Applying the definition
$\exp(\u)=c\tanh(\Vert\u\Vert)(\u/\Vert\u\Vert)$ to $\phi_\v\v/\Vert\v\Vert$ and noting
that the latter's norm is $\phi_\v$, we have 
$$\exp\frac{\phi_\v\v}{\Vert\v\Vert}=c\tanh(\phi_\v)\frac{\v}{\Vert\v\Vert}=\v,$$
where the last equality follows from (1).

For (3), we have that
$$\gamma_\v=(1-\frac{\Vert\v\Vert^2}{c^2})^{-1/2}=(1-\tanh^2 \phi_\v)^{-1/2}=(\mbox{sech}^2\phi_\v)^{-1/2}
=\cosh \phi_\v.$$
For (4), $\sinh \phi_\v=\cosh \phi_\v\tanh\phi_\v=\gamma_\v\frac{\Vert\v\Vert}{c}$.
\end{proof}

The exponential $\exp:\R\to\R_c$ is given by $\exp(x)=c\tanh(x)=c\tanh(\vert x\vert)(x/\vert x\vert)$, where the
last equality holds for $x\ne 0$ and follows from the fact $\tanh(x)$ is an odd function. 
\begin{lemma}\label{L:scalar2}
For $r\in\R$, $\v\in \R_c^3$, $\v\ne 0$, $r.\v=(r.\Vert\v\Vert)(\v/\Vert\v\Vert)$.
\end{lemma}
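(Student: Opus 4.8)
The plan is to unwind both sides using the exponential/logarithm description of scalar multiplication together with the rapidity formula for $\log\v$ from Lemma \ref{L:rapid1}(2). Recall that scalar multiplication in $\R_c^3$ is defined by $r.\v=\exp(r\log\v)$, where $\exp(\u)=c\tanh(\Vert\u\Vert)(\u/\Vert\u\Vert)$, and that by Lemma \ref{L:rapid1}(2) we have $\log\v=\phi_\v\v/\Vert\v\Vert$, so that $r\log\v=(r\phi_\v/\Vert\v\Vert)\v$. The case $r=0$ is immediate, since both sides equal $0$ by the axiom $0.\x=0$, so I would assume $r\neq 0$ in what follows.

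First I would record the norm and direction of the argument $r\log\v$. Since $\v\neq 0$ gives $\Vert\v\Vert\in(0,c)$ and hence $\phi_\v=\tanh^{-1}(\Vert\v\Vert/c)>0$, the vector $r\log\v$ has norm $\vert r\vert\phi_\v$ and direction $\mathrm{sgn}(r)\,\v/\Vert\v\Vert$. Substituting into the definition of $\exp$ yields
$$r.\v=c\tanh(\vert r\vert\phi_\v)\,\mathrm{sgn}(r)\frac{\v}{\Vert\v\Vert}=c\tanh(r\phi_\v)\frac{\v}{\Vert\v\Vert},$$
where the last equality uses that $\tanh$ is an odd function.

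It then remains to identify the scalar factor $c\tanh(r\phi_\v)$ with $r.\Vert\v\Vert$. Here $\Vert\v\Vert$ is regarded as a point of the one-dimensional gyrovector space $\Vert G\Vert=\R_c$, on which scalar multiplication is $r.s=\exp(r\log s)=c\tanh\!\big(r\tanh^{-1}(s/c)\big)$, the one-dimensional exponential being $x\mapsto c\tanh x$. Taking $s=\Vert\v\Vert$ and using $\tanh^{-1}(\Vert\v\Vert/c)=\phi_\v$ gives $r.\Vert\v\Vert=c\tanh(r\phi_\v)$. Comparing with the displayed formula for $r.\v$ yields $r.\v=(r.\Vert\v\Vert)(\v/\Vert\v\Vert)$, as desired.

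This argument is essentially a bookkeeping computation, so there is no serious obstacle. The only point requiring genuine care is the sign handling when $r<0$: it is resolved by the oddness of $\tanh$ in the second paragraph, together with the observation that in that case $r.\Vert\v\Vert$ is a negative element of $\R_c$, so that $(r.\Vert\v\Vert)(\v/\Vert\v\Vert)$ correctly points opposite to $\v$. I would also be careful to distinguish the two different scalar multiplications appearing in the statement, namely the one on $\R_c^3$ on the left and the one on the norm space $\R_c$ on the right, both of which are defined through their respective exponentials.
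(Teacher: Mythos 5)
Your proposal is correct and follows essentially the same route as the paper's proof: both unwind $r.\v=\exp(r\log\v)$ using the rapidity formula $\log\v=\phi_\v\v/\Vert\v\Vert$ from Lemma \ref{L:rapid1}, invoke the oddness of $\tanh$ to handle negative $r$, and then identify the scalar factor $c\tanh(r\phi_\v)=\exp(r\phi_\v)$ with $r.\Vert\v\Vert$ via $\log\Vert\v\Vert=\phi_\v$. Your treatment of the sign via $\mathrm{sgn}(r)$ is just a slightly more explicit version of the same computation.
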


\begin{proof}
Applying our previous results, we obtain for $r\ne 0$,
\begin{eqnarray*}
r.\v&=& \exp(r\log \v)=\exp\bigg(\frac{r\phi_\v\v}{\Vert\v\Vert}\bigg)\\
&=& c\tanh(\vert r\vert\phi_\v)\frac{r\phi_\v\v}{\vert r\vert\phi_\v\Vert\v\Vert}\\
&=& \exp(r\phi_\v)\frac{\v}{\Vert \v\Vert}.
\end{eqnarray*}
By Lemma \ref{L:rapid1}(1)  $\Vert\v\Vert=c\tanh(\phi_\v)=\exp\phi_\v$, so
$\log\Vert \v\Vert=\phi_\v$.  Applying this to the previous equalities
yields $$r.\v=\exp(r\phi_\v)\frac{\v}{\Vert \v\Vert}=\exp(r\log \Vert\v\Vert)\frac{\v}{\Vert \v\Vert}
=(r.\Vert\v\Vert)\frac{\v}{\Vert \v\Vert}.$$
The case $r=0$ is trivial.
\end{proof}

We now verify further axioms of an inner product gyrovector space.
\begin{lemma}\label{L:more}
For $r\in\R$, $\v\in R_c^3$, $\Vert r.\v\Vert=\vert r\vert.\Vert\v\Vert$.
\end{lemma}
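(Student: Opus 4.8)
The plan is to unwind both sides of the identity through the exponential map and the rapidity $\phi_\v$, showing that each reduces to the common value $c\tanh(\vert r\vert\phi_\v)$. The key elementary observation I would isolate first is that for any $\w\in\R^3$ the vector $\exp(\w)=c\tanh(\Vert\w\Vert)(\w/\Vert\w\Vert)$ points in the direction of $\w$ and carries the nonnegative scalar factor $c\tanh(\Vert\w\Vert)$, so its euclidean norm is simply $\Vert\exp(\w)\Vert=c\tanh(\Vert\w\Vert)$.

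For the left-hand side I would use the definition $r.\v=\exp(r\log\v)$ together with Lemma \ref{L:rapid1}(2), which gives $\log\v=\phi_\v\v/\Vert\v\Vert$ and hence $r\log\v=(r\phi_\v/\Vert\v\Vert)\v$. Taking euclidean norms (and using $\phi_\v\ge 0$) yields $\Vert r\log\v\Vert=\vert r\vert\phi_\v$, so the norm formula for $\exp$ recorded above gives $\Vert r.\v\Vert=c\tanh(\vert r\vert\phi_\v)$.

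For the right-hand side the scalar multiplication $\vert r\vert.\Vert\v\Vert$ is the \emph{one-dimensional} operation on the scalar set $\R_c$, whose exponential is $\exp(x)=c\tanh(x)$, so $\vert r\vert.\Vert\v\Vert=\exp(\vert r\vert\log\Vert\v\Vert)$. From $\Vert\v\Vert=c\tanh\phi_\v=\exp\phi_\v$ (Lemma \ref{L:rapid1}(1), exactly as used in the proof of Lemma \ref{L:scalar2}) I get $\log\Vert\v\Vert=\phi_\v$, whence $\vert r\vert.\Vert\v\Vert=c\tanh(\vert r\vert\phi_\v)$. Comparing the two computations finishes the argument; the degenerate case $\v=0$ is dispatched separately and trivially using Remark \ref{zero=0}.

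I do not expect a genuine obstacle here: the work is bookkeeping rather than difficulty. The two points demanding care are keeping straight which scalar multiplication is meant on each side (the three-dimensional one on $\R_c^3$ versus the one-dimensional one on $\R_c$), and introducing the absolute value at exactly the step where a norm is taken, since $\Vert r\log\v\Vert=\vert r\vert\,\Vert\log\v\Vert$. An essentially equivalent alternative would start from Lemma \ref{L:scalar2}, write $r.\v=(r.\Vert\v\Vert)(\v/\Vert\v\Vert)$ so that $\Vert r.\v\Vert=\vert r.\Vert\v\Vert\vert$, and then check $\vert r.\Vert\v\Vert\vert=\vert r\vert.\Vert\v\Vert$ using that $\tanh$ is odd; this trades the norm-of-$\exp$ observation for a short sign analysis.
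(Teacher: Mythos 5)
Your argument is correct, but your primary route is a genuine variant of the paper's. The paper proves the lemma as a two-line corollary of Lemma \ref{L:scalar2}: it writes $r.\v=(r.\Vert\v\Vert)(\v/\Vert\v\Vert)$, takes norms to get $\Vert r.\v\Vert=\vert r.\Vert\v\Vert\,\vert$, and then removes the absolute value by a sign case analysis ($r>0$ versus $r<0$, the latter using that scalar multiplication on $\R_c$ is odd in the scalar, so $-(r.\Vert\v\Vert)=(-r).\Vert\v\Vert=\vert r\vert.\Vert\v\Vert$). That is precisely the ``essentially equivalent alternative'' you sketch in your closing sentence. Your main argument instead inlines the rapidity computation: both sides are evaluated explicitly to the common value $c\tanh(\vert r\vert\phi_\v)$, using $\Vert\exp(\w)\Vert=c\tanh(\Vert\w\Vert)$, $\log\v=\phi_\v\v/\Vert\v\Vert$ from Lemma \ref{L:rapid1}, and $\log\Vert\v\Vert=\phi_\v$. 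Both are sound, and both keep the two scalar multiplications (on $\R_c^3$ and on $\R_c$) correctly separated. What your route buys: no sign analysis at all, since the absolute value enters at the single transparent step $\Vert r\log\v\Vert=\vert r\vert\phi_\v$, and the two sides are treated symmetrically. What the paper's route buys: it reuses the already-established Lemma \ref{L:scalar2}, so the incremental work is minimal, whereas your computation essentially re-derives the content of that lemma's proof. One small point to tighten: besides $\v=0$, you should also note the trivial case $r=0$ (as the paper does), since the formula $\exp(\w)=c\tanh(\Vert\w\Vert)\w/\Vert\w\Vert$ does not literally apply to $\w=r\log\v=0$ and one must invoke the convention $\exp(0)=0$ or the axiom $0.\x=0$ together with Remark \ref{zero=0}.
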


\begin{proof}
Equality trivially holds for the cases $r=0$ or $\v=0$.  So we assume both
are not $0$. By the preceding lemma
$$\Vert r.\v\Vert=\Vert (r.\Vert\v\Vert)\frac{\v}{\Vert\v\Vert}\Vert=\vert (r.\Vert\v\Vert)\vert.$$
If $r>0$, then $\vert r.\Vert\v\Vert\,\vert=\vert r\vert.\Vert\v\Vert$ and we are done.  
If $r<0$, then $r.\Vert\v\Vert<0$, so $\vert r.\Vert\v\Vert\,\vert=-r.\Vert\v\Vert
=\vert r\vert.\Vert\v\Vert$.
\end{proof}

\begin{lemma}
For $0\ne r\in\R$ and $0\ne\v\in \R_c^3$,  $\frac{\vert r \vert.\v}{\Vert r.\v\Vert}=
\frac{\v}{\Vert\v\Vert}$.
\end{lemma}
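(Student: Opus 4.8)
The plan is to reduce the statement to two facts already in hand: the polar form of gyroscalar multiplication from Lemma~\ref{L:scalar2} and the norm-homogeneity recorded in Lemma~\ref{L:more}. First I would apply Lemma~\ref{L:scalar2} with $\vert r\vert$ in place of $r$ (legitimate since $\vert r\vert>0$ and $\v\ne 0$) to write
$$\vert r\vert.\v=(\vert r\vert.\Vert\v\Vert)\frac{\v}{\Vert\v\Vert},$$
which exhibits $\vert r\vert.\v$ as an ordinary real-scalar multiple of the euclidean unit vector $\v/\Vert\v\Vert$. The coefficient out front is the gyroscalar product $\vert r\vert.\Vert\v\Vert$ computed in the one-dimensional gyrovector space $\Vert G\Vert$.

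Next I would invoke Lemma~\ref{L:more}, which gives exactly $\Vert r.\v\Vert=\vert r\vert.\Vert\v\Vert$, so the denominator of the quantity to be evaluated is precisely the coefficient appearing in the numerator. Dividing then cancels this common factor and leaves $\v/\Vert\v\Vert$, which is the claim.

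The only point requiring genuine care is to confirm that $\vert r\vert.\Vert\v\Vert$ is a strictly positive real number, so that cancelling it is a legitimate division rather than division by a possibly vanishing or sign-reversing quantity. Here I would unwind the definition $\vert r\vert.\Vert\v\Vert=\exp(\vert r\vert\log\Vert\v\Vert)$ and use Lemma~\ref{L:rapid1}(1): since $0<\Vert\v\Vert<c$ we have $\Vert\v\Vert=\exp\phi_\v$ with $\phi_\v>0$, whence $\log\Vert\v\Vert=\phi_\v>0$ and the exponent $\vert r\vert\phi_\v$ is positive; as $\exp:\R\to\R_c$, $x\mapsto c\tanh x$, is increasing and fixes $0$, the value $\vert r\vert.\Vert\v\Vert$ lies in $(0,c)$ and is in particular positive. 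A secondary bookkeeping matter is to keep the two multiplications distinct: the product $(\vert r\vert.\Vert\v\Vert)(\v/\Vert\v\Vert)$ is ordinary scaling of a euclidean vector by a real number, whereas $\vert r\vert.\Vert\v\Vert$ itself is a gyroscalar operation in $\Vert G\Vert$. Once positivity is in place, no further subtlety remains and the two lemmas yield the identity in a single line.
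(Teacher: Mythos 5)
Your proof is correct and follows essentially the same route as the paper: apply Lemma~\ref{L:scalar2} with $\vert r\vert$ to rewrite the numerator and Lemma~\ref{L:more} to identify the denominator with the same factor $\vert r\vert.\Vert\v\Vert$, then cancel. Your extra verification that $\vert r\vert.\Vert\v\Vert$ is strictly positive is a sound piece of care that the paper leaves implicit, but it does not change the argument.
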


\begin{proof}
By Lemmas \ref{L:scalar2} and \ref{L:more}
$$\frac{\vert r \vert.\v}{\Vert r.\v\Vert}=\frac{(\vert r\vert.\Vert\v\Vert)\v}{(\vert r\vert.\Vert\v\Vert )\Vert\v\Vert }
=\frac{\v}{\Vert\v\Vert}.$$
\end{proof}

The final axiom that we need to verify is the triangular inequality.
\begin{proposition}\label{P:trineq}
For $\u,\v\in \R_c^3$, $\Vert \u\oplus\v\Vert\leq \Vert\u\Vert\oplus\Vert\v\Vert$.
\end{proposition}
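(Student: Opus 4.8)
The plan is to avoid estimating the norms directly and instead to compare the associated Lorentz factors, exploiting that $t\mapsto \gamma_t=(1-t^2/c^2)^{-1/2}$ is a strictly increasing bijection of $[0,c)$ onto $[1,\infty)$. Since both $\Vert\u\oplus\v\Vert$ and $\Vert\u\Vert\oplus\Vert\v\Vert$ lie in $[0,c)$, the desired inequality $\Vert\u\oplus\v\Vert\le\Vert\u\Vert\oplus\Vert\v\Vert$ is equivalent to $\gamma_{\u\oplus\v}\le\gamma_{\Vert\u\Vert\oplus\Vert\v\Vert}$. Before making this reduction I would record that the right-hand quantity is itself an admissible velocity: writing $a=\Vert\u\Vert$, $b=\Vert\v\Vert$, a one-line computation gives $c-(a\oplus b)=\frac{(c-a)(c-b)}{c(1+ab/c^2)}>0$ for $a,b\in[0,c)$, so $a\oplus b<c$ and its Lorentz factor is defined.

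For the left-hand factor I would invoke the key identity (\ref{E:gammaplus}), namely $\gamma_{\u\oplus\v}=\gamma_\u\gamma_\v\bigl(1+\frac{\u\cdot\v}{c^2}\bigr)$. For the right-hand factor, regard the scalars $a$ and $b$ as \emph{parallel} velocity vectors $a\mathbf{e}$ and $b\mathbf{e}$ along a common unit direction $\mathbf{e}$; by the parallel formula (\ref{E:einplusc}) their Einstein sum is $(a\oplus b)\mathbf{e}$, which has norm $a\oplus b$. Applying (\ref{E:gammaplus}) a second time, now with dot product $ab$, yields $\gamma_{\Vert\u\Vert\oplus\Vert\v\Vert}=\gamma_a\gamma_b\bigl(1+\frac{ab}{c^2}\bigr)$. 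Since $\gamma$ depends only on the norm, $\gamma_a=\gamma_\u$ and $\gamma_b=\gamma_\v$, so the two factors differ only in the bilinear term.

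The proof then closes with the Cauchy--Schwarz inequality $\u\cdot\v\le\Vert\u\Vert\Vert\v\Vert=ab$: because $\gamma_\u\gamma_\v>0$, we obtain $\gamma_{\u\oplus\v}=\gamma_\u\gamma_\v\bigl(1+\frac{\u\cdot\v}{c^2}\bigr)\le\gamma_\u\gamma_\v\bigl(1+\frac{ab}{c^2}\bigr)=\gamma_{\Vert\u\Vert\oplus\Vert\v\Vert}$, and monotonicity of $\gamma$ delivers the claim. There is no real obstacle here: the only point requiring care is the reduction itself, i.e.\ being sure that comparing norms is legitimately equivalent to comparing Lorentz factors (both arguments live in $[0,c)$, where $\gamma$ is strictly increasing) and that $\Vert\u\Vert\oplus\Vert\v\Vert$ is admissible so its factor exists. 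Everything beyond that is the two-line application of (\ref{E:gammaplus}) and Cauchy--Schwarz; no estimate on rapidities is needed, and the collinear and degenerate ($\u=0$ or $\v=0$) cases are absorbed automatically since Cauchy--Schwarz becomes equality precisely there.
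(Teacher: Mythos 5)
Your proof is correct and follows essentially the same route as the paper's: both compare Lorentz factors via the gamma identity (\ref{E:gammaplus}), use $\u\cdot\v\leq\Vert\u\Vert\,\Vert\v\Vert$ to compare the two factors, and conclude by monotonicity of $\gamma$. Your write-up merely makes explicit some points the paper leaves implicit, namely the admissibility of $\Vert\u\Vert\oplus\Vert\v\Vert$, the parallel-vector interpretation justifying the gamma identity for the scalar sum, and the naming of Cauchy--Schwarz.
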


\begin{proof}
From Problem \ref{gammaform}, equation \ref{E:gammaplus}, we have
\begin{eqnarray*}
\gamma_{\Vert \u\Vert\oplus\Vert\v\Vert}&=&\gamma_\u \gamma_\v\Big(1+\frac{\Vert\u\Vert\,\Vert\v\Vert}{c^2}\Big)\\
&\geq & \gamma_\u\gamma_\v\Big(1+\frac{\u\cdot\v}{c^2}\Big)\\
&=& \gamma_{\u\oplus\v}=\gamma_{\Vert\u\oplus\v\Vert}.
\end{eqnarray*}
Since $\gamma_\x=\gamma_{\Vert\x\Vert}$ is a monotonically increasing function
of $\Vert\x\Vert$, it follows that
$$\Vert\u\oplus\v\Vert\leq \Vert\u\Vert\oplus\Vert\v\Vert.$$

\end{proof}

We have thus shown 
\begin{theorem}
The Einstein gyrovector space $(\R_c^3,\oplus,.)$ is a real inner product gyrovector space.
\end{theorem}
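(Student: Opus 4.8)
The plan is to recognize this theorem as the capstone assembly of the preceding lemmas and to verify the three components of Definition \ref{D:innergyro} one at a time, citing already-established results wherever possible. First I would fix the ambient data: $\R_c^3$ is the open ball of radius $c$ in the real inner product space $\R^3$ (with the euclidean inner product), equipped with Einstein addition $\oplus$ and the scalar multiplication $r.\v:=\exp(r\log\v)$ defined through the exponential of Corollary \ref{C:expAV}.

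For component (1) I must check that $(\R_c^3,\oplus,.)$ is a gyrovector space in the sense of Definition \ref{gyrovec}. By Theorem \ref{T:eingyro} it is a gyrocommutative gyrogroup, and Lemma \ref{L:scalprop} supplies axioms (1)--(3) of Definition \ref{gyrovec} (the scalar action restricts to a homomorphism on each line $\R\log\v$, with the stated interaction of sums and products of scalars). The remaining axioms (4) and (5) are where the genuine, if brief, work lies. For axiom (4), $\gyr[\a,\b](s.\x)=s.\gyr[\a,\b]\x$, I would invoke Proposition \ref{P:gyr=}, which identifies $\gyr[\a,\b]$ with an element $S\in SO(3)$ acting linearly and isometrically on $\R^3$; since $s.\x$ is, by Lemma \ref{L:scalar2}, a scalar multiple of $\x/\Vert\x\Vert$ whose magnitude depends only on $s$ and $\Vert\x\Vert$, applying the isometry $S$ commutes with this rescaling. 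For axiom (5), $\gyr[s.\a,t.\a]=I$, I would use that $s.\a$ and $t.\a$ are both collinear with $\a$ (again Lemma \ref{L:scalar2}); boosts along a common axis commute and compose to a boost, so the orthogonal factor $h(s.\a,t.\a)$ of Proposition \ref{P:boostcomp} is trivial, whence $\gyr[s.\a,t.\a]=S(s.\a,t.\a)=I$ by Proposition \ref{P:gyr=}.

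Component (2) is essentially immediate: $\Vert\R_c^3\Vert=\{\pm\Vert\v\Vert:\v\in\R_c^3\}=(-c,c)=\R_c$, and by Problem \ref{oneparam} the map $c\tanh$ is an isomorphism of $(\R,+)$ onto $(\R_c,\oplus)$. Since $(\R,+)$ is a commutative group, it is a gyrocommutative gyrovector space with trivial gyrations, and the isomorphism transports this structure to $(\R_c,\oplus,.)$. Component (3) then consists of the four connecting laws of Definition \ref{D:innergyro}, each of which has already been proved: law $(i)$ (gyrations preserve the inner product) is Corollary \ref{C:preservedot}; law $(iii)$, $\Vert r.\a\Vert=\vert r\vert.\Vert\a\Vert$, is Lemma \ref{L:more}; law $(ii)$ is the unnamed lemma immediately preceding Proposition \ref{P:trineq}; and law $(iv)$, the gyrotriangle inequality, is Proposition \ref{P:trineq}.

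Collecting these three verified components establishes the theorem. I expect the only real obstacle to be the clean justification of axioms (4) and (5) of Definition \ref{gyrovec}, since these are the sole assertions not already packaged as a directly citable statement: they require combining the orthogonality of gyrations (Proposition \ref{P:gyr=}) with the collinearity behavior of scalar multiplication (Lemma \ref{L:scalar2}) and the commutation of collinear boosts. Everything else in the argument is a matter of correctly invoking prior results and identifying $\Vert\R_c^3\Vert$ with $\R_c$.
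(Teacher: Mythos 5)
Your proposal is correct and takes essentially the same route as the paper, whose ``proof'' is precisely the assembly you describe: the theorem is stated with ``We have thus shown,'' resting on the exponential-map construction of the gyrovector structure, Problem \ref{oneparam} for the one-dimensional gyrovector space $(\R_c,\oplus,.)$, Corollary \ref{C:preservedot} for law $(i)$, the unnamed lemma and Lemma \ref{L:more} for laws $(ii)$ and $(iii)$, and Proposition \ref{P:trineq} for the gyrotriangle inequality. Your explicit verification of axioms (4) and (5) of Definition \ref{gyrovec} --- combining Proposition \ref{P:gyr=} with Lemma \ref{L:scalar2} and the commutativity of collinear boosts --- supplies detail that the paper leaves implicit in its remark that the homomorphism property of $\exp$ on one-dimensional subspaces ``yields the axioms of a gyrovector space,'' and your argument for those two axioms is sound.
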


\section{A Little Hyperbolic Geometry}
We have seen how to define distance and length in a real inner product gyrovector space, although it might better
be called a ``gyrolength" since it takes values not in the nonnegative reals, but in the nonnegative members
of $(\R_c,\oplus,.)$.  In this section we consider briefly how to extend other basic aspects of vector analysis in euclidean
spaces to the hyperbolic setting of real inner product gyrovector spaces.

In addition to lengths we can also measure angles from the formula
$$\cos\alpha:=\frac{\u}{\Vert\u\Vert}\cdot\frac{\v}{\Vert\v\Vert},$$
where $\alpha$ is the measure of the angle at $0$ between the vectors $\u$ and $\v$.  
Note that the preceding equation can be rewritten in the familiar form
$$\u\cdot \v=\Vert\u\Vert \Vert \v\Vert \cos\alpha.$$

More generally, if $A,B,C$ are noncollinear points (i.e., don't lie on a gyroline), we calculate the 
measure of the gyroangle $\angle ABC$ determined by
the rooted gyrovectors $\oa{BA}$ and $\oa{BC}$ from the formula
$$\cos \alpha:=\frac{\oa{BA}}{\Vert\oa{BA}\Vert}\cdot \frac{\oa{BC}}{\Vert\oa{BC}\Vert}=\frac{\ominus B\oplus A}
{\Vert \ominus B\oplus A\Vert}\cdot \frac{\ominus B\oplus C}{\Vert \ominus B\oplus C\Vert}.$$

\begin{definition}  We say two segments of gyrolines are \emph{congruent} if the endpoints $P,Q$  of the first segment are
the same distance apart as those $P',Q'$of the second segment, i.e., $d_\oplus (P,Q)=d_\oplus(P',Q')$.  We write
$\overline{PQ}\cong\overline{P'Q'}$.  We say two gyroangles $\angle BAC$ and $\angle B'A'C'$ are \emph{congruent}
if their measures are equal and write $\angle BAC\cong\angle B'A'C'$.  Two triangles are \emph{congruent} if  all corresponding
sides and gyroangles are congruent.
\end{definition} 
\begin{proposition}\label{P:gyrangle} Let $A,B,C$ be noncollinear points.
\begin{itemize}
\item[(i)]  If $A'=\gyr[\u,\v] A$, $B'=\gyr[\u,\v]B$, and $C'=\gyr[\u,\v]C$, then 
$\angle ABC\cong\angle A'B'C'$, i.e., the 
measurement of the gyroangles are equal.
\item[(ii)]  If $A'=P\oplus A$, $B'=P\oplus B$, and $C'=P\oplus C$, then
$\angle ABC\cong\angle A'B'C'$.
\end{itemize}
\end{proposition}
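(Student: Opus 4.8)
The plan is to work directly from the defining formula for the measure of a gyroangle, namely
$$\cos\alpha=\frac{\ominus B\oplus A}{\Vert\ominus B\oplus A\Vert}\cdot\frac{\ominus B\oplus C}{\Vert\ominus B\oplus C\Vert},$$
and to show in each of the two cases that the rooted gyrovectors $\ominus B\oplus A$ and $\ominus B\oplus C$ are carried to their primed counterparts by a \emph{single common} gyration. Since gyrations preserve the inner product by Corollary \ref{C:preservedot} (and hence preserve norms), applying one and the same gyration to both vectors leaves the normalized dot product that computes $\cos\alpha$ unchanged, which is exactly congruence of the gyroangles.

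For part (i) I would first compute $\ominus B'\oplus A'$. Because $\gyr[\u,\v]$ is an automorphism of $(G,\oplus)$ by axiom (G3) and commutes with inversion via the identity $\gyr[\u,\v](\ominus B)=\ominus\gyr[\u,\v]B$, one gets
$$\ominus B'\oplus A'=\ominus\gyr[\u,\v]B\oplus\gyr[\u,\v]A=\gyr[\u,\v](\ominus B)\oplus\gyr[\u,\v]A=\gyr[\u,\v](\ominus B\oplus A),$$
and likewise $\ominus B'\oplus C'=\gyr[\u,\v](\ominus B\oplus C)$. Substituting these into the formula for $\cos\angle A'B'C'$ and using that $\gyr[\u,\v]$ preserves both the inner product in the numerator and the two norms in the denominators yields $\cos\angle A'B'C'=\cos\angle ABC$.

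For part (ii) the key tool is Lemma \ref{gyrotrans1}, which with $a=P$, $b=B$, $c=A$ gives
$$\ominus B'\oplus A'=\ominus(P\oplus B)\oplus(P\oplus A)=\gyr[P,B](\ominus B\oplus A),$$
and with $c=C$ gives $\ominus B'\oplus C'=\gyr[P,B](\ominus B\oplus C)$. Thus left translation by $P$ again sends both rooted gyrovectors through one common gyration, namely $\gyr[P,B]$, so part (ii) reduces to the identical computation as part (i): the inner-product-preserving map $\gyr[P,B]$ leaves the normalized dot product invariant.

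The computations are short; the only points requiring care are the bookkeeping of which gyration appears and the crucial observation that in each case the transformation acts by a \emph{single} common gyration on both rooted gyrovectors emanating from $B$, rather than by two unrelated maps, which is what makes the normalized inner product invariant. I would also remark that since $\gyr[\u,\v]$ and $\gyr[P,B]$ are bijections preserving the gyrovector space structure, noncollinearity of $A,B,C$ passes to $A',B',C'$, so $\angle A'B'C'$ is defined and the denominators are nonzero. The main, and quite modest, obstacle is simply justifying that gyration commutes with both inversion and $\oplus$ so that the first display in (i) is valid; this is immediate from the automorphism axiom together with $\gyr[a,b](\ominus x)=\ominus\gyr[a,b]x$.
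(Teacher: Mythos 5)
Your proof is correct and is essentially the paper's own: the paper leaves this proposition as a problem whose hint is to modify the proof of Proposition \ref{P:distinv}, and your argument --- gyrations acting as inner-product-preserving automorphisms in part (i), and Lemma \ref{gyrotrans1} supplying the single common gyration $\gyr[P,B]$ applied to both rooted gyrovectors in part (ii) --- is exactly that modification. One minor citation point: since the proposition lives in a general real inner product gyrovector space, the invariance of the inner product under gyrations is Axiom (3)(i) of Definition \ref{D:innergyro} rather than Corollary \ref{C:preservedot}, which is its Einstein-space instance.
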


\begin{problem} Prove Proposition \ref{P:gyrangle}. (Hint: Modify
the proof of Proposition \ref{P:distinv}.)
\end{problem}

We turn now to the consideration of triangles.  Let $A,B,C$ be noncollinear points and the vertices of a triangle.  
We then have gyroangles at $A,B,C$ denoted $\angle A$, $\angle B$, and $\angle C$ resp.  We orient the sides
of $\Delta ABC$ as rooted gyrovectors $\oa{AB}$, $\oa{CB}$, and $\oa{CA}$.  We set $\a=\ominus C\oplus B$,
$\b=\ominus C\oplus A$, and $\c=\ominus A\oplus B$, the sides opposite $\angle A$, $\angle B$, and 
$\angle C$ resp.  We denote their lengths by 
\begin{eqnarray*}
a&=&\Vert\a\Vert=\Vert\oa{CB}\Vert=\Vert \ominus C\oplus B\Vert=d_\oplus(C,B)=d_\oplus(B,C)\\
b&=&\Vert\b\Vert=\Vert\oa{CA}\Vert=\Vert\ominus C\oplus A\Vert=d_\oplus(C,A)=d_\oplus(A,C)\\
c&=&\Vert\c\Vert=\Vert\oa{AB}\Vert=\Vert\ominus A\oplus B\Vert=d_\oplus(A,B)=d_\oplus(B,A).
\end{eqnarray*}
In particular, the lengths $a,b,c$ are independent of the orientation chosen.
\begin{problem}\label{tri=}
Show that for the given orientation
$$\Vert \ominus \a\oplus \b\Vert=\Vert \ominus(\ominus C\oplus B)\oplus(\ominus C\oplus A)\Vert=
\Vert \ominus B\oplus A\vert=\Vert \c\Vert.$$ 
\end{problem}

\begin{problem}  Show that \begin{equation}\label{E:cos2}
\cos\angle C=\frac{\a\cdot\b}{ab},
\end{equation} in full analogy to the euclidean case.
\end{problem}

\subsection{Relativistic hyperbolic geometry}
In this section we work in the Einstein gyrovector space $\R_s^3$, the open ball of radius $s$, where we switch from
$c$ to $s$ to avoid notational confusion.
We recall equation \ref{E:gammaplus}, the gamma identity,
\begin{equation}\label{E:gammaplus2}
\gamma_{\u\oplus\v}=\gamma_\u \gamma_\v\Big(1+\frac{\u\cdot\v}{s^2}\Big).
\end{equation}
from Problem \ref{gammaform}.
Since $\gamma_\v=\gamma_{\Vert \v\Vert}$, we have from Problem \ref{tri=} that
$$\gamma_\c=\gamma_{\ominus\a\oplus\b}$$
 in triangle $\Delta ABC$ of the preceding section.
It follows that 
$$\gamma_c=\gamma_\c=\gamma_{\ominus\a\oplus\b}=\gamma_{\ominus \a}\gamma_\b\Big(1+\frac{\ominus \a\cdot\b}{s^2}\Big)
=\gamma_a\gamma_b\Big(1-\frac{ab\cos \angle C}{s^2}\Big),$$
since $\ominus \a=-\a$.  
\begin{problem}  Show that $\ominus \a=-\a$ in the Einstein gyrovector space.
\end{problem}
We slightly change the notation and record the preceding equation as the \emph{relativistic law of cosines}.
\begin{proposition}\label{P:lawcos}
In $\Delta ABC$, we have 
\begin{equation}\label{E:cos1}
\gamma_a=\gamma_b\gamma_c\Big(1-\frac{bc}{s^2}\cos\angle A\Big),
\end{equation}
where $a,b,c$ are the lengths of the sides opposite $\angle A$, $\angle B$, $\angle C$, resp.
\end{proposition}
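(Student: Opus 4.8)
The plan is to mirror the derivation carried out immediately before the statement for the angle $\angle C$, now applied at the vertex $A$; the relativistic law of cosines at $A$ is the same identity with the vertices relabeled. First I would record the two rooted gyrovectors emanating from $A$, namely $\oa{AB}=\ominus A\oplus B=\c$ with $\Vert\oa{AB}\Vert=c$, and $\oa{AC}=\ominus A\oplus C$ with $\Vert\oa{AC}\Vert=d_\oplus(A,C)=b$ (the second length uses the symmetry $d_\oplus(P,Q)=d_\oplus(Q,P)$ from Proposition \ref{P:gyrodist}). By the definition of gyroangle measure these satisfy $\cos\angle A=\frac{(\ominus A\oplus B)\cdot(\ominus A\oplus C)}{cb}$, the analogue of equation (\ref{E:cos2}) at vertex $A$.

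The computational heart is to identify the side opposite $A$. Writing $\mathbf{p}=\ominus A\oplus B$ and $\mathbf{q}=\ominus A\oplus C$, Lemma \ref{gyrotrans1} gives $\ominus\mathbf{p}\oplus\mathbf{q}=\ominus(\ominus A\oplus B)\oplus(\ominus A\oplus C)=\gyr[\ominus A,B](\ominus B\oplus C)$. Since gyrations preserve the norm, $\Vert\ominus\mathbf{p}\oplus\mathbf{q}\Vert=\Vert\ominus B\oplus C\Vert=d_\oplus(B,C)=a$, and hence $\gamma_{\ominus\mathbf{p}\oplus\mathbf{q}}=\gamma_a$ because $\gamma_\x=\gamma_{\Vert\x\Vert}$. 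Now applying the gamma identity (equation (\ref{E:gammaplus2})) to $\gamma_{\ominus\mathbf{p}\oplus\mathbf{q}}$, together with $\gamma_{\ominus\mathbf{p}}=\gamma_c$, $\gamma_{\mathbf{q}}=\gamma_b$, and $\ominus\mathbf{p}=-\mathbf{p}$ (so that $\ominus\mathbf{p}\cdot\mathbf{q}=-bc\cos\angle A$), yields
\[
\gamma_a=\gamma_{\ominus\mathbf{p}\oplus\mathbf{q}}=\gamma_{\ominus\mathbf{p}}\gamma_{\mathbf{q}}\Big(1+\frac{\ominus\mathbf{p}\cdot\mathbf{q}}{s^2}\Big)=\gamma_b\gamma_c\Big(1-\frac{bc}{s^2}\cos\angle A\Big),
\]
which is the claimed identity.

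Since every ingredient is already available, there is no serious analytic obstacle; the one place to be careful is the bookkeeping of conventions. I would double-check that $\oa{PQ}=\ominus P\oplus Q$ and that the gyroangle at $A$ is built from $\oa{AB}$ and $\oa{AC}$, verify that the two sides from $A$ really have lengths $c$ and $b$, and confirm the single sign flip coming from $\ominus\mathbf{p}=-\mathbf{p}$ in the Einstein gyrovector space. With those conventions pinned down, the argument is a direct transcription of the $\angle C$ computation.
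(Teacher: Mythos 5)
Your proposal is correct and follows essentially the same route as the paper: the paper derives the identity at vertex $C$ via the gamma identity, the norm identification $\Vert\ominus\a\oplus\b\Vert=\Vert\c\Vert$ (its Problem on gyration invariance, which is your Lemma \ref{gyrotrans1} step), and the sign flip $\ominus\x=-\x$, then simply relabels vertices to state the result at $A$. Your direct transcription of that computation at vertex $A$ is the same argument with the relabeling carried out explicitly.
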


\begin{problem}
Show that 
$$\frac{a^2}{s^2}=\frac{\gamma_a^2-1}{\gamma_a^2}.$$
\end{problem}

\begin{problem}\label{cos=3}
Use the preceding problem and equation (\ref{E:cos1}) to show
$$\cos\angle A=\frac{\gamma_b\gamma_c-\gamma_a}{\sqrt{\gamma_b^2-1}\sqrt{\gamma_c^2-1}}.$$
\end{problem}
Note the right-hand side of the preceding equation allows one to calculate $\cos\angle A$ from the lengths $a,b,c$ of the sides
of $\Delta ABC$.  Thus the radian measure of  $\angle A$ is uniquely determined in $(0,\pi)$.
We thus obtain
\begin{theorem}(SSS) Two triangles are congruent if their corresponding sides are congruent.
\end{theorem}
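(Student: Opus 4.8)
The plan is to deduce this congruence criterion directly from the formula established in Problem \ref{cos=3}, which expresses the cosine of each gyroangle entirely in terms of the three side lengths. The crucial observation is that this formula makes the angle measures \emph{functions of the lengths alone}, so that congruent sides are forced to produce congruent angles.

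First I would fix two triangles $\Delta ABC$ and $\Delta A'B'C'$ whose corresponding sides are congruent, i.e.\ $a=a'$, $b=b'$, and $c=c'$, where these are equalities of gyrolengths (equivalently, of the gyrodistances between the respective vertices, by the displayed definitions of $a,b,c$ preceding the statement). Since $\gamma_\x=\gamma_{\Vert\x\Vert}$ depends only on the length $\Vert\x\Vert$, the equalities of side lengths yield $\gamma_a=\gamma_{a'}$, $\gamma_b=\gamma_{b'}$, and $\gamma_c=\gamma_{c'}$.

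Next I would invoke Problem \ref{cos=3}, which gives
$$\cos\angle A=\frac{\gamma_b\gamma_c-\gamma_a}{\sqrt{\gamma_b^2-1}\sqrt{\gamma_c^2-1}}$$
together with its cyclic analogues for $\angle B$ and $\angle C$. Substituting the equalities of the $\gamma$-values shows $\cos\angle A=\cos\angle A'$, and likewise $\cos\angle B=\cos\angle B'$ and $\cos\angle C=\cos\angle C'$. Since each gyroangle measure lies in the open interval $(0,\pi)$, on which the cosine function is injective, I conclude $\angle A=\angle A'$, $\angle B=\angle B'$, and $\angle C=\angle C'$; that is, the corresponding gyroangles are congruent. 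Combined with the hypothesis that corresponding sides are congruent, the definition of triangle congruence is satisfied, so $\Delta ABC\cong\Delta A'B'C'$.

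I do not expect a genuine obstacle here: the theorem is essentially a repackaging of Problem \ref{cos=3}, where the real analytic work (the relativistic law of cosines and its inversion) has already been carried out. The only points requiring care are the reduction from congruent sides to equal $\gamma$-values, which rests on $\gamma_\x$ depending only on $\Vert\x\Vert$, and the injectivity of $\cos$ on $(0,\pi)$, both of which are explicitly noted in the text immediately preceding the statement.
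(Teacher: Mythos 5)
Your proposal is correct and follows essentially the same route as the paper: the text immediately preceding the theorem derives the congruence exactly by noting that the formula of Problem \ref{cos=3} determines $\cos\angle A$ (and cyclically $\cos\angle B$, $\cos\angle C$) from the side lengths alone, and that the angle measure is thereby uniquely determined in $(0,\pi)$. Your write-up merely makes explicit the intermediate steps (equal sides give equal $\gamma$-values, injectivity of cosine on $(0,\pi)$, and the definition of triangle congruence) that the paper leaves implicit.
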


\begin{problem}
Use the preceding theory to deduce the SAS theorem in the geometry of $\R_s^3$.
\end{problem}

For a gyroangle $\angle A$ define $\sin\angle A:=(1-\cos^2\angle A)^{1/2}$. One can 
establish the equality
\begin{equation}\label{gamma=}
\gamma_a=\frac{\cos\angle A+(\cos\angle B)\cos\angle C)}{\sin\angle B\sin\angle C}
\end{equation}
by direct computation by using the equation of Problem \ref{cos=3} to establish the 
variant form
$$\gamma_a^2=\frac{(\cos\angle A+\cos\angle B\cos\angle C)^2}{(1-\cos^2\angle B)(1-\cos^2\angle C)}.$$
One can also derive the  \emph{relativistic law of sines}:
\begin{equation}
\frac{\sin\angle A}{\gamma_a a}=\frac{\sin\angle B}{\gamma_b b}=\frac{\sin\angle C}{\gamma_c c}.
\end{equation}
\begin{problem}
Derive one of the two preceding equations.  (If you derive the second, you may assume the first.)
\end{problem}

\begin{problem}  Show that an equilateral triangle with $\angle A=\angle B=\angle C=\theta$ has sides with 
length $\sqrt{2\cos\theta-1}/\cos\theta$.  (Be aware that $\theta<60^\circ$ in the hyperbolic setting.)
\end{problem}
Note that the angle determines the side in this setting, which is certainly not the case in euclidean geometry.

\end{document}